\acrodef{marl}[MARL]{Multi-Agent Reinforcement Learning}
\acrodef{mfg}[MFG]{Mean-Field Game}
\acrodef{mfc}[MFC]{Mean-Field Control}
\acrodef{gmfg}[GMFG]{Graphon Mean-Field Game}
\acrodef{mdp}[MDP]{Markov Decision Process}
\acrodef{sbm}[SBM]{Stochastic Block Model}
\acrodef{ne}[NE]{Nash Equilibrium}
\acrodef{rkhs}[RKHS]{Reproducing Kernel Hilbert Space}
\acrodef{gmfgppo}[GMFG-PPO]{Proximal Policy Optimization for GMFG}
\acrodef{ppo}[PPO]{Proximal Policy Optimization}
\acrodef{sbm}[SBM]{Stochastic Block Model}
\acrodef{mgmfgpmd}[MonoGMFG-PMD]{Monotone GMFG Policy Mirror Descent}
\title{Learning Regularized Monotone Graphon Mean-Field Games}
\author{Fengzhuo Zhang$^{1}$ \quad Vincent Y.~F.~Tan$^1$ \quad Zhaoran Wang$^2$\quad Zhuoran Yang$^3$\\
$^1$National University of Singapore \quad $^2$ Northwestern University   \quad $^3$Yale University\\
\texttt{fzzhang@u.nus.edu}, \texttt{vtan@nus.edu.sg}, \\
\texttt{zhaoranwang@gmail.com}, \texttt{zhuoranyang.work@gmail.com}
}
\begin{document}

\maketitle
\begin{abstract}
    This paper studies two fundamental problems in regularized Graphon Mean-Field Games (GMFGs). First, we establish the existence of a Nash Equilibrium (NE)  of any $\lambda$-regularized GMFG   (for  $\lambda\geq 0$). This result relies on weaker conditions than those in previous works for analyzing both unregularized GMFGs ($\lambda=0$) and $\lambda$-regularized MFGs, which are special cases of GMFGs. Second, we propose provably efficient algorithms to learn the NE in  weakly monotone GMFGs, motivated by \cite{lasry2007mean}. Previous literature either only analyzed continuous-time algorithms or required extra conditions to analyze discrete-time algorithms. In contrast, we design a discrete-time algorithm and derive its convergence rate  solely under  weakly monotone conditions. Furthermore, we develop and analyze the action-value function estimation procedure during the online learning process, which is absent from algorithms for monotone GMFGs. This serves as a sub-module in our optimization algorithm. The efficiency of the designed algorithm is corroborated by empirical evaluations.
\end{abstract}

\section{Introduction}
In \ac{marl}, the sizes of state and action spaces grow exponentially in the number of agents,   which is known as the ``curse of many agents''~\citep{sonu2017decision,wang2020breaking} and restrict its applicability to  large-scale scenarios. The \ac{mfg} has thus been proposed to mitigate this problem by exploiting the  homogeneity assumption~\citep{huang2006large,lasry2007mean}, and it has achieved tremendous successes in many real-world applications~\citep{cousin2011mean,achdou2020mean}. However, the homogeneity assumption is an impediment when modeling  scenarios in which the agents are heterogeneous. GMFGs, as extensions of \ac{mfg}s, have thus been proposed to model the behaviors of heterogeneous agents and ameliorate the ``curse of many agents'' at the same time~\citep{parise2019graphon,carmona2022stochastic}. 

Despite the empirical successes of the \ac{gmfg}~\citep{aurell2022finite}, its theoretical understanding   is  lacking. First,  sufficient conditions for \ac{ne} existence in  regularized \ac{gmfg} have not been established. Most works only address the existence of the \ac{ne} in {\em unregularized} \ac{gmfg}s. However,  {\em  regularization} is employed in practical implementations for  improved exploration and robustness~\citep{geist2019theory}. Moreover, previous works prove the existence of \ac{ne} in regularized \ac{mfg}s, a special case of \ac{gmfg}s, only under the {\em contraction condition}, which is overly restrictive  for real-world applications. Second, the analysis of  {\em discrete-time} algorithms for monotone \ac{gmfg}s is lacking.  Most existing works design  provably efficient discrete-time algorithms only under contraction conditions, as shown in Table~\ref{table:compare}. Complementarily, previous works on monotone \ac{gmfg}s either only derive the convergence rate for continuous-time algorithms, which ignores the discretization error, or require extra conditions, (e.g., potential games) to analyze discrete-time algorithms.

In this paper, we first consider \ac{gmfg}s in   full generality, i.e., without any contraction or monotone conditions. The goal is to establish the existence of \ac{ne} in the regularized \ac{gmfg} in this general setting. Then we focus on  monotone \ac{gmfg}s motivated by \citet{lasry2007mean}. We aim to learn the unique \ac{ne} from the online interactions of all agents with and without the action-value function oracle. When the oracle  is absent, the action-value functions should be estimated from the data of sampled agents generated in the online game.

In the analysis, we have to overcome  difficulties that arise from both the {\em existence} of the \ac{ne} problem and the {\em learning} of the  \ac{ne}. First, the proof of the existence of \ac{ne} in regularized \ac{gmfg} involves establishing  some topological spaces and operators related to \ac{ne} on which fixed point theorems are applicable. However, the direct construction of the space and the operators for \ac{gmfg}s with uncountably infinite agents is challenging. Second, the design and analysis of the discrete-time \ac{ne} learning algorithm require subtle exploitation of the monotone condition. Unlike  continuous-time algorithms with infinitesimal step sizes, the design of appropriate step sizes is additionally required for the discrete-time algorithm to guarantee that iterates evolve appropriately. This guarantee originates from the delicate utilization of the monotone condition in the optimization procedures.

To address the difficulty of the existence problem, we construct a regularized MFG from the regularized \ac{gmfg} and show that the \ac{ne} of the constructed game can be converted into the \ac{ne} of the original game, thus  mitigating the difficulty of having an uncountable number of agents. To handle the difficulty in the \ac{ne} learning problem, we design the \ac{mgmfgpmd} algorithm, which iteratively implements policy mirror descent for each player. We show that this procedure results in a decrease of the KL divergence between the iterate and the \ac{ne}, and this decrease is related to the gap appearing in the weakly monotone condition. When the action-value function oracle is absent, we also design and analyze action-value functions estimation algorithms to serve as a submodule of the optimization procedures.

\underline{{\bf Main Contributions:}} We first establish the existence of the \ac{ne} in the $\lambda$-regularized \ac{gmfg} with $\lambda\geq 0$ assuming  Lipschitzness of graphons and  continuity of transition kernels and reward functions. Our result relaxes the assumption of the  Lipschitzness of transition kernels and rewards required in  previous works on unregularized \ac{gmfg}s and the contraction condition in the literature on regularized \ac{mfg}~\citep{cui2021approximately}. Then we design and analyze the \ac{mgmfgpmd} algorithm. Using an action-value function oracle, the convergence rate for \ac{mgmfgpmd} is proved to be  $\tilO(T^{-1/2})$ after $T$ iterations. Without the oracle, the convergence rate includes  an additional $\tilO(K^{-1/2}+N^{-1})$  term  that arises from sampling $N$ agents and collecting data from $K$ episodes, reflecting the generalization error and the approximation error of the  estimation algorithm. As shown in Table~\ref{table:compare} , our algorithm can be implemented from the online interaction of agents and does not require the distribution flow manipulation. Detailed explanations of the properties stated in the columns of Table~\ref{table:compare} are provided 
in Appendix~\ref{app:compare}. Our result for \ac{mgmfgpmd} provides the {\em first convergence rate for  discrete-time algorithms in monotone \ac{gmfg}s}.

\begin{table}[t]
	\centering
	\caption{Comparison of \ac{gmfg}s and \ac{mfg}s learning algorithms}
	\setlength\tabcolsep{2pt}
	\begin{tabular}{ccccccc}
		\hline
		 & Condition & \thead{No population\\manipulation} & \thead{Online\\playing} & Heterogeneity & \thead{Discrete-time\\ algorithm}&\thead{Convergence\\rate} \\ 
		\hline
		\citet{anahtarci2022q}& Contraction & No & No & No &Yes & Yes\\
		\citet{xie2021learning}& Contraction & No & Yes & No &Yes & Yes\\
		\citet{zaman2022oracle}& Contraction  & No & Yes & No &Yes & Yes\\
		\citet{yardim2022policy}& Contraction & Yes & Yes & No & Yes & Yes\\
		\citet{perrin2020fictitious}& Monotone & No & No & No & No & Yes\\
		\citet{geist2021concave}& \thead{Potential\\ \&Monotone}& No & No& No &Yes &Yes\\
		\citet{perolat2021scaling}& Monotone & Yes & Yes & Yes & No& No \\
		\citet{fabian2022learning}& Monotone & Yes & Yes & Yes & No& No \\
		\textbf{Our work} & \textbf{Monotone} & \textbf{Yes}& \textbf{Yes}& \textbf{Yes}& \textbf{Yes}& \textbf{Yes} \\
		\hline 
	\end{tabular}
	\label{table:compare}
	\vspace{-6mm}
\end{table} 

\section{Related Works}
\ac{mfg}s were proposed by \citet{huang2006large} and \citet{lasry2007mean} to model the interactions among a set of  homogeneous agents. In recent years, learning the \ac{ne} of the \ac{mfg}s formulated by discrete-time \ac{mdp}s has attracted a lot of interest. There is a large body of works that design and analyze algorithms for the \ac{mfg}s under  contraction conditions~\citep{anahtarci2019fitted,anahtarci2022q,cui2021approximately,xie2021learning,zaman2022oracle,yardim2022policy}. Typically, these works design reinforcement learning algorithms to approximate the contraction operators in \ac{mfg}s, and the \ac{ne} is learned by iteratively applying this operator. In contrast, another line of works focuses on the \ac{mfg}s under monotone conditions. Motivated by \citet{lasry2007mean}, the transition kernels in these works are independent of the mean fields. \citet{perrin2020fictitious} proposed and analyzed the continuous-time fictitious play algorithm, which dynamically weighs the past mean fields and the best response policies to derive new mean fields and policies. With the additional potential assumption, \citet{geist2021concave} derived the convergence rate for the discrete-time fictitious play algorithm. \citet{perolat2021scaling} then proposed the continuous-time policy mirror descent algorithm but only proved the asymptotic convergence, i.e., the consistency. In summary, there is no convergence rate result for any discrete-time algorithm for \ac{mfg}s under the monotone condition. In addition, the relationship between the contraction conditions and the monotone conditions is not clear from existing works, but they complement each other.

To capture the potential heterogeneity among agents, \ac{gmfg}s have been  proposed by \citet{parise2019graphon} in static settings as an extension of \ac{mfg}s. The heterogeneous interactions among agents are represented by graphons. \citet{aurell2022stochastic,caines2019graphon} then extended the \ac{gmfg}s to the continuous-time setting, where the existence and the uniqueness of \ac{ne} were established. \citet{vasal2020master} formulated  discrete-time \ac{gmfg}s and provided way to compute the \ac{ne} with master equations. With the precise underlying graphons values, \citet{cui2021learning} proposed algorithms to learn the \ac{ne} of \ac{gmfg}s with the contraction condition by modifying \ac{mfg}s learning algorithms. \citet{fabian2022learning} considered the monotone \ac{gmfg} and proposed the continuous-time policy mirror descent algorithm to learn the \ac{ne}. However, only consistency was provided in the latter two works.

\textbf{Notation} Let $[N]:=\{1,\cdots,N\}$. Given a measurable space $(\Omega, \calF)$, we denote the collection of all the measures and the probability measures on $(\Omega,\calF)$ as $\calM(\Omega)$ and $\Delta(\Omega)$, respectively. For a metric space $(\calX,\|\cdot\|)$, we use $C(\calX)$ and $C_{b}(\calX)$ to denote the set of all continuous  functions and the set of all bounded continuous  functions on $\calX$, respectively. For a measurable space $(\calX,\calF)$ and two distributions $P,Q\in\Delta(\calX)$, the total variation distance between them is defined as $\tv(P,Q):=\sup_{A\in\calF}|P(A)-Q(A)|$. A sequence of measures $\{\mu_{n}\}$ on $\calX$ is said to {\em  converge weakly} to a measure $\mu$ if $\int_{\calX}g(x)\mu_{n}(\rmd x)\rightarrow\int_{\calX}g(x)\mu(\rmd x)$ for all $g\in C_{b}(\calX)$. %If not explicitly specified, we take the convergence of measures to be the sense of $\ell_{1}$.
\section{Preliminaries}
\subsection{Graphon Mean-Field Games}
We consider a \ac{gmfg} that is defined through a tuple $(\calI,\calS,\calA,H,P,r,W,\mu_{1})$. The  horizon (or length) of the game  is denoted as $H$. In \ac{gmfg}, we have infinite players, each corresponding to a point $\alpha\in\calI=[0,1]$. The state and action space of them are the same, denoted as $\calS\subseteq\bbR^{d_{\rms}}$ and $\calA\subseteq\bbR^{d_{\rma}}$ respectively. The interaction among players is captured by \emph{graphons}. Graphons are symmetric functions that map $[0,1]^{2}$ to $[0,1]$. Symmetry here refers to that $W(\alpha,\beta)=W(\beta,\alpha)$ for all $\alpha,\beta\in[0,1]$. We denote the set of graphons as $\calW=\{W:[0,1]^{2}\rightarrow[0,1]\,|\, W \text{ is symmetric.}\}$. The set of graphons  of the game is $W=\{W_{h}\}_{h=1}^{H}$ with $W_{h}\in\calW$. The state transition and reward of each player are influenced by the collective behavior of all the other players through an {\em aggregate} $z\in\calM(\calS)$. At time $h\in[H]$, we denote the state distribution of player $\beta\in\calI$ as $\mu_{h}^{\beta}\in\Delta(\calS)$. The aggregate for player $\alpha\in\calI$ with the underlying graphon $W_{h}\in\calW$ is then defined as
\begin{align}
    z^{\alpha}_{h}=\int_{0}^{1} W_{h}(\alpha,\beta)\mu_{h}^{\beta} \, \rmd \beta. \label{eq:1}
\end{align}
The transition kernels $P=\{P_{h}\}_{h=1}^{H}$ of the game are functions $P_{h}:\calS\times\calA\times\calM(\calS)\rightarrow\calS$ for all $h\in[H]$. At time $h$, if player $\alpha$ takes action $a_{h}^{\alpha}\in\calA$ at state $s_{h}^{\alpha}\in\calS$, her state will transition according to $s_{h+1}^{\alpha}\sim P_{h}(\cdot\,|\,s_{h}^{\alpha},a_{h}^{\alpha},z_{h}^{\alpha})$. The reward functions are denoted as  $r_{h}:\calS\times\calA\times\calM(\calS)\rightarrow[0,1]$ for all $h\in[H]$. We note that the players in \ac{gmfg} are heterogeneous. This means that different players will, in general, receive different aggregates from other players. The distribution $\mu_{1}\in\Delta(\calS)$ is the initial state distribution for all the players.
A policy for an player $\alpha$ is $\pi^{\alpha}=\{\pi_{h}^{\alpha}\}_{h=1}^{H}\in\Pi^{H}$, where $\pi_{h}^{\alpha}:\calS\rightarrow\Delta(\calA)$ takes action based only on  the current state, and $\Pi$ is the set of all these policies. A policy for all the players $\pi^{\calI}\in\tilde{\Pi}=\Pi^{H\times\calI}$ is the collection of the policies of each player, i.e, $\pi^{\calI}=\{\pi^{\alpha}\}_{\alpha\in\calI}$. In the following,  we denote the state distributions of all the players at time $h$  and the state distributions of all the players at any time (distribution flow) respectively as $\mu_{h}^{\calI}=\{\mu_{h}^{\alpha}\}_{\alpha\in\calI}$  and $\mu^{\calI}=\{\mu_{h}^{\calI}\}_{h=1}^{H}\in\tilde{\Delta}=\Delta(\calS)^{H\times\calI}$. Eqn.~\eqref{eq:1} shows that the aggregate $z_{h}^{\alpha}$ is a function of $\mu_{h}^{\calI}$ and $W_{h}$, so to make this dependence explicit, we also write it as $z_{h}^{\alpha}(\mu_{h}^{\calI},W_{h})$.

We consider the entropy-regularized \ac{gmfg}. It has been shown that the regularization results in  policy gradient algorithms converging faster~\citep{shani2020adaptive,cen2022fast}. In this game, the rewards of each player are the sum of the original rewards and the negative entropy of the policy multiplied by a constant. In a $\lambda$-regularized \ac{gmfg} ($\lambda\geq 0$), the value function and the action-value function of player $\alpha$ with policy $\pi^{\alpha}$ on the \ac{mdp} induced by the distribution flow $\mu^{\calI}$ are defined as
\begin{align}
    V_{h}^{\lambda,\alpha}(s,\pi^{\alpha},\mu^{\calI})&=\bbE^{\pi^{\alpha}}\bigg[\sum_{t=h}^{H}r_{t}\big(s^{\alpha}_{t},a^{\alpha}_{t},z^{\alpha}_{t}(\mu_{t}^{\calI},W_{t})\big)-\lambda\log\pi_{t}^{\alpha}(a_{t}^{\alpha}\,|\,s_{t}^{\alpha})\,\bigg|\, s^{\alpha}_{h}=s\bigg]\label{eq:vf},\\
    Q_{h}^{\lambda,\alpha}(s,a,\pi^{\alpha},\mu^{\calI})&=r_{h}\big(s,a,z^{\alpha}_{h}(\mu_{h}^{\calI},W_{h})\big)+\bbE^{\pi^{\alpha}}\big[V_{h+1}^{\lambda,\alpha}(s_{h+1}^{\alpha},\pi^{\alpha},\mu^{\calI})\,|\, s_{h}^{\alpha}=s,a_{h}^{\alpha}=a\big]\label{eq:avf}
\end{align}
for all $h\in[H]$, where the expectation $\bbE^{\pi^{\alpha}}$ is taken with respect to the stochastic process induced by implementing policy $\pi^{\alpha}$ on the \ac{mdp} induced by $\mu^{\calI}$. The cumulative reward function of player $\alpha$ is defined as $J^{\lambda,\alpha}(\pi^{\alpha},\mu^{\calI})=\bbE_{\mu_{1}}[V_{1}^{\lambda,\alpha}(s,\pi^{\alpha},\mu^{\calI})]$. Then the notion of an \ac{ne} is defined as follows.
\begin{definition}\label{def:gmfgne}
    An \ac{ne} of the $\lambda$-regularized \ac{gmfg} is a pair $(\pi^{*,\calI},\mu^{*,\calI})\in \tilde{\Pi}\times\tilde{\Delta}$ that satisfies: (i) (player rationality) $J^{\lambda,\alpha}(\pi^{*,\alpha},\mu^{*,\calI})=\max_{\tilde{\pi}^{\alpha}\in\Pi^{H}} J^{\lambda,\alpha}(\tilde{\pi}^{\alpha},\mu^{*,\calI})$ for all $\alpha\in\calI$ up to a zero-measure set on $\calI$ with respect to the Lebesgue measure. (ii) (Distribution consistency) The distribution flow $\mu^{*,\calI}$ is equal to the distribution flow induced by implementing the policy $\pi^{*,\calI}$.
\end{definition}
Similar to the \ac{ne} for the finite-player games, the \ac{ne} of the $\lambda$-regularized \ac{gmfg} requires that the policy of each player is optimal. However,  in \ac{gmfg}s,  the optimality is with respect to $\mu^{*,\calI}$.

\subsection{Mean-Field Games}
As an important subclass of \ac{gmfg}, \ac{mfg} corresponds to the \ac{gmfg} with {\em constant} graphons, i.e, $W(\alpha,\beta)=p$ for all $\alpha,\beta\in\calI$. \ac{mfg}s involve infinite \emph{homogeneous} players. All the players employ the same policy and thus share the same distribution flow. The aggregate in Eqn.~\ref{eq:1} degenerates to $z^{\alpha}_{h}=\int_{0}^{1} p\cdot\mu_{h}^{\beta} \rmd \beta=p\cdot\mu_{h}$ for all $\alpha\in\calI$. Here $\mu_{h}$ is the state distribution of a representative player. Thus, an \ac{mfg} is denoted by a tuple $(\bar{\calS},\bar{\calA},\bar{H}, \bar{P},\bar{r},\mu_{1})$. The state space, the action space, and the horizon are respectively denoted as $\bar{\calS}$, $\bar{\calA}$, and $\barH$. Here, the transition kernels $\barP=\{\barP_{h}\}_{h=1}^{H}$ are functions $\barP_{h}:\calS\times\calA\times\Delta(\calS)\rightarrow\calS$, and reward functions $\barr_{h}:\calS\times\calA\times\Delta(\calS)\rightarrow[0,1]$ for all $h\in[H]$. In the \ac{mfg}, all the players adopt the {\em same} policy $\pi=\{\pi_{h}\}_{h=1}^{H}$ where $\pi_{h}\in\Pi$. The value function and the action-value function in the $\lambda$-regularized \ac{mfg} with the underlying distribution flow $\mu=\{\mu_{h}\}_{h=1}^{H}\in\Delta(\calS)^{H}$ can be similarly defined as Eqn.~\eqref{eq:vf} and \eqref{eq:avf} respectively as follows
\begin{align*}
    \barV_{h}^{\lambda}(s,\pi,\mu)&=\bbE^{\pi}\bigg[\sum_{t=h}^{H}\barr_{t}(s_{t},a_{t},\mu_{t})-\lambda\log\pi_{t}(a_{t}\,|\,s_{t})\,\bigg|\, s_{h}=s\bigg],\\
    \barQ_{h}^{\lambda}(s,a,\pi,\mu)&=\barr_{h}(s,a,\mu_{h})+\bbE^{\pi}\big[\barV_{h+1}^{\lambda}(s_{h+1},\pi,\mu)\,|\, s_{h}=s,a_{h}=a\big]
\end{align*}
for all $h\in[H]$. The cumulative reward is $\barJ^{\lambda}(\pi,\mu)=\bbE_{\mu_{1}}[\barV_{1}^{\lambda}(s,\pi,\mu)]$. The notion of \ac{ne} can be similarly derived as follows.
\begin{definition}\label{def:mfgne}
    An \ac{ne} of the $\lambda$-regularized \ac{mfg} is a pair $(\pi^{*},\mu^{*})\in \Pi^{H}\times\Delta(\calS)^{H}$ that satisfies: (i) (player rationality) $J^{\lambda}(\pi^{*},\mu^{*})=\max_{\tilde{\pi}\in\Pi^{H}} J^{\lambda}(\tilde{\pi},\mu^{*})$. (ii) (Distribution consistency) The distribution flow $\mu^{*}$ is equal to the distribution flow induced by the policy $\pi^{*}$.
\end{definition}
\begin{remark}
    Compared with Definition~\ref{def:gmfgne}, the definition of \ac{ne} in \ac{mfg} only involves the policy and the distribution flow of a single representative player, since the agents are homogeneous in \ac{mfg}s.
\end{remark}

\section{Existence of the \ac{ne}s in Regularized \ac{gmfg}s and \ac{mfg}s}
We now state some assumptions to demonstrate the existence of a  \ac{ne} for  $\lambda$-regularized \ac{gmfg}s.
\begin{assumption}\label{assump:compact}
    The state space $\calS$ is compact, and the action space $\calA$ is finite.
\end{assumption}
This assumption imposes rather  mild constraints on $\calS$ and $\calA$. In  real-world applications, the states are usually physical quantities and thus reside in a compact set. For the action space, many deep reinforcement learning algorithms discretize the potential continuous action sets into finite sets~\citep{lowe2017multi,mordatch2018emergence}.
\begin{assumption}\label{assump:glip}
    The graphons $W_{h}$ for $h\in[H]$ are continuous functions.
\end{assumption}
The stronger version of this assumption (Lipschitz continuity) is widely adopted in \ac{gmfg} works~\citep{cui2021learning,fabian2022learning}. It helps us to build the continuity of the transition kernels and rewards with respect to players. %This assumption can be relaxed to piece-wise Lipschitzness, and our results in this section can easily generalize.
\begin{assumption}\label{assump:conti}
    For all $h\in[H]$, the reward function $r_{h}(s,a,z)$ is continuous on $\calS\times\calA\times\calM(\calS)$, that is if $(s_{n},a_{n},z_{n})\rightarrow(s,a,z)$ as $n\rightarrow\infty$, then $r_{h}(s_{n},a_{n},z_{n})\rightarrow r_{h}(s,a,z)$. The transition kernel $P_{h}(\cdot\,|\, s,a,z)$ is weakly continuous in $\calS\times\calA\times\calM(\calS)$, that is if $(s_{n},a_{n},z_{n})\rightarrow(s,a,z)$ as $n\rightarrow\infty$, $P_{h}(\cdot\,|\,s_{n},a_{n},z_{n})\rightarrow P_{h}(\cdot\,|\,s,a,z)$ weakly.
\end{assumption}
This assumption states the continuity of the models, i.e., the transition kernels and the rewards, as functions of the state, action, and aggregate. We note that the Lipschitz continuity assumption of the model in the previous works implies that  our assumption is satisfied~\citep{cui2021learning,fabian2022learning}. Next, we state the existence result of regularized \ac{gmfg}.
\begin{theorem}\label{thm:exist}
    Under Assumptions~\ref{assump:compact}, \ref{assump:glip} and \ref{assump:conti}, for all $\lambda\geq 0$, the $\lambda$-regularized \ac{gmfg} $(\calI,\calS,\calA,H,P,r,W,\mu_{1})$ admits an \ac{ne} $(\pi^{\calI},\mu^{\calI})\in \tilde{\Pi}\times\tilde{\Delta}$.
\end{theorem}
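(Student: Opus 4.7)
The plan is to reduce the existence problem for the GMFG to one for an auxiliary MFG built on an \emph{augmented state space}, and then invoke a fixed-point theorem on a compact convex set of distribution flows. Concretely, I would define an MFG with state space $\bar{\calS}=\calS\times\calI$, action space $\bar{\calA}=\calA$, horizon $\barH=H$, initial distribution $\bar{\mu}_{1}=\mu_{1}\otimes\mathrm{Unif}(\calI)$, transition kernel
\[
\bar{P}_{h}\bigl((s',\alpha')\,\big|\,(s,\alpha),a,\bar{\mu}_{h}\bigr)=P_{h}\bigl(s'\,\big|\,s,a,z_{h}^{\alpha}(\bar{\mu}_{h},W_{h})\bigr)\,\delta_{\alpha'}(\alpha),
\]
and reward $\bar{r}_{h}((s,\alpha),a,\bar{\mu}_{h})=r_{h}(s,a,z_{h}^{\alpha}(\bar{\mu}_{h},W_{h}))$, where $z_{h}^{\alpha}(\bar{\mu}_{h},W_{h})$ is computed from the disintegration of $\bar{\mu}_{h}$ along the $\alpha$-marginal. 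The key feature is that $\alpha$ is a ``frozen'' coordinate preserved by $\bar{P}_{h}$, so the $\alpha$-marginal of $\bar{\mu}_{h}$ remains $\mathrm{Unif}(\calI)$ throughout. Any NE of this MFG can then be disintegrated into player-indexed policies $\pi^{\alpha}_{h}(a\,|\,s)=\bar{\pi}_{h}(a\,|\,s,\alpha)$ and state distributions $\mu^{\alpha}_{h}$, and a routine verification shows the resulting $(\pi^{\calI},\mu^{\calI})$ satisfies Definition~\ref{def:gmfgne}.

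Next, I would prove existence of an NE in the auxiliary MFG via a Kakutani--Fan--Glicksberg fixed-point argument (Schauder suffices when $\lambda>0$ because the soft-max best response is single-valued and continuous). Let $\tilde{\calD}\subset\Delta(\bar{\calS})^{H}$ be the set of distribution flows whose $\alpha$-marginal is $\mathrm{Unif}(\calI)$ at every time; by Assumption~\ref{assump:compact} and Prokhorov's theorem, $\tilde{\calD}$ endowed with the product of weak topologies is compact, convex, and metrizable. Define the set-valued map $\Gamma:\tilde{\calD}\rightrightarrows\tilde{\calD}$ by $\Gamma(\bar{\mu})=\{\bar{\mu}'\in\tilde{\calD}:\bar{\mu}'\text{ is induced by some }\bar{\pi}\in\mathrm{argmax}_{\tilde{\pi}}\barJ^{\lambda}(\tilde{\pi},\bar{\mu})\}$. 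A fixed point of $\Gamma$ is, by construction, an NE. The verification that $\Gamma$ has non-empty, convex, compact values with closed graph reduces to two ingredients: (i) the aggregate map $\bar{\mu}_{h}\mapsto z_{h}^{\alpha}(\bar{\mu}_{h},W_{h})$ is continuous in the weak topology for each $\alpha$, which follows from Assumption~\ref{assump:glip} and bounded convergence; and (ii) the best-response correspondence is upper hemicontinuous, which follows from Berge's maximum theorem together with Assumption~\ref{assump:conti}.

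The main obstacle is controlling the interaction between the continuum of players and the topology on distribution flows. Specifically, to apply Berge's theorem I need that $(s,a,\bar{\mu}_{h})\mapsto\bar{r}_{h}((s,\alpha),a,\bar{\mu}_{h})$ and $(s,a,\bar{\mu}_{h})\mapsto\int f(s')\bar{P}_{h}(\mathrm{d}s',\mathrm{d}\alpha'\,|\,(s,\alpha),a,\bar{\mu}_{h})$ are jointly continuous (resp.\ weakly continuous) for every $f\in C_{b}(\bar{\calS})$, and that these continuities be \emph{uniform enough in $\alpha$} to propagate through the Bellman recursion defining $\barV^{\lambda}$ and $\barQ^{\lambda}$. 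This is where Assumption~\ref{assump:glip} (continuity of $W_{h}$ on the compact set $\calI^{2}$) is essential---it upgrades pointwise continuity in $\alpha$ to uniform continuity, so that the bounded value iterates remain continuous on $\bar{\calS}$ at every stage. Once this is in hand, a standard backward induction in $h$ shows $\barQ_{h}^{\lambda}(\cdot,\cdot,\bar{\pi},\bar{\mu})$ depends continuously on $\bar{\mu}$, and Berge's theorem delivers upper hemicontinuity of $\Gamma$.

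Finally, I would dispose of the two regimes separately: for $\lambda>0$, the entropy-regularized best response is the unique soft-max of $\barQ_{h}^{\lambda}/\lambda$, which is single-valued and continuous, so Schauder's theorem applies directly to a continuous single-valued $\Gamma$; for $\lambda=0$, the argmax set is convex (being the convex hull of greedy policies), non-empty, and compact, so Kakutani--Fan--Glicksberg applies. In either case a fixed point of $\Gamma$ yields an NE $\bar{\pi}^{*},\bar{\mu}^{*}$ of the auxiliary MFG, and disintegration along the $\alpha$-coordinate produces the desired NE $(\pi^{*,\calI},\mu^{*,\calI})\in\tilde{\Pi}\times\tilde{\Delta}$ of the regularized GMFG, completing the proof.
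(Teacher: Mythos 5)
Your reduction in the first half is exactly the paper's: the augmented state space $\bar{\calS}=\calS\times\calI$ with the $\alpha$-coordinate frozen by a Dirac factor in $\bar{P}_{h}$, the initial law $\mu_{1}\otimes\unif(\calI)$, and the disintegration of an \ac{ne} of the auxiliary \ac{mfg} back into player-indexed policies and flows (the paper verifies distribution consistency by induction on $h$ and player rationality by matching the two \ac{mdp}s, just as you sketch). You also correctly identify that the continuity of $W_{h}$ on the compact square is what makes the aggregate map $\bar{\mu}_{h}\mapsto z_{h}^{\alpha}$ continue uniformly in $\alpha$; this is precisely how the paper verifies the continuity hypotheses of its \ac{mfg} existence theorem. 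Where you genuinely diverge is the fixed-point step for the auxiliary \ac{mfg}: you place the correspondence on \emph{state} distribution flows and define $\Gamma(\bar{\mu})$ as the set of flows induced by best-response policies, relying on Berge's maximum theorem for upper hemicontinuity. The paper instead works on \emph{state-action} distribution flows $\nu\in\prod_{h}\Delta(\calS\times\calA)$ and defines $\Gamma(\nu)=A(\nu)\cap B(\nu)$, where $A(\nu)$ encodes the flow constraints and $B(\nu)$ encodes Bellman optimality against the fixed point $\barV^{\lambda,\nu}$ of a contraction $T^{\nu}$; it then proves closedness of the graph directly and invokes Kakutani. Both routes are standard in the \ac{mfg} literature, but the occupation-measure formulation is chosen precisely because it makes the values of $\Gamma$ convex by fiat (the constraints in $A(\nu)$ are affine in $\xi$) and never requires selecting or topologizing policies on a continuous state space.

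Those are the two places where your route needs more than what you wrote. First, for $\lambda=0$ you justify convexity of $\Gamma(\bar{\mu})$ by saying the argmax set of \emph{policies} is convex; but $\Gamma(\bar{\mu})$ is a set of \emph{induced flows}, and the policy-to-flow map is multilinear across stages, not affine, so convexity does not transfer automatically. The set of induced flows is in fact convex, but seeing this requires passing through the state-action (occupation) measures: a convex combination of two optimal flows is generated by the state-dependent mixture policy whose conditional at each $s$ is a convex combination of the two optimal conditionals, hence still supported on the argmax actions. This is exactly the bookkeeping the paper's $\Xi$-based construction makes automatic. Second, with $\calS$ compact but uncountable, the $\lambda=0$ greedy best response need not be continuous in $s$, so the map from $\bar{\mu}$ to the induced flow of a best response is not obviously upper hemicontinuous in the weak topology; Berge's theorem gives you upper hemicontinuity of the argmax \emph{values} and of the argmax \emph{sets of actions}, not of the resulting pushforward measures through a discontinuous selector. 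The paper avoids this by phrasing optimality as an a.s.\ Bellman identity on the occupation measure (the set $B(\nu)$) and proving closedness of the graph via continuous convergence of $\barV^{\lambda,\nu^{(n)}}$ (its Proposition on value-function convergence) together with Langen's theorem. Your $\lambda>0$ branch (single-valued softmax plus Schauder) is sound, since the softmax of a uniformly convergent sequence of continuous $Q$-functions converges uniformly; it is the unregularized case where your argument, as written, has a real hole that the paper's formulation is engineered to close.
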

This theorem strengthens  previous existence results in \cite{cui2021learning} and \cite{fabian2022learning} in two aspects. First, our assumptions are weaker. These two existing works require a finite state space and the Lipschitz continuity of the models. In contrast, we only need a compact state space and the model continuity. Second, their results only hold for the unregularized case ($\lambda=0$), whereas ours holds for any $\lambda\geq0$. In the proof of Theorem~\ref{thm:exist}, we construct a \ac{mfg} from the given \ac{gmfg} and show that an \ac{ne} of the constructed \ac{mfg} can be converted to an \ac{ne} of the \ac{gmfg}. Then we  prove the existence of \ac{ne} in the constructed regularized \ac{mfg}. Our existence result for the regularized \ac{mfg} is also a significant addition to the \ac{mfg} literature. %Before reaching the existence result for \ac{mfg}, we make an important remark concerning the proof  of Theorem~\ref{thm:exist}.
\begin{remark}
    Although we show that an \ac{ne} of the constructed \ac{mfg} can be converted to an \ac{ne} of \ac{gmfg}, this proof does not imply that \ac{gmfg} forms a subclass of or is equivalent to \ac{mfg}. This is because we have only demonstrated  the relationship between the \ac{ne}s of these two games, but the exact realizations of the \ac{gmfg} and the conceptually constructed \ac{mfg} may differ. It means that the sample paths of these two games may not be the same, which include the realizations of the states, actions, and rewards of all the players.
\end{remark}
We next state the assumption needed for the \ac{mfg}.
\begin{assumption}\label{assump:mfg}
    The \ac{mfg} $(\bar{\calS},\bar{\calA},\bar{H}, \bar{P},\bar{r},\mu_{1})$ satisfies that: (i) The state space $\bar\calS$ is compact, and the action space $\bar\calA$ is finite. (ii) The reward functions $\barr_{h}(s,a,\mu)$ for $h\in[H]$ are continuous on $\calS\times\calA\times\Delta(\calS)$. The transition kernels are weakly continuous on $\calS\times\calA\times\Delta(\calS)$; that is if $(s_{n},a_{n},\mu_{n})\rightarrow (s,a,\mu)$ as $n\rightarrow\infty$, $\barP_{h}(\cdot\,|\,s_{n},a_{n},\mu_{n})\rightarrow \barP_{h}(\cdot\,|\,s,a,\mu)$ weakly.
\end{assumption}
Then the existence of the \ac{ne} is stated as follows.
\begin{theorem}\label{thm:mfgexist}
    Under Assumption~\ref{assump:mfg}, the $\lambda$-regularzied \ac{mfg} $(\bar{\calS},\bar{\calA},\bar{H}, \bar{P},\bar{r},\mu_{1})$ admits an \ac{ne} $(\pi,\mu)\in\Pi^{H}\times\Delta(\calS)^{H}$ for any $\lambda\geq 0$.
\end{theorem}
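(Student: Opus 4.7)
The plan is to exhibit an NE as a fixed point of a best-response-then-forward map on the compact convex set of distribution flows and invoke a Schauder-type fixed-point theorem. Endow $\Delta(\bar{\calS})$ with the weak topology; since $\bar{\calS}$ is compact, Prokhorov's theorem makes $\Delta(\bar{\calS})$ a compact convex metrizable subset of the locally convex space of signed Borel measures on $\bar{\calS}$, so $X := \Delta(\bar{\calS})^{H}$ with the product weak topology is itself a compact convex subset of a locally convex topological vector space.

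For each candidate $\mu \in X$, solve the induced $\lambda$-regularized MDP by the backward Bellman recursion: set $\barV_{H+1}^{\lambda} \equiv 0$ and, for $h = H, \ldots, 1$,
\[
\barQ_{h}^{\lambda}(s,a,\mu) = \barr_{h}(s,a,\mu_{h}) + \int \barV_{h+1}^{\lambda}(s',\mu)\, \barP_{h}(\rmd s' \mid s,a,\mu_{h}),
\]
with $\barV_{h}^{\lambda}(s,\mu)$ the entropy-regularized maximum of $\barQ_{h}^{\lambda}(s,\cdot,\mu)$ over $\Delta(\bar{\calA})$. A backward induction shows that $\barV_{h}^{\lambda}$ is jointly continuous on the compact set $\bar{\calS} \times X$: joint continuity of $\barV_{h+1}^{\lambda}$ upgrades to uniform continuity by compactness, so combining uniform convergence $\barV_{h+1}^{\lambda}(\cdot,\mu_{n}) \to \barV_{h+1}^{\lambda}(\cdot,\mu)$ with the weak continuity of $\barP_{h}$ from Assumption~\ref{assump:mfg} and the standard fact that $\int f_{n}\, \rmd \nu_{n} \to \int f\, \rmd \nu$ whenever $f_{n} \to f$ uniformly and $\nu_{n} \to \nu$ weakly yields continuity of $\barQ_{h}^{\lambda}$; the $\bar{\calA}$-maximum (a finite log-sum-exp for $\lambda > 0$ or ordinary max for $\lambda = 0$) of continuous functions in $\mu$ then delivers continuity of $\barV_{h}^{\lambda}$.

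Next, define the best-response correspondence $BR(\mu) \subset \Pi^{H}$ to be the set of Markov policies optimal in the MDP induced by $\mu$, and the fixed-point correspondence
\[
\Phi(\mu) := \bigl\{ \mu' \in X : \mu'_{1} = \mu_{1},\ \mu'_{h+1}(\cdot) = \textstyle\int \barP_{h}(\cdot \mid s,a,\mu_{h})\, \pi_{h}(a \mid s)\, \mu'_{h}(\rmd s)\, \rmd a,\ \pi \in BR(\mu) \bigr\}.
\]
Any $\mu^{*} \in \Phi(\mu^{*})$ is an NE, with any $\pi^{*} \in BR(\mu^{*})$ completing the equilibrium pair. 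For $\lambda > 0$, $BR(\mu)$ reduces to the single softmax policy $\pi_{h}^{*}(a \mid s,\mu) \propto \exp\bigl(\barQ_{h}^{\lambda}(s,a,\mu)/\lambda\bigr)$, which is jointly continuous in $(s,\mu)$; a further induction over $h$ (using the same uniform-plus-weak argument) shows that $\Phi$ is a single-valued continuous self-map of $X$, so Schauder--Tychonoff yields a fixed point. For $\lambda = 0$, $BR(\mu)$ is the nonempty convex set of policies that greedily optimize $\barQ_{h}^{0}(s,\cdot,\mu)$ for each $(h,s)$; after verifying that $\Phi$ has non-empty, convex, closed values and a closed graph, one invokes the Kakutani--Fan--Glicksberg theorem.

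The main obstacle is the closed-graph verification when $\lambda = 0$: given $\mu^{(n)} \to \mu$ and $\mu'^{(n)} \to \mu'$ with witnesses $\pi^{(n)} \in BR(\mu^{(n)})$, one must extract a limit policy $\pi \in BR(\mu)$ and check that the forward flow induced by $(\pi,\mu)$ equals $\mu'$. The difficulty is that the $\pi^{(n)}$ are merely measurable kernels with no a priori pointwise limit, so I plan to identify them with probability measures on $\bar{\calS} \times \bar{\calA}$ (Young measures, exploiting finiteness of $\bar{\calA}$), extract a weakly convergent subsequence, and pass to the limit using uniform continuity of $\barQ_{h}^{0}$ and $\barr_{h}$ on compact product spaces. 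A cleaner alternative, if this becomes unwieldy, is to prove the theorem first for $\lambda > 0$ via Schauder--Tychonoff and then recover the $\lambda = 0$ case by passing $\lambda \to 0^{+}$ along the resulting equilibria, exploiting compactness of $X$ together with stability of the Bellman optimality conditions.
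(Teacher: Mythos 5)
Your overall architecture is sound and lands close to the paper's, but with a different surface formulation. The paper does not work on state-distribution flows at all: it sets up the fixed-point problem directly on state-\emph{action} distribution flows $\Xi=\prod_{h=1}^{H}\Delta(\calS\times\calA)$, defines a set-valued map $\Gamma(\nu)=A(\nu)\cap B(\nu)$ (with $A$ encoding the forward-flow consistency of the marginals and $B$ encoding the regularized Bellman optimality condition $\bar\xi_h$-a.s.), and applies Kakutani--Fan--Glicksberg once, uniformly for all $\lambda\ge 0$. This is precisely the ``Young measure'' lift you propose as a rescue for your $\lambda=0$ branch, so your instinct there is exactly right; the payoff of adopting it from the start is that the convexity of the fixed-point correspondence becomes transparent (the optimality and flow constraints are affine/null-set constraints on occupation measures, whereas convexity of your $\Phi(\mu)$ at the level of state marginals is true but not obvious), and one avoids maintaining two separate arguments for $\lambda>0$ and $\lambda=0$. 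What your route buys in exchange is a genuinely simpler $\lambda>0$ case: the softmax best response is single-valued and jointly continuous, so Schauder--Tychonoff on $\Delta(\bar\calS)^{H}$ suffices, and your backward-induction proof of joint continuity of $\barV_h^{\lambda}$ (uniform convergence of continuous functions on a compact set integrated against weakly convergent kernels) is a clean finite-horizon substitute for the paper's contraction-plus-continuous-convergence machinery (their Propositions on $T_h^{\nu}$ and on $\sup_{s}|\barV_h^{\lambda,\nu^{(n)}}-\barV_h^{\lambda,\nu}|\to 0$).

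The one genuine gap is that the hard content for $\lambda=0$ is flagged but not executed, and it is where most of the paper's work lives: the closed-graph verification requires (i) continuous convergence of the optimal value functions $\barV_h^{\lambda,\nu^{(n)}}\to\barV_h^{\lambda,\nu}$ under weak/TV convergence of $\nu^{(n)}$, and (ii) a careful treatment of the ``optimality holds $\bar\xi_h^{(n)}$-a.s.'' sets when passing to the limit --- the paper needs a separate argument showing that $\bar\xi_h\big(\cap_{k}\cup_{n\ge k}A_h^{(n)}\big)=1$ before it can pass to the limit in the Bellman identity pointwise on that set. Neither step is routine, and your fallback of proving the result for $\lambda>0$ and sending $\lambda\to 0^{+}$ does not dodge the difficulty: the equilibrium policies $\pi^{\lambda_k}$ are again measurable kernels with no pointwise limit, so extracting a limiting NE policy forces you back to exactly the same occupation-measure compactness and limit-exchange arguments. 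So: correct plan, correct diagnosis of the obstacle, but the decisive lemmas remain to be proved.
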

Our result in Theorem~\ref{thm:mfgexist} imposes weaker conditions than previous works~\citep{cui2021approximately,anahtarci2022q} to guarantee the existence of an \ac{ne}. These existing works prove the existence of \ac{ne} by assuming a \emph{contractive} property and the finiteness of the state space. They also require a strong Lispchitz assumption~\citep{anahtarci2022q}, where the Lipschitz constants of the models should be small enough. In contrast, we only require the continuity assumption in Theorem~\ref{thm:mfgexist}. This is due to our analysis of the operator for the regularized \ac{mfg} and the use of Kakutani fixed point theorem~\cite{guide2006infinite}.

\section{Learning \ac{ne} of Monotone \ac{gmfg}s}
In this section, we focus on \ac{gmfg}s with transition kernels that are independent of the aggregate $z$, i.e., $P_{h}:\calS\times\calA\rightarrow\calS$ for $h\in[H]$. This model is motivated by the seminal work \cite{lasry2007mean}, where the state evolution in  continuous-time \ac{mfg} is characterized  by the Fokker--Plank equation. However, the form of the Fokker--Plank equation results in  the state transition of each player  being independent of other players. This model is also widely accepted in the discrete-time \ac{gmfg} and \ac{mfg} literature~\citep{fabian2022learning,perrin2020fictitious,perolat2021scaling}.
\subsection{Monotone \ac{gmfg}}\label{sec:mono}
We first generalize the notion of \emph{monotonicity} from multi-population \ac{mfg}s in \cite{perolat2021scaling} to \ac{gmfg}s.
\begin{definition}[Weakly Monotone Condition]
    A \ac{gmfg} is said to be \emph{weakly monotone} if for any $\rho^{\calI},\tilde{\rho}^{\calI}\in \Delta(\calS\times\calA)^{\calI}$ and their marginalizations on the states $\mu^{\calI},\tilde{\mu}^{\calI}\in \Delta(\calS)^{\calI}$, we have
    \begin{align*}
        \int_{\calI}\sum_{a\in\calA}\int_{\calS} \big(\rho^{\alpha}(s,a)-\tilde{\rho}^{\alpha}(s,a)\big)\Big(r_{h}\big(s,a,z_{h}^{\alpha}(\mu^{\calI},W_{h})\big)-r_{h}\big(s,a,z_{h}^{\alpha}(\tilde{\mu}^{\calI},W_{h})\big)\Big)\, \rmd s\, \rmd \alpha \leq 0 
    \end{align*}
    for all $h\in[H]$, where $W_{h}$ is the underlying graphon. It is \emph{strictly weakly monotone} if the inequality is strict when $\rho^{\calI}\neq \tilde{\rho}^{\calI}$.
\end{definition}
In two \ac{mdp}s induced by the distribution flows $\mu^{\calI}$ of $\pi^{\calI}$ and $\tilde{\mu}^{\calI}$ of $\tilde{\pi}^{\calI}$, the weakly monotone condition states that we can achieve higher rewards at stage $h\in[H]$ by swapping the policies. This condition has two important implications. The first is the uniqueness of the \ac{ne}.
\begin{proposition}\label{prop:uniquene}
    Under Assumptions~\ref{assump:compact}, \ref{assump:glip}, and \ref{assump:conti}, a strictly weakly monotone $\lambda$-regularized \ac{gmfg} has a unique \ac{ne} for any $\lambda\geq 0$ up to a zero-measure set on $\calI$ with respect to the Lebesgue measure.
\end{proposition}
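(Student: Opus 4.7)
My plan is to prove uniqueness by contradiction, adapting the classical Lasry--Lions monotonicity argument to the graphon setting. Existence of an NE is guaranteed by Theorem~\ref{thm:exist}, so it suffices to assume two NEs $(\pi^{*,\calI},\mu^{*,\calI})$ and $(\tilde\pi^{\calI},\tilde\mu^{\calI})$ exist and show they agree for almost every $\alpha\in\calI$.

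The key structural observation I will exploit is that, since the transition kernels in Section~\ref{sec:mono} do not depend on the aggregate, the state--action occupancy $\rho_h^{\alpha,\pi^\alpha}(s,a)=\mu_h^{\alpha,\pi^\alpha}(s)\pi_h^\alpha(a\mid s)$ of player $\alpha$ depends only on $\pi^\alpha$ (together with $P$ and $\mu_1$), not on the ambient distribution flow $\mu^{\calI}$. Hence I can decompose
\[
J^{\lambda,\alpha}(\pi^\alpha,\mu^{\calI})=\sum_{h=1}^{H}\sum_{a\in\calA}\int_{\calS}\rho_h^{\alpha,\pi^\alpha}(s,a)\,r_h\bigl(s,a,z_h^\alpha(\mu_h^{\calI},W_h)\bigr)\,\rmd s\;-\;\lambda\,\mathcal{H}(\pi^\alpha),
\]
where the entropy functional $\mathcal{H}(\pi^\alpha)$ involves only $\rho^{\alpha,\pi^\alpha}$ and $\pi^\alpha$ and is independent of $\mu^{\calI}$. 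Writing the two NE rationality inequalities
\[
J^{\lambda,\alpha}(\pi^{*,\alpha},\mu^{*,\calI})\geq J^{\lambda,\alpha}(\tilde\pi^{\alpha},\mu^{*,\calI}),\qquad J^{\lambda,\alpha}(\tilde\pi^{\alpha},\tilde\mu^{\calI})\geq J^{\lambda,\alpha}(\pi^{*,\alpha},\tilde\mu^{\calI}),
\]
which hold for a.e.\ $\alpha$ by Definition~\ref{def:gmfgne}, I add them and rearrange into two brackets, each comparing the \emph{same} policy against two different distribution flows; this cancels the entropy terms within each bracket. Integrating the resulting inequality over $\calI$ and swapping the order of integration yields
\[
\sum_{h=1}^{H}\int_{\calI}\sum_{a\in\calA}\int_{\calS}\bigl(\rho_h^{*,\alpha}-\tilde\rho_h^{\alpha}\bigr)(s,a)\bigl[r_h(s,a,z_h^\alpha(\mu_h^{*,\calI},W_h))-r_h(s,a,z_h^\alpha(\tilde\mu_h^{\calI},W_h))\bigr]\,\rmd s\,\rmd\alpha\;\geq\;0.
\]
Strict weak monotonicity applied at each stage $h$ shows every summand on the left is $\leq 0$, with strict inequality whenever $\rho_h^{*,\cdot}\neq\tilde\rho_h^{\cdot}$ on a positive-measure subset of $\calI$. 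Combined with the $\geq 0$ bound, this forces $\rho_h^{*,\alpha}=\tilde\rho_h^{\alpha}$ for every $h$ and a.e.\ $\alpha$. Marginalizing on the state coordinate recovers $\mu^{*,\calI}=\tilde\mu^{\calI}$ almost everywhere, and conditional disintegration identifies $\pi^{*,\alpha}$ with $\tilde\pi^{\alpha}$ on the support of $\mu^{*,\alpha}$; when $\lambda>0$ the regularized Bellman optimality pins each $\pi^{*,\alpha}$ uniquely as a softmax of the action-value function under $\mu^{*,\calI}$, so equality extends off-support as well.

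The main obstacle I anticipate is the measurability and integrability bookkeeping over the player index $\calI$. I need to verify that $\alpha\mapsto J^{\lambda,\alpha}(\pi^\alpha,\mu^{\calI})$ is $\calI$-measurable (so that integrating the pointwise rationality inequalities is legitimate) and that the Fubini-style exchange of the $\calI$-integral with the step-$h$ integrals is justified; both should follow from Assumptions~\ref{assump:compact}--\ref{assump:conti}, which furnish continuity of $W_h$ in $\alpha$ and of $r_h$ in the aggregate, together with uniform boundedness of rewards, but the argument must carefully absorb the zero-measure exceptional set from Definition~\ref{def:gmfgne}. A secondary delicate point is confirming the entropy terms truly cancel inside each bracketed difference; this is the mechanism that lets the proof reduce cleanly to the pure reward inequality required by the weakly monotone condition.
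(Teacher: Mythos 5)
Your proposal is correct and follows essentially the same route as the paper: sum the two player-rationality inequalities, integrate over $\calI$, and contradict strict weak monotonicity via the reward-only decomposition that exploits aggregate-independent transitions (which the paper isolates as Proposition~\ref{prop:mono} and invokes here). You inline that lemma and carry the endgame slightly further (forcing equality of occupancy measures and handling the off-support/$\lambda>0$ case), but the core argument is identical.
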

In the following, we denote this unique \ac{ne} as $(\pi^{*,\calI},\mu^{*,\calI})$, and we aim to learn this \ac{ne}. The second implication concerns the relationship between the cumulative rewards of two policies.
\begin{proposition}\label{prop:mono}
    If a $\lambda$-regularized \ac{gmfg} satisfies the weakly monotone condition, then for any two policies $\pi^{\calI}$, $\tilde{\pi}^{\calI}\in\tilde{\Pi}$ and their induced distribution flows $\mu^{\calI}, \tilde{\mu}^{\calI}\in\tilde{\Delta}$, we have
    \begin{align*}
        \int_{0}^{1}J^{\lambda,\alpha}(\pi^{\alpha},\mu^{\calI})+J^{\lambda,\alpha}(\tilde{\pi}^{\alpha},\tilde{\mu}^{\calI})-J^{\lambda,\alpha}(\tilde{\pi}^{\alpha},\mu^{\calI})-J^{\lambda,\alpha}(\pi^{\alpha},\tilde{\mu}^{\calI})\, \rmd\alpha\leq 0
    \end{align*}
    If the  $\lambda$-regularized \ac{gmfg} satisfies the strictly weakly monotone condition, then the inequality is strict when $\pi^{\calI}\neq \tilde{\pi}^{\calI}$.
\end{proposition}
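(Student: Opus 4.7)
The plan is to exploit the structural assumption made at the start of Section~5 that the transition kernel $P_h$ does not depend on the aggregate $z$. This decouples the state dynamics from the distribution flow: for every player $\alpha$, the induced state distribution $\mu_h^{\alpha}$ and the state--action occupancy $\rho_h^{\alpha}(s,a):=\mu_h^{\alpha}(s)\pi_h^{\alpha}(a\,|\,s)$ depend only on $\pi^{\alpha}$ and $\mu_1$, not on $\mu^{\calI}$. Therefore if we write $\tilde{\rho}_h^{\alpha}$ for the occupancy induced by $\tilde{\pi}^{\alpha}$, those occupancies are the same whether the ``underlying'' distribution flow is $\mu^{\calI}$ or $\tilde{\mu}^{\calI}$.

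First I would rewrite the cumulative reward in occupancy form: using \eqref{eq:vf} and the law of total expectation,
\begin{align*}
J^{\lambda,\alpha}(\pi^{\alpha},\mu^{\calI})
=\sum_{h=1}^{H}\sum_{a\in\calA}\int_{\calS}\rho_h^{\alpha}(s,a)\,r_h\bigl(s,a,z_h^{\alpha}(\mu_h^{\calI},W_h)\bigr)\,\rmd s
\;-\;\lambda\sum_{h=1}^{H}\bbE^{\pi^{\alpha}}\bigl[\log\pi_h^{\alpha}(a_h\,|\,s_h)\bigr].
\end{align*}
Crucially, the entropy term depends only on $\pi^{\alpha}$ (through $\mu_h^{\alpha}$), so it is the same in $J^{\lambda,\alpha}(\pi^{\alpha},\mu^{\calI})$ and $J^{\lambda,\alpha}(\pi^{\alpha},\tilde{\mu}^{\calI})$, and likewise for $\tilde{\pi}^{\alpha}$. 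Hence it cancels when forming the four-term combination in the statement.

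Next I would compute the cross-difference directly. Subtracting gives
\begin{align*}
&J^{\lambda,\alpha}(\pi^{\alpha},\mu^{\calI})-J^{\lambda,\alpha}(\pi^{\alpha},\tilde{\mu}^{\calI})-J^{\lambda,\alpha}(\tilde{\pi}^{\alpha},\mu^{\calI})+J^{\lambda,\alpha}(\tilde{\pi}^{\alpha},\tilde{\mu}^{\calI})\\
&=\sum_{h=1}^{H}\sum_{a\in\calA}\int_{\calS}\bigl(\rho_h^{\alpha}(s,a)-\tilde{\rho}_h^{\alpha}(s,a)\bigr)\Bigl(r_h\bigl(s,a,z_h^{\alpha}(\mu_h^{\calI},W_h)\bigr)-r_h\bigl(s,a,z_h^{\alpha}(\tilde{\mu}_h^{\calI},W_h)\bigr)\Bigr)\,\rmd s.
\end{align*}
Integrating over $\alpha\in[0,1]$ and interchanging the order of integration (Fubini, justified by boundedness of rewards and occupancies) produces, for each $h$, precisely the expression appearing in the weakly monotone condition applied to $\rho^{\calI}_h,\tilde\rho^{\calI}_h$. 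Each such term is $\leq 0$, proving the weak inequality.

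For the strict case, suppose $\pi^{\calI}\neq\tilde{\pi}^{\calI}$ on a set of positive Lebesgue measure, and let $h^{\star}$ be the smallest index at which this happens. Since the policies agree (a.e.) on $1,\ldots,h^{\star}-1$ and the transitions are independent of the aggregate, $\mu_{h^{\star}}^{\alpha}=\tilde{\mu}_{h^{\star}}^{\alpha}$, so $\rho_{h^{\star}}^{\calI}\neq\tilde{\rho}_{h^{\star}}^{\calI}$; strict weak monotonicity then makes the $h^{\star}$-term strictly negative while the remaining terms stay $\leq 0$, yielding a strict inequality. The main obstacle is not any deep argument but bookkeeping: keeping the four $J$-terms organized so that the entropy and $\mu_h^{\alpha}$-dependence of $\rho_h^{\alpha}$ cleanly cancel, and justifying the $\alpha$-integration so that the weakly monotone inequality can be invoked stagewise.
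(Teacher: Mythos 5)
Your proof is correct and follows essentially the same route as the paper's: cancel the entropy terms using the independence of the transition kernel from the aggregate, rewrite the four cumulative rewards via the state--action occupancies $\rho_h^{\alpha}(s,a)=\mu_h^{\alpha}(s)\pi_h^{\alpha}(a\,|\,s)$, and apply the weakly monotone condition stagewise in $h$ after integrating over $\alpha$. Your explicit treatment of the strict case (via the first stage $h^{\star}$ at which the occupancies differ) is actually more detailed than the paper's, which only writes out the non-strict inequality.
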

Proposition~\ref{prop:mono} shows that if we have two policies, we can improve the cumulative rewards on the \ac{mdp} induced by these policies by swapping the policies. This implies an important property of the \ac{ne} $(\pi^{*,\calI},\mu^{*,\calI})$. Since $\pi^{*,\calI}$ is optimal on the \ac{mdp} induced by $\mu^{*,\calI}$, we have $\int_{0}^{1}J^{\lambda,\alpha}(\pi^{*,\alpha},\mu^{*,\calI})\, \rmd\alpha\geq \int_{0}^{1}J^{\lambda,\alpha}(\pi^{\alpha},\mu^{*,\calI})\, \rmd\alpha$ for any $\pi^{\calI}\in\tilde{\Pi}$. Then Proposition~\ref{prop:mono} shows that $\int_{0}^{1}J^{\lambda,\alpha}(\pi^{*,\alpha},\mu^{\calI})\, \rmd\alpha\geq \int_{0}^{1}J^{\lambda,\alpha}(\pi^{\alpha},\mu^{\calI})\, \rmd\alpha$ for any policy $\pi^{\calI}$ and the distribution flow $\mu^{\calI}$ it induces. This means that the \ac{ne} policy gains cumulative rewards not less than any policy $\pi^{\calI}$ on the \ac{mdp} induced by $\pi^{\calI}$. This motivates the design of our \ac{ne} learning algorithm.

\subsection{Policy Mirror Descent Algorithm for Monotone \ac{gmfg}}\label{sec:algodesign}
In this section, we introduce the algorithm to learn the \ac{ne}, which is called \ac{mgmfgpmd} and whose pseudo-code is outlined in  Algorithm~\ref{algo:pmd}. It consists of  three main steps. The first step estimates the action-value function (Line~3). We need to evaluate the action-value function of a policy on the \ac{mdp} induced by itself. This estimate can be obtained for each player independently by playing the $\pi_{t}^{\calI}$ several times. We assume  access to a sub-module for this and quantify the estimation error in our analysis. The second step is the policy mirror descent (Line~4). Given $\lambda\eta_{t}<1$, This step can be equivalently formulated as
\begin{align*}
    \hpi_{t+1,h}^{\alpha}(\cdot\,|\,s)=\argmax_{p\in\Delta(\calA)}\frac{\eta_{t}}{1-\lambda\eta_{t}}\Big[\big\langle \hatQ_{h}^{\lambda,\alpha}(s,\cdot,\pi_{t}^{\alpha},\mu_{t}^{\calI}),p \big\rangle-\lambda R(p)\Big]-\kl\big(p\|\pi_{t,h}^{\alpha}(\cdot\,|\,s)\big)\quad \forall\, s\in\calS,
\end{align*}
where $R(p) = \langle p , \log p\rangle$ is the negative entropy function. This step aims to improve the performance of the policy $\pi_{t}^{\calI}$ on its own induced \ac{mdp}. Intuitively, since the policy $\pi^{*,\calI}$ in \ac{ne} performs better than $\pi_{t}^{\calI}$ on the \ac{mdp} induced by $\mu_{t}^{\calI}$ as shown in Section~\ref{sec:mono}, the improved policy $\pi_{t+1}^{\calI}$ should be closer to $\pi^{*,\calI}$ than $\pi_{t}^{\calI}$. The third step mixes the current policy with the uniform policy (Line~5) to encourage exploration.

\ac{mgmfgpmd}  is different from previous \ac{ne} learning algorithms for monotone \ac{gmfg} in \cite{perolat2021scaling,fabian2022learning} in three different ways. First, \ac{mgmfgpmd}  is designed to learn the \ac{ne} of the $\lambda$-regularized \ac{gmfg} with $\lambda>0$, whereas other algorithms learn the \ac{ne} of the {\em unregularized} \ac{gmfg}s. As a result, our policy improvement (Line~4) discounts the previous policy as $(\pi_{t,h}^{\alpha})^{1-\lambda\eta_{t}}$, but other algorithms retain $\pi_{t,h}^{\alpha}$. Second, our algorithm is discrete-time and thus amenable  for  practical implementation. However, other provably efficient algorithms evolve in continuous time. Finally, \ac{mgmfgpmd}  encourages  exploration in Line 5, which is important for the theoretical analysis. Such a step is missing in other algorithms.

\begin{algorithm}[t]
	\caption{\ac{mgmfgpmd}}
	\textbf{Procedure:}
	\begin{algorithmic}[1]\label{algo:pmd}
        \STATE Initialize $\pi_{1,h}^{\alpha}(\cdot\,|\,s)=\unif(\calA)$ for all $s\in\calS$,$h\in[H]$ and $\alpha\in\calI$.
        \FOR{$t=1,2,\cdots,T$}
            \STATE Compute the action-value function $\hatQ_{h}^{\lambda,\alpha}(s,a,\pi_{t}^{\alpha},\mu_{t}^{\calI})$ for all $\alpha\in\calI$ and $h\in[H]$, where $\mu_{t}^{\calI}$ is the distribution flow induced by $\pi_{t}^{\calI}$.
            \STATE $\hpi_{t+1,h}^{\alpha}(\cdot\,|\,s)\propto \big(\pi_{t,h}^{\alpha}(\cdot\,|\,s)\big)^{1-\lambda\eta_{t}}\exp\big(\eta_{t} \hatQ_{h}^{\lambda,\alpha}(s,a,\pi_{t}^{\alpha},\mu_{t}^{\calI})\big)$ for all $\alpha\in\calI$ and $h\in[H]$
            \STATE $\pi_{t+1,h}^{\alpha}(\cdot\,|\,s)=(1-\beta_{t})\hpi_{t+1,h}^{\alpha}(\cdot\,|\,s)+\beta_{t}\unif(\calA)$
        \ENDFOR
        \STATE Output $\bar{\pi}^{\calI}=\unif\big(\pi_{[1:T]}^{\calI}\big)$
	\end{algorithmic}
\end{algorithm}
\vspace{-2mm}
\subsection{Theoretical Analysis for \ac{mgmfgpmd} with Estimation Oracle}
This section provides  theoretical analysis for the \ac{mgmfgpmd} algorithm given an action-value function oracle in Line~3, i.e., $\hatQ_{h}^{\lambda,\alpha}=Q_{h}^{\lambda,\alpha}$. We denote the unique \ac{ne} of the $\lambda$-regularized \ac{gmfg} as $(\pi^{*,\calI},\mu^{*,\calI})$. For any policy $\pi^{\calI}$, we measure the distance to the policy $\pi^{*,\calI}$ of \ac{ne} using
\begin{align*}D(\pi^{\calI})=\int_{0}^{1}\sum_{h=1}^{H}\bbE_{\mu_{h}^{*,\alpha}}\Big[\kl\big(\pi_{h}^{*,\alpha}(\cdot\,|\,s_{h}^{\alpha})\|\pi_{h}^{\alpha}(\cdot\,|\,s_{h}^{\alpha})\big)\Big]\,\rmd\alpha.
\end{align*}
%\vspace{-1mm}
This metric measures the weighted KL divergence between policy $\pi^{\calI}$ and the \ac{ne} policy, and the weights are the \ac{ne} distribution flow $\mu^{*,\calI}$.
%Then our result for the \ac{mgmfgpmd} is as follows.
\begin{theorem}\label{thm:monoopt}
    Assume that the \ac{gmfg} is strictly weakly monotone and we have an action-value function oracle. Let  
     $\eta_{t}=\eta=O(T^{-1/2})$ and $\beta_{t}=\beta=O(T^{-1})$ in \ac{mgmfgpmd}. For any $\lambda>0$ we have
     \vspace{-2mm}
    \begin{align*}
        D\Bigg(\frac{1}{T}\sum_{t=1}^{T}\pi_{t}^{\calI}\Bigg)=O\bigg(\frac{\lambda\log^{2} T}{T^{1/2}}\bigg).
    \end{align*}
\end{theorem}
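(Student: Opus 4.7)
The plan is to carry out a policy-mirror-descent three-point analysis with the (unique, by Proposition~\ref{prop:uniquene}) \ac{ne} policy $\pi^{*,\calI}$ as the reference, invoke Proposition~\ref{prop:mono} to argue that the performance-difference term that arises is non-negative, telescope the resulting KL recursion, and conclude with Jensen's inequality for the averaged policy $\bar{\pi}^{\calI}$. At each $(s,h,\alpha)$ the update $\hpi_{t+1,h}^{\alpha}(\cdot\,|\,s)$ is the unique minimizer over $\Delta(\calA)$ of the convex functional $F(p)=-\eta_t\langle \hatQ_{h}^{\lambda,\alpha},p\rangle+\lambda\eta_t R(p)+(1-\lambda\eta_t)\kl\big(p\|\pi_{t,h}^{\alpha}(\cdot\,|\,s)\big)$. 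A direct computation shows that the Bregman divergence of the regularizer $p\mapsto \lambda\eta_t R(p)+(1-\lambda\eta_t)\kl\big(p\|\pi_{t,h}^{\alpha}\big)$ is exactly $\kl(\cdot\|\cdot)$, so the Bregman three-point identity with comparator $p^{*}=\pi_{h}^{*,\alpha}(\cdot\,|\,s)$ yields
\[
\kl\big(\pi_{h}^{*,\alpha}\|\hpi_{t+1,h}^\alpha\big)+(1-\lambda\eta_t)\kl\big(\hpi_{t+1,h}^\alpha\|\pi_{t,h}^\alpha\big)+\eta_t\Delta_{t,h,\alpha}(s)\le (1-\lambda\eta_t)\kl\big(\pi_{h}^{*,\alpha}\|\pi_{t,h}^\alpha\big),
\]
where $\Delta_{t,h,\alpha}(s)=\langle \hatQ_{h}^{\lambda,\alpha},\pi_{h}^{*,\alpha}-\hpi_{t+1,h}^\alpha\rangle+\lambda\big(R(\hpi_{t+1,h}^\alpha)-R(\pi_{h}^{*,\alpha})\big)$.

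I would then take expectation over $s\sim \mu_{h}^{*,\alpha}$, integrate over $\alpha\in[0,1]$, and sum over $h\in[H]$. Splitting $\pi_{h}^{*,\alpha}-\hpi_{t+1,h}^\alpha=(\pi_{h}^{*,\alpha}-\pi_{t,h}^\alpha)+(\pi_{t,h}^\alpha-\hpi_{t+1,h}^\alpha)$ (and likewise for $R$) decomposes the integrated $\Delta_t$ into a performance-difference piece and a slippage piece. Because transition kernels do not depend on the aggregate in the monotone regime, the state marginal of $\pi^{*,\alpha}$ on the \ac{mdp} induced by $\mu_t^{\calI}$ equals $\mu_{h}^{*,\alpha}$, so the regularized performance-difference lemma identifies the first piece with $J^{\lambda,\alpha}(\pi^{*,\alpha},\mu_t^{\calI})-J^{\lambda,\alpha}(\pi_t^{\alpha},\mu_t^{\calI})$. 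Proposition~\ref{prop:mono} applied with $(\pi,\tilde{\pi})=(\pi_t^{\calI},\pi^{*,\calI})$, combined with the \ac{ne}-optimality inequality $J^{\lambda,\alpha}(\pi^{*,\alpha},\mu^{*,\calI})\ge J^{\lambda,\alpha}(\pi_t^{\alpha},\mu^{*,\calI})$, then gives $\int_0^1\!\big[J^{\lambda,\alpha}(\pi^{*,\alpha},\mu_t^{\calI})-J^{\lambda,\alpha}(\pi_t^{\alpha},\mu_t^{\calI})\big]\,\rmd\alpha\ge 0$, so this performance-difference contribution is non-negative and can be dropped in the ensuing upper bound on $D(\hpi_{t+1}^{\calI})$.

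The slippage piece is controlled with Pinsker's inequality, the boundedness estimate $\|\hatQ_{h}^{\lambda,\alpha}\|_\infty\le H(1+\lambda\log|\calA|)$, and the fact that $R$ is $O(\log(|\calA|/\beta_{t-1}))=O(\log T)$-Lipschitz in $\|\cdot\|_1$ on $\{p:p\ge\beta_{t-1}/|\calA|\}$; both $\pi_{t,h}^\alpha$ and $\hpi_{t+1,h}^\alpha$ lie in this set thanks to the uniform-mixing step at iteration $t-1$. Cauchy--Schwarz over $(s,h,\alpha)$ and a Young absorption move $\eta_t$ times the integrated slippage inside $\tfrac{1-\lambda\eta_t}{2}\int_0^1\sum_h\bbE_{\mu_h^{*,\alpha}}\big[\kl(\hpi_{t+1,h}^\alpha\|\pi_{t,h}^\alpha)\big]\,\rmd\alpha$ up to a residual of order $O(\eta_t^2(1+\lambda\log T)^2)$. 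For the mixing step itself, the pointwise bound $-\log\pi_{t+1,h}^\alpha\le -\log\hpi_{t+1,h}^\alpha+\beta_t/(1-\beta_t)$ yields $D(\pi_{t+1}^{\calI})\le D(\hpi_{t+1}^{\calI})+O(H\beta_t)$. Combining everything produces the recursion $D(\pi_{t+1}^{\calI})\le (1-\lambda\eta_t)D(\pi_t^{\calI})+O(\eta_t^2(1+\lambda\log T)^2)+O(\beta_t)$. Rewriting as $\lambda\eta_t D(\pi_t^{\calI})\le D(\pi_t^{\calI})-D(\pi_{t+1}^{\calI})+O(\eta_t^2\log^2 T+\beta_t)$ and telescoping over $t=1,\ldots,T$ with $\eta_t\equiv\eta=\Theta(T^{-1/2})$ and $\beta_t\equiv\beta=\Theta(T^{-1})$ gives $\frac{1}{T}\sum_{t=1}^T D(\pi_t^{\calI})=O(\lambda\log^2 T/\sqrt{T})$, and convexity of $\kl(\cdot\|\cdot)$ in its second argument delivers $D(\bar{\pi}^{\calI})\le\tfrac{1}{T}\sum_t D(\pi_t^{\calI})$, finishing the argument.

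The delicate step is orchestrating three interacting error sources. The slippage bound genuinely requires the iterates to be bounded away from zero, which is precisely why the uniform-mixing step in Line~5 is needed; yet that mixing inflates $D$ by $O(\beta_t)$ per iteration, so $\beta_t=\Theta(1/T)$ must be fast enough that the cumulative inflation is subdominant, while the associated Lipschitz constant $O(\log T)$ of the entropy regularizer $R$ is precisely what produces the $\log^2 T$ factor in the final rate. Proposition~\ref{prop:mono} is indispensable at the step where the performance-difference term is dropped: without monotonicity that quantity could have either sign, and the three-point recursion on its own would not imply convergence.
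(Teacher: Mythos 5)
Your proposal is correct and follows essentially the same route as the paper's proof: a one-step mirror-descent inequality against the NE policy, the regularized performance-difference lemma to identify the linear term with $J^{\lambda,\alpha}(\pi_t^{\alpha},\mu_t^{\calI})-J^{\lambda,\alpha}(\pi^{*,\alpha},\mu_t^{\calI})$, Proposition~\ref{prop:mono} plus NE optimality to sign that term, the $\beta/(1-\beta)$ accounting for the uniform-mixing step, telescoping, and convexity of the KL divergence. The only difference is presentational: you derive the one-step bound from the Bregman three-point identity with an explicit slippage/Young absorption, whereas the paper imports it as Lemma~\ref{lem:mdupdate} (and correspondingly carries the $-\lambda\eta D(\pi_t)$ contraction through the performance-difference lemma rather than through the $(1-\lambda\eta_t)$ prefactor), which yields the identical recursion and rate.
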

%\vspace{-4mm}
Theorem~\ref{thm:monoopt} provides the first convergence rate result for a discrete-time algorithm on strictly weakly monotone \ac{gmfg}s under mild assumptions. In contrast, \citet{perolat2021scaling,fabian2022learning} only consider the continuous-time policy mirror descent, which is difficult for the practical implementation, and only provide  the asymptotic consistency results. \citet{geist2021concave} derive exploitability results for a fictitious play algorithm but require the potential structure and the Lipschitzness of the reward function. Our proof for Theorem~\ref{thm:monoopt} mainly exploits  properties of \ac{ne} discussed in Section~\ref{sec:mono}. Concretely, we use the fact that  the policy mirror descent procedure reduces the distance between the policy iterate and the \ac{ne} as $D(\pi_{t+1}^{\calI})-D(\pi_{t}^{\calI})\approx \int_{0}^{1} J^{\lambda,\alpha}(\pi_{t}^{\alpha},\mu_{t}^{\calI})-J^{\lambda,\alpha}(\pi^{*,\alpha},\mu_{t}^{\calI})\, \rmd\alpha$. Thus, the policy iterate becomes  closer to the \ac{ne}. However, the discretization error and the exploration influence (Line 5) also appear, requiring additional care to show that $D(\pi_{t}^{\calI})$, in general, decreases. %The details are in the appendix.

\subsection{Theoretical Analysis for \ac{mgmfgpmd} with General Function Classes}
In this section, we remove the requirement that one is given oracle access to an action-value function and we estimate it in Line 3 of \ac{mgmfgpmd}  using general function classes. We consider the action-value function class $\calF=\calF_{1}\times\cdots\times\calF_{H}$, where $\calF_{h}\subseteq\{f:\calS\times\calA\rightarrow[0,(H-h+1)(1+\lambda\log|\calA|)]\}$ is the class of action-value functions at time $h\in[H]$. Then we estimate the action-value functions using Algorithm~\ref{algo:est}. 

\begin{algorithm}[t]
	\caption{Estimation of Action-value Function}
	\textbf{Procedure:}
	\begin{algorithmic}[1]\label{algo:est}
        \STATE Sample $N$ players $\{i/N\}_{i=1}^{N}\subseteq[0,1]$.
        \STATE The $i^{\rm th}$ player implements $\pi_{t}^{\rmb,i}$ for $i\in[N]$, and the other players implement $\pi_{t}^{\calI}$.
        \STATE Collect data $\{(s_{\tau,h}^{i},a_{\tau,h}^{i},r_{\tau,h}^{i})\}_{i,\tau,h=1}^{N,K,H}$ of sampled players from $K$ independent episodes.
        \STATE Initialize $\hatV_{H+1}^{\lambda,i}(s,a)=0$ for all $s\in\calS$, $a\in\calA$ and $i\in[N]$.
        \FOR{ time $h=H,\cdots,1$}
            \FOR{ Player $i=1,\cdots,N$(in parallel) }
                \STATE $\hatQ_{h}^{\lambda,i}=\argmin_{f\in\calF_{h}}\sum_{\tau=1}^{K}\big(f(s_{\tau,h}^{i},a_{\tau,h}^{i})-r_{\tau,h}^{i}-\hatV_{h+1}^{\lambda,i}(s_{\tau,h+1}^{i})\big)^{2}$.
                \STATE $\hatV_{h}^{\lambda,i}(s)=\langle \hatQ_{h}^{\lambda,i}(s,\cdot),\pi_{t,h}^{i/N}(\cdot,s) \rangle-\lambda R\big(\pi_{t,h}^{i/N}(\cdot,s)\big)$.
            \ENDFOR
        \ENDFOR
        \STATE Output $\{\hatQ_{h}^{\lambda,i}\}_{i,h=1}^{N,H}$.
	\end{algorithmic}
\end{algorithm}
This algorithm mainly involves two steps. The first is involves data collection (Line~3). Here we assign policies to players and collect data from their interactions.  We let the sampled $N$ players implement behavior policies $\{\pi_{t}^{\rmb,i}\}_{i=1}^{N}$, which can be different from $\{\pi_{t}^{i/N}\}_{i=1}^{N}$. This will not change the aggregate $z_{h}^{\alpha}(\mu_{h}^{\calI},W_{h})$ for any $\alpha\in\calI$, since only a zero-measure set of players change their policies. The second is the action-value function estimation (Lines~7 and~8). The action-value function is selected based on the previous value function, and the value function is updated from the derived estimate. This can be implemented in parallel for all  players. We highlight that the estimation analysis cannot leverage results from  general non-parametric regression~\citep{wainwright2019high}, since the response variable $\hatV_{h+1}^{\lambda,i}$ is \emph{not} independent of $s_{\tau,h+1}^{i}$ in our setting. %To derive the theoretical guarantee for our estimates, we first state an assumption.
\begin{assumption}[Realizability]\label{assump:real}
    For any policy $\pi^{\calI}\in\tilde{\Pi}$ and the induced distribution flow $\mu^{\calI}\in\tilde{\Delta}$, we have $Q_{h}^{\lambda,\alpha}(\cdot,\cdot,\pi^{\alpha},\mu^{\calI})\in\calF_{h}$ for $h\in[H]$.
\end{assumption}
This assumption ensures that we can find the nominal action-value function in the function class. For a policy $\pi\in\Pi^{H}$ and a function $f:\calS\times\calA\rightarrow\bbR$, we define the operator $(\calT_{h}^{\pi}f)(s,a)=\bbE_{s^{\prime}\sim P_{h}(\cdot|s,a)}[\langle f(s^{\prime},\cdot),\pi_{h+1}(\cdot|s^{\prime})\rangle-\lambda R(\pi_{h+1}(\cdot|s^{\prime}))]$. For a policy $\pi^{\calI}$ and the induced distribution flow $\mu^{\calI}$, we have $Q_{h}^{\lambda,\alpha}(s,a,\pi^{\alpha},\mu^{\calI})=r_{h}(s,a,z_{h}^{\alpha})+(\calT_{h}^{\pi^{\alpha}}Q_{h+1}^{\lambda,\alpha})(s,a)$.
\begin{assumption}[Completeness]\label{assump:complete}
    For any policy $\pi^{\calI}\in\tilde{\Pi}$ and the induced distribution flow $\mu^{\calI}\in\tilde{\Delta}$, we have that for all $f\in\calF_{h+1}$, $r_{h}(\cdot,\cdot,z_{h}^{\alpha}(\mu_{h}^{\calI},W_{h}))+(\calT_{h}^{\pi^{\alpha}}f)\in\calF_{h}$ for all $\alpha\in\calI$, $h\in[H-1]$.
\end{assumption}
This completeness assumption ensures that the estimates from $\calF$ also satisfy the relationship between nominal action-value functions through $\calT_{h}^{\pi^{\alpha}}$. These realizability and completeness assumptions are widely adopted in the off-policy evaluation and offline reinforcement learning literature~\citep{uehara2022review,xie2021bellman}. 
\begin{assumption}\label{assump:rlip}
    The reward functions $\{r_{h}\}_{h=1}^{H}$ are Lipschitz in $z$, i.e., $|r_{h}(s,a,z)-r_{h}(s,a,z^{\prime})|\leq L_{r}\|z-z^{\prime}\|_{1}$ for all $s\in\calS,a\in\calA,h\in[H]$. The graphons $W_{h}$ for $h\in[H]$ are Lipschitz continuous functions, i.e., there exists a constant $L=L_{W}>0$ (depending only on $W=\{W_{h}\}_{h=1}^{H}$) such that $|W_{h}(\alpha,\beta)-W_{h}(\alpha^{\prime},\beta^{\prime})|\leq L_{W}(|\alpha-\alpha^{\prime}|+|\beta-\beta^{\prime}|)$ for all $\alpha,\alpha^{\prime},\beta,\beta^{\prime}\in[0,1]$, and $h\in[H]$.
\end{assumption}
The Lipschitzness assumption is common in the \ac{gmfg} works~\citep{parise2019graphon,carmona2022stochastic,cui2021learning}. It helps us to approximate the action-value function of a player by that of sampled players. We denote the state distributions of player $i$ induced by policy $\pi_{t}^{\rmb,i}$ as $\mu_{t}^{\rmb,i}$. Then we require the behavior policies $\{\pi_{t}^{\rmb,i}\}_{i=1}^{N}$ to satisfy the following requirements.
\begin{assumption}\label{assump:concen}
    For any $t\in[T]$, the behavior policies   explore sufficiently. More precisely,  for any policy $\pi\in\Pi^{H}$ and   induced distributions $\mu\in\Delta(\calS)^{H}$, we have $\sup_{s\in\calS,a\in\calA}\pi_{h}(a\,|\,s)/\pi_{t,h}^{\rmb,i}(a\,|\,s)\leq C_{1}$ and $\sup_{s\in\calS}\rmd\mu_{h}(s)/\rmd\mu_{t,h}^{\rmb,i}(s)\leq C_{2}$ for $h\in[H]$ and $i\in[N]$ where  $C_{1},C_{2}>0$ are   constants.
\end{assumption}
This assumption guarantees that the behavior policies explore the actions that may be adopted by $\pi_{t}^{\calI}$ and $\pi^{*,\calI}$. Such an assumption is widely adopted in offline reinforcement learning and off-policy evaluation works~\citep{uehara2022review,xie2021bellman}. 
\begin{theorem}\label{thm:optest}
    Assume that  the \ac{gmfg} is weakly monotone and that Assumptions~\ref{assump:real}, \ref{assump:complete}, \ref{assump:rlip}, and \ref{assump:concen} hold. Let  $\eta_{t}=\eta=O(T^{-1/2})$ and $\beta_{t}=\beta=O(T^{-1})$ in Algorithm~\ref{algo:pmd} (\ac{mgmfgpmd}). Then with probability at least $1-\delta$, Algorithms~\ref{algo:pmd} and \ref{algo:est} yield
    \begin{align*}
        D\Bigg(\frac{1}{T}\sum_{t=1}^{T}\pi_{t}^{\calI}\Bigg)=O\bigg(\frac{\lambda\log^{2} T}{\sqrt{T}}+C_{1}C_{2}\frac{H^{3/2}B_{H}^{2}}{\lambda\sqrt{K}}\log\frac{TNH\cdot\calN_{\infty}(5B_{H}/K,\calF_{[H]})}{\delta}+\frac{ H\log T}{N}\bigg),
    \end{align*}
     where $B_{H}=H(1+\lambda\log|\calA|)$, and $\calN_{\infty}(5B_{H}/K,\calF_{[H]})=\max_{h\in[H]}\calN_{\infty}(5B_{H}/K,\calF_{h})$ is the $\ell_{\infty}$-covering number of the function class.
\end{theorem}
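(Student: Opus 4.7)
The plan is to build on the oracle analysis that yields Theorem~\ref{thm:monoopt} and to bound the two new sources of error separately: the statistical error from fitted Q-iteration in Algorithm~\ref{algo:est} (the $\tilO(1/\sqrt{K})$ term) and the approximation error from estimating $Q$ only at the $N$ sampled points $\{i/N\}_{i=1}^{N}$ (the $\tilO(1/N)$ term). The first task is to revisit the one-step descent inequality behind Theorem~\ref{thm:monoopt}: the PMD update together with the weakly monotone condition yields, up to discretization and mixing overhead,
\begin{align*}
D(\pi_{t+1}^{\calI}) - D(\pi_{t}^{\calI}) \le -\eta\int_{0}^{1}\big(J^{\lambda,\alpha}(\pi^{*,\alpha},\mu_{t}^{\calI}) - J^{\lambda,\alpha}(\pi_{t}^{\alpha},\mu_{t}^{\calI})\big)\,\rmd\alpha + \varepsilon_{t},
\end{align*}
where $\varepsilon_{t}$ collects the $\eta^{2}$ and $\beta$ terms. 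When $\hatQ$ replaces $Q$ in Line~4, first-order optimality of the mirror step introduces an additional additive error that I would reduce to a weighted $L^{1}$-gap $|\hatQ_{h}^{\lambda,\alpha}-Q_{h}^{\lambda,\alpha}|$ evaluated under the distributions associated with $\pi_{t}^{\calI}$ and $\pi^{*,\calI}$. Telescoping and averaging produce
\begin{align*}
D(\bar\pi^{\calI}) \lesssim \frac{\lambda\log^{2} T}{\sqrt{T}} + \frac{1}{\lambda T}\sum_{t=1}^{T}\sum_{h=1}^{H}\int_{0}^{1}\Delta_{t,h}(\alpha)\,\rmd\alpha,
\end{align*}
where $\Delta_{t,h}(\alpha)$ denotes the Q-estimation error for player $\alpha$ at iteration $t$, and the $1/\lambda$ factor arises from the strong convexity of the mirror map (whose constant scales with $\lambda$).

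Next I would bound $\Delta_{t,h}(\alpha)$ by splitting it into a \emph{generalization} part for sampled players $i\in[N]$ and an \emph{agent-approximation} part for $\alpha$ off the sample grid. For the generalization part I would analyze Algorithm~\ref{algo:est} as a fitted Q-iteration with general function classes. Realizability (Assumption~\ref{assump:real}) and completeness (Assumption~\ref{assump:complete}) guarantee that $r_{h}(\cdot,\cdot,z_{h}^{i/N})+\calT_{h}^{\pi_{t}^{i/N}}\hatQ_{h+1}^{\lambda,i}\in\calF_{h}$, so the empirical Bellman residual at layer $h$ concentrates around its population counterpart; the challenge that the regression target $\hatV_{h+1}^{\lambda,i}(s_{\tau,h+1}^{i})$ is data-dependent is addressed by discretizing $\calF_{h+1}$ via its $\ell_{\infty}$-cover and taking a union bound over the cover, producing the $\log\calN_{\infty}(5B_{H}/K,\calF_{[H]})$ factor. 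This yields $\bbE_{\mu_{t,h}^{\rmb,i}}[(\hatQ_{h}^{\lambda,i}-Q_{h}^{\lambda,i})^{2}]=\tilO(B_{H}^{2}/K)$ per layer, after which the concentrability constants $C_{1},C_{2}$ of Assumption~\ref{assump:concen} convert this behavior-policy error into an error under the distributions appearing in $D$, and a Bellman-recursion aggregation over $h$ yields $\tilO(C_{1}C_{2}H^{3/2}B_{H}^{2}/(\lambda\sqrt{K}))$.

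For the agent-approximation part, Algorithm~\ref{algo:est} computes $\hatQ$ only for the nearest sampled player $i^{\star}(\alpha)/N$ with $|\alpha-i^{\star}(\alpha)/N|\le 1/N$. Graphon and reward Lipschitzness (Assumption~\ref{assump:rlip}) yield $\|z_{h}^{\alpha}(\mu_{h}^{\calI},W_{h})-z_{h}^{i^{\star}(\alpha)/N}(\mu_{h}^{\calI},W_{h})\|_{1}=O(L_{W}/N)$; propagating this through the Bellman recursion gives $\sup_{s,a}|Q_{h}^{\lambda,\alpha}(s,a,\pi^{\alpha},\mu^{\calI})-Q_{h}^{\lambda,i^{\star}(\alpha)/N}(s,a,\pi^{i^{\star}(\alpha)/N},\mu^{\calI})|=O(HL_{r}L_{W}/N)$, since Algorithm~\ref{algo:pmd} assigns $\pi^{\alpha}=\pi^{i^{\star}(\alpha)/N}$ whenever $\alpha$ lies in the same grid cell. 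Integrating this into the telescoped descent inequality contributes the $\tilO(H/N)$ term. The hard part will be Step~2: controlling the fitted Q-iteration error with data-dependent regression targets under only $\ell_{\infty}$-concentrability, and then transferring the behavior-policy bound into one under both $\mu_{t}^{\calI}$ and $\mu^{*,\calI}$ with the correct $\lambda$-dependence so that it couples cleanly with the oracle descent inequality inherited from Theorem~\ref{thm:monoopt}.
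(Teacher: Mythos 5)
Your proposal follows essentially the same route as the paper's proof: the same descent inequality from Theorem~\ref{thm:monoopt} with an additive $\hatQ-Q$ error term entering through a $1/\lambda$ factor, the same split of that error into a fitted-Q generalization part (handled by covering $\calF_{h+1}$ to deal with the data-dependent regression targets, exactly as in the paper's Proposition~\ref{prop:gene}, then transferred via the concentrability constants $C_{1},C_{2}$) and an agent-discretization part controlled by graphon/reward Lipschitzness. The only deviations are minor bookkeeping (the paper's per-layer squared-error bound carries $B_{H}^{4}/K$ rather than $B_{H}^{2}/K$, and the $\log T$ in the $H\log T/N$ term comes from the entropy term's Lipschitz constant on policies bounded below by $\beta/|\calA|$), neither of which changes the argument.
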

\vspace{-2mm}
The error in Theorem~\ref{thm:optest} consists of both the optimization and estimation errors. The optimization error corresponds to the first term, which also appears in Theorem~\ref{thm:monoopt}. The estimation error consists of the generalization error and the approximation error, in the second and third terms respectively.  When the function class $\calF$ is finite, this term scales as $O(K^{-1/2})$, which originates from the fact that we estimate the action-value function from the empirical error instead of its population counterpart. The approximation error scales as $O(N^{-1})$. This term originates from the fact that the action-value function of player $\alpha$ is approximated by that of the sampled player near $\alpha$. To learn a policy that is at most $\varepsilon>0$ far from the \ac{ne}, we can set $T=\tilO(\varepsilon^{-2})$, $K=\tilO(\varepsilon^{-2})$, and $N=\tilO(\varepsilon^{-1})$, which in total results in $TK=\tilO(\varepsilon^{-4})$ episodes of online plays.
\vspace{-2mm}
\section{Experiments}
\vspace{-2mm}
In this section, we conduct experiments to corroborate our theoretical findings. We run different algorithms on the Beach Bar problem~\citet{perrin2020fictitious,fabian2022learning}. The underlying graphons are set to \ac{sbm} and exp-graphons. The details of experiments are deferred to Appendix~\ref{app:experiment}. Since the \ac{ne}s of the games are not available, we adopt the \emph{exploitability} to measure the proximity between a policy and the \ac{ne}. For a policy $\pi^{\calI}$ and its induced distribution flow $\mu^{\calI}$, the exploitability for the $\lambda$-regularized \ac{gmfg} is defined as
\begin{align*}
    {\rm{Exploit}}(\pi^{\calI})=\int_{0}^{1}\max_{\tilde{\pi}\in\Pi^{H}}J^{\lambda,\alpha}(\tilde{\pi},\mu^{\calI})-J^{\lambda,\alpha}(\pi^{\alpha},\mu^{\calI})\rmd\alpha.
\end{align*}

\begin{figure}[t]
	\centering
	\subfigure[Beach Bar problem with \ac{sbm} graphons]{
	\begin{minipage}[t]{0.44\linewidth}
	\centering
	\includegraphics[width=2.3in]{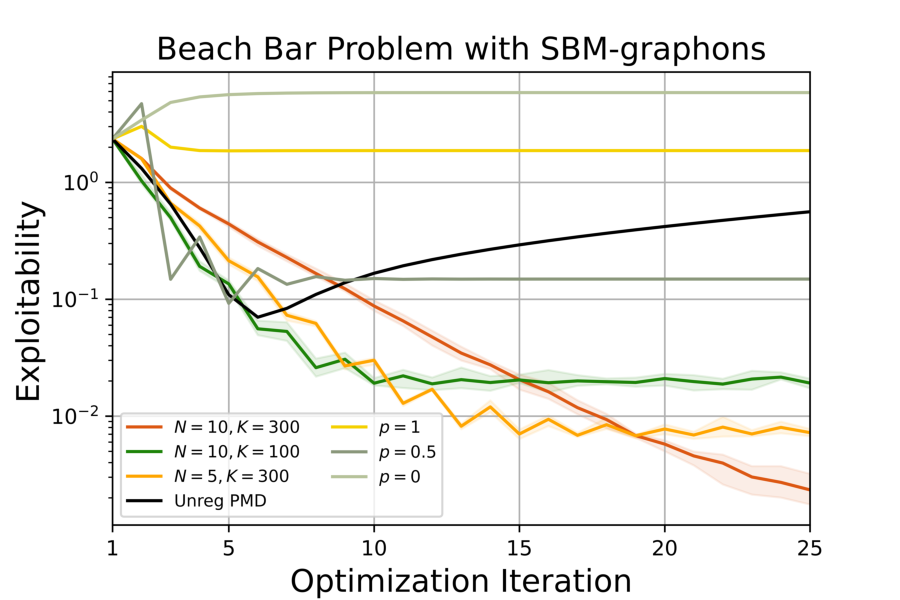}
	%\caption{fig2}
	\end{minipage}%
	\label{fig:simu_sbm}
	}%
	\hspace{1cm}
	\centering
	\subfigure[Beach Bar problem with exp-graphons.]{
	\begin{minipage}[t]{0.44\linewidth}
	\centering
	\includegraphics[width=2.3in]{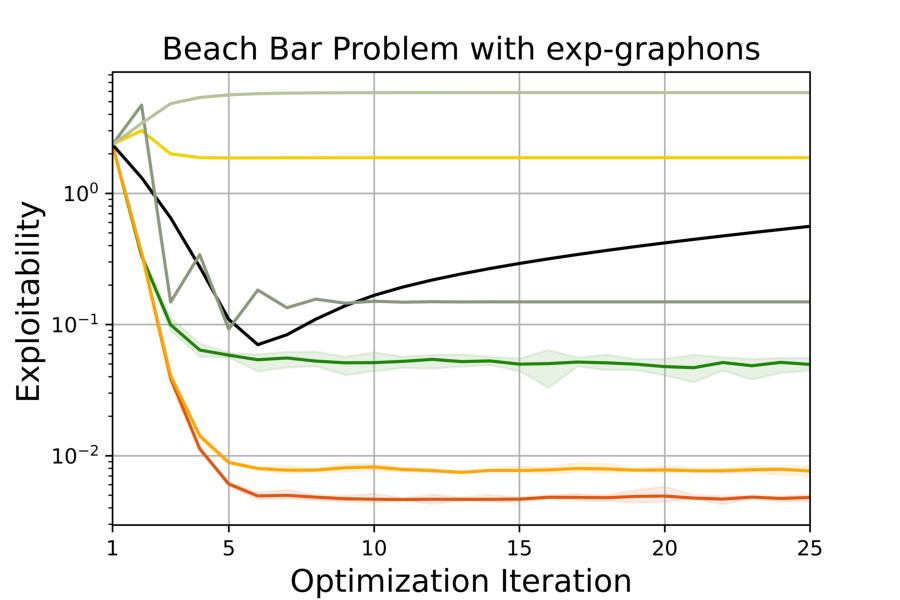}
	%\caption{fig2}
	\end{minipage}%
	\label{fig:simu_exp}
	}%
	\vspace{-0.3cm}
	\centering
	\caption{Simulation results for Beach Bar problem with \ac{sbm} and exp-graphons.}
	\label{fig:simu}
	\vspace{-7mm}
\end{figure}

First, the experimental results demonstrate the necessity of modelling the heterogeneity of agents. Figure~\ref{fig:simu} demonstrates the performance degradation of approximating \ac{gmfg} by \ac{mfg}. Here we let the agents play in the \ac{gmfg} with constant graphons $W_{h}(\alpha,\beta)=p$ for $p\in \{0,0.5,1\}$. The  agents have oracle access to the action-value function. We observe that  this  approximation results in gross errors for learning the \ac{ne}s of \ac{gmfg}s with non-constant graphons.

Second, the experiments show that the algorithms designed for unregularized \ac{gmfg} cannot learn the \ac{ne} of regularized \ac{gmfg}. We implement the discrete-time version of the algorithm in \citet{fabian2022learning}; results marked  ``Unreg PMD'' show that the exploitability first decreases and then increases. In line with the discussion in Section~\ref{sec:algodesign}, this originates from keeping too much gradient knowledge in previous iterates $\pi_{t}^{\calI}$. The gradient of the policy is largely correct in the several initial iterations, but a large amount of past knowledge results in it deviating in later iterations, since the past knowledge  accumulates. In contrast, our  algorithm {\em discounts} the past knowledge as $(\pi_{t}^{\calI})^{1-\lambda\eta_{t}}$.

Finally, the results indicate the influence of action-value function estimation. In the experiments, we run our algorithm when $N=5,K=300$, $N=10,K=100$, and $N=10,K=300$. Figure~\ref{fig:simu} shows that the algorithm with $N=10,K=300$ can achieve a smaller error than the algorithms both with $N=10,K=100$ and $N=5,K=300$. This is in agreement with  Theorem~\ref{thm:optest}.

\section{Conclusion}\label{app:conclusion}
In this paper, we focused on two fundamental problems of $\lambda$-regularized \ac{gmfg}. Firstly, we established the existence of \ac{ne}. This result greatly weakened the conditions in the previous works. Secondly, the provably efficient \ac{ne} learning algorithms were proposed and analyzed in the weakly monotone \ac{gmfg} motivated by \cite{lasry2007mean}. The convergence rate of \ac{mgmfgpmd} features the first performance guarantee of discrete-time algorithm without extra conditions in monotone \ac{gmfg}s. We leave the lower bound of this problem to the future works.

\textbf{Acknowledgements}
Fengzhuo Zhang and Vincent Tan acknowledge funding by the Singapore Data Science Consortium (SDSC) Dissertation Research Fellowship,  the Singapore Ministry of Education Academic Research Fund (AcRF) Tier 2 under grant number A-8000423-00-00, and AcRF Tier 1 under grant numbers A-8000980-00-00 and A-8000189-01-00. Zhaoran Wang acknowledges National Science Foundation (Awards 2048075, 2008827, 2015568, 1934931), Simons Institute (Theory of Reinforcement Learning), Amazon, J.P. Morgan, and Two Sigma for their support.

\bibliographystyle{plainnat}%abbrvnat%unsrt
\bibliography{ref}

\begin{thebibliography}{37}
\providecommand{\natexlab}[1]{#1}
\providecommand{\url}[1]{\texttt{#1}}
\expandafter\ifx\csname urlstyle\endcsname\relax
  \providecommand{\doi}[1]{doi: #1}\else
  \providecommand{\doi}{doi: \begingroup \urlstyle{rm}\Url}\fi

\bibitem[Achdou et~al.(2020)Achdou, Cardaliaguet, Delarue, Porretta,
  Santambrogio, Achdou, and Lauri{\`e}re]{achdou2020mean}
Y.~Achdou, P.~Cardaliaguet, F.~Delarue, A.~Porretta, F.~Santambrogio,
  Y.~Achdou, and M.~Lauri{\`e}re.
\newblock Mean field games and applications: Numerical aspects.
\newblock \emph{Mean Field Games: Cetraro, Italy 2019}, pages 249--307, 2020.

\bibitem[Anahtarci et~al.(2019)Anahtarci, Kariksiz, and
  Saldi]{anahtarci2019fitted}
B.~Anahtarci, C.~D. Kariksiz, and N.~Saldi.
\newblock Fitted q-learning in mean-field games.
\newblock \emph{arXiv preprint arXiv:1912.13309}, 2019.

\bibitem[Anahtarci et~al.(2022)Anahtarci, Kariksiz, and Saldi]{anahtarci2022q}
B.~Anahtarci, C.~D. Kariksiz, and N.~Saldi.
\newblock Q-learning in regularized mean-field games.
\newblock \emph{Dynamic Games and Applications}, pages 1--29, 2022.

\bibitem[Aurell et~al.(2022{\natexlab{a}})Aurell, Carmona, Dayan{\i}kl{\i}, and
  Lauri{\`e}re]{aurell2022finite}
A.~Aurell, R.~Carmona, G.~Dayan{\i}kl{\i}, and M.~Lauri{\`e}re.
\newblock Finite state graphon games with applications to epidemics.
\newblock \emph{Dynamic Games and Applications}, 12\penalty0 (1):\penalty0
  49--81, 2022{\natexlab{a}}.

\bibitem[Aurell et~al.(2022{\natexlab{b}})Aurell, Carmona, and
  Lauriere]{aurell2022stochastic}
A.~Aurell, R.~Carmona, and M.~Lauriere.
\newblock Stochastic graphon games: Ii. the linear-quadratic case.
\newblock \emph{Applied Mathematics \& Optimization}, 85\penalty0 (3):\penalty0
  1--33, 2022{\natexlab{b}}.

\bibitem[Bertsekas and Shreve(1996)]{bertsekas1996stochastic}
D.~Bertsekas and S.~E. Shreve.
\newblock \emph{Stochastic optimal control: the discrete-time case}, volume~5.
\newblock Athena Scientific, 1996.

\bibitem[Cai et~al.(2020)Cai, Yang, Jin, and Wang]{cai2020provably}
Q.~Cai, Z.~Yang, C.~Jin, and Z.~Wang.
\newblock Provably efficient exploration in policy optimization.
\newblock In \emph{International Conference on Machine Learning}, pages
  1283--1294. PMLR, 2020.

\bibitem[Caines and Huang(2019)]{caines2019graphon}
P.~E. Caines and M.~Huang.
\newblock Graphon mean field games and the gmfg equations: $\varepsilon$-nash
  equilibria.
\newblock In \emph{2019 IEEE 58th conference on decision and control (CDC)},
  pages 286--292. IEEE, 2019.

\bibitem[Carmona et~al.(2022)Carmona, Cooney, Graves, and
  Lauriere]{carmona2022stochastic}
R.~Carmona, D.~B. Cooney, C.~V. Graves, and M.~Lauriere.
\newblock Stochastic graphon games: I. the static case.
\newblock \emph{Mathematics of Operations Research}, 47\penalty0 (1):\penalty0
  750--778, 2022.

\bibitem[Cen et~al.(2022)Cen, Cheng, Chen, Wei, and Chi]{cen2022fast}
S.~Cen, C.~Cheng, Y.~Chen, Y.~Wei, and Y.~Chi.
\newblock Fast global convergence of natural policy gradient methods with
  entropy regularization.
\newblock \emph{Operations Research}, 70\penalty0 (4):\penalty0 2563--2578,
  2022.

\bibitem[Cousin et~al.(2011)Cousin, Cr{\'e}pey, Gu{\'e}ant, Hobson, Jeanblanc,
  Lasry, Laurent, Lions, Tankov, and Gu{\'e}ant]{cousin2011mean}
A.~Cousin, S.~Cr{\'e}pey, O.~Gu{\'e}ant, D.~Hobson, M.~Jeanblanc, J.~Lasry,
  J.~Laurent, P.~Lions, P.~Tankov, and O.~Gu{\'e}ant.
\newblock Mean field games and applications.
\newblock \emph{Paris-Princeton lectures on mathematical finance 2010}, pages
  205--266, 2011.

\bibitem[Cui and Koeppl(2021{\natexlab{a}})]{cui2021approximately}
K.~Cui and H.~Koeppl.
\newblock Approximately solving mean field games via entropy-regularized deep
  reinforcement learning.
\newblock In \emph{International Conference on Artificial Intelligence and
  Statistics}, pages 1909--1917. PMLR, 2021{\natexlab{a}}.

\bibitem[Cui and Koeppl(2021{\natexlab{b}})]{cui2021learning}
Kai Cui and Heinz Koeppl.
\newblock Learning graphon mean field games and approximate nash equilibria.
\newblock In \emph{International Conference on Learning Representations},
  2021{\natexlab{b}}.

\bibitem[Fabian et~al.(2022)Fabian, Cui, and Koeppl]{fabian2022learning}
C.~Fabian, .~Cui, and H.~Koeppl.
\newblock Learning sparse graphon mean field games.
\newblock \emph{arXiv preprint arXiv:2209.03880}, 2022.

\bibitem[Geist et~al.(2019)Geist, Scherrer, and Pietquin]{geist2019theory}
M.~Geist, B.~Scherrer, and O.~Pietquin.
\newblock A theory of regularized markov decision processes.
\newblock In \emph{International Conference on Machine Learning}, pages
  2160--2169. PMLR, 2019.

\bibitem[Geist et~al.(2021)Geist, P{\'e}rolat, Lauri{\`e}re, Elie, Perrin,
  Bachem, Munos, and Pietquin]{geist2021concave}
M.~Geist, J.~P{\'e}rolat, M.~Lauri{\`e}re, R.~Elie, S.~Perrin, O.~Bachem,
  R.~Munos, and O.~Pietquin.
\newblock Concave utility reinforcement learning: the mean-field game
  viewpoint.
\newblock \emph{arXiv preprint arXiv:2106.03787}, 2021.

\bibitem[Guide(2006)]{guide2006infinite}
A.~H. Guide.
\newblock \emph{Infinite dimensional analysis}.
\newblock Springer, 2006.

\bibitem[Gy{\"o}rfi et~al.(2002)Gy{\"o}rfi, Kohler, Krzyzak, and
  Walk]{gyorfi2002distribution}
L.~Gy{\"o}rfi, M.~Kohler, A.~Krzyzak, and H.~Walk.
\newblock \emph{A distribution-free theory of nonparametric regression},
  volume~1.
\newblock Springer, 2002.

\bibitem[Hinderer(1970)]{hindererfoundations}
K.~Hinderer.
\newblock Foundations of non-stationary dynamic programming with discrete time
  parameter, 1970.

\bibitem[Huang et~al.(2006)Huang, Malham{\'e}, and Caines]{huang2006large}
M.~Huang, R.~P. Malham{\'e}, and P.~E. Caines.
\newblock Large population stochastic dynamic games: closed-loop mckean-vlasov
  systems and the nash certainty equivalence principle.
\newblock 2006.

\bibitem[Langen(1981)]{langen1981convergence}
H.-J. Langen.
\newblock Convergence of dynamic programming models.
\newblock \emph{Mathematics of Operations Research}, 6\penalty0 (4):\penalty0
  493--512, 1981.

\bibitem[Lasry and Lions(2007)]{lasry2007mean}
J.~Lasry and P.~Lions.
\newblock Mean field games.
\newblock \emph{Japanese journal of mathematics}, 2\penalty0 (1):\penalty0
  229--260, 2007.

\bibitem[Lowe et~al.(2017)Lowe, Wu, Tamar, Harb, Pieter~A., and
  Mordatch]{lowe2017multi}
R.~Lowe, Y.~I. Wu, A.~Tamar, J.~Harb, OpenAI Pieter~A., and I.~Mordatch.
\newblock Multi-agent actor-critic for mixed cooperative-competitive
  environments.
\newblock \emph{Advances in neural information processing systems}, 30, 2017.

\bibitem[Mordatch and Abbeel(2018)]{mordatch2018emergence}
I.~Mordatch and P.~Abbeel.
\newblock Emergence of grounded compositional language in multi-agent
  populations.
\newblock In \emph{Proceedings of the AAAI Conference on Artificial
  Intelligence}, volume~32, 2018.

\bibitem[Parise and Ozdaglar(2019)]{parise2019graphon}
Francesca Parise and Asuman Ozdaglar.
\newblock Graphon games.
\newblock In \emph{Proceedings of the 2019 ACM Conference on Economics and
  Computation}, pages 457--458, 2019.

\bibitem[Perolat et~al.(2021)Perolat, Perrin, Elie, Lauri{\`e}re, Piliouras,
  Geist, Tuyls, and Pietquin]{perolat2021scaling}
J.~Perolat, S.~Perrin, R.~Elie, M.~Lauri{\`e}re, G.~Piliouras, M.~Geist,
  K.~Tuyls, and O.~Pietquin.
\newblock Scaling up mean field games with online mirror descent.
\newblock \emph{arXiv preprint arXiv:2103.00623}, 2021.

\bibitem[Perrin et~al.(2020)Perrin, P{\'e}rolat, Lauri{\`e}re, Geist, Elie, and
  Pietquin]{perrin2020fictitious}
S.~Perrin, J.~P{\'e}rolat, M.~Lauri{\`e}re, M.~Geist, R.~Elie, and O.~Pietquin.
\newblock Fictitious play for mean field games: Continuous time analysis and
  applications.
\newblock \emph{Advances in Neural Information Processing Systems},
  33:\penalty0 13199--13213, 2020.

\bibitem[Shani et~al.(2020)Shani, Efroni, and Mannor]{shani2020adaptive}
L.~Shani, Y.~Efroni, and S.~Mannor.
\newblock Adaptive trust region policy optimization: Global convergence and
  faster rates for regularized mdps.
\newblock In \emph{Proceedings of the AAAI Conference on Artificial
  Intelligence}, volume~34, pages 5668--5675, 2020.

\bibitem[Sonu et~al.(2017)Sonu, Chen, and Doshi]{sonu2017decision}
E.~Sonu, Y.~Chen, and P.~Doshi.
\newblock Decision-theoretic planning under anonymity in agent populations.
\newblock \emph{Journal of Artificial Intelligence Research}, 59:\penalty0
  725--770, 2017.

\bibitem[Uehara et~al.(2022)Uehara, Shi, and Kallus]{uehara2022review}
M.~Uehara, C.~Shi, and N.~Kallus.
\newblock A review of off-policy evaluation in reinforcement learning.
\newblock \emph{arXiv preprint arXiv:2212.06355}, 2022.

\bibitem[Vasal et~al.(2020)Vasal, Mishra, and Vishwanath]{vasal2020master}
D.~Vasal, R.~K. Mishra, and S.~Vishwanath.
\newblock Master equation of discrete time graphon mean field games and teams.
\newblock \emph{arXiv preprint arXiv:2001.05633}, 2020.

\bibitem[Wainwright(2019)]{wainwright2019high}
M.~J. Wainwright.
\newblock \emph{High-dimensional statistics: A non-asymptotic viewpoint},
  volume~48.
\newblock Cambridge university press, 2019.

\bibitem[Wang et~al.(2020)Wang, Yang, and Wang]{wang2020breaking}
L.~Wang, Z.~Yang, and Z.~Wang.
\newblock Breaking the curse of many agents: {P}rovable mean embedding
  {Q}-iteration for mean-field reinforcement learning.
\newblock In \emph{International Conference on Machine Learning}, pages
  10092--10103. PMLR, 2020.

\bibitem[Xie et~al.(2021{\natexlab{a}})Xie, Yang, Wang, and
  Minca]{xie2021learning}
Q.~Xie, Z.~Yang, Z.~Wang, and A.~Minca.
\newblock Learning while playing in mean-field games: Convergence and
  optimality.
\newblock In \emph{International Conference on Machine Learning}, pages
  11436--11447. PMLR, 2021{\natexlab{a}}.

\bibitem[Xie et~al.(2021{\natexlab{b}})Xie, Cheng, Jiang, Mineiro, and
  Agarwal]{xie2021bellman}
T.~Xie, C.~Cheng, N.~Jiang, P.~Mineiro, and A.~Agarwal.
\newblock Bellman-consistent pessimism for offline reinforcement learning.
\newblock \emph{Advances in neural information processing systems},
  34:\penalty0 6683--6694, 2021{\natexlab{b}}.

\bibitem[Yardim et~al.(2022)Yardim, Cayci, Geist, and He]{yardim2022policy}
B.~Yardim, S.~Cayci, M.~Geist, and N.~He.
\newblock Policy mirror ascent for efficient and independent learning in mean
  field games.
\newblock \emph{arXiv preprint arXiv:2212.14449}, 2022.

\bibitem[Zaman et~al.(2022)Zaman, Koppel, Bhatt, and Basar]{zaman2022oracle}
M.~A.~uz Zaman, A.~Koppel, S.~Bhatt, and T.~Basar.
\newblock Oracle-free reinforcement learning in mean-field games along a single
  sample path.
\newblock \emph{arXiv preprint arXiv:2208.11639}, 2022.

\end{thebibliography}

\newpage

\clearpage
\appendix

\begin{center}
{\Large {\bf Appendix for \\ ``Learning Regularized Monotone Graphon Mean-Field Games''}}
\end{center}

\setcounter{equation}{0}
\counterwithin*{equation}{section}

\renewcommand{\theequation}{\thesection.\arabic{equation}}

\section{Detailed Explanations of Table~\ref{table:compare}}\label{app:compare}
We first explain Table~\ref{table:compare} column by column. The first column lists the conditions required by each work. Although the detailed statements of these conditions are usually different, these conditions can be largely categorized into contraction conditions and monotone conditions. Here `potential' means the extra potential reward structure required in \cite{geist2021concave}. 

`No population manipulation' means that during the learning process, the distribution flow is indeed induced by the current policy. For example, \cite{xie2021learning} and \cite{perrin2020fictitious} mix the current distribution flow with the previous ones to form the distribution flow required by the next step. In contrast, the distribution flows required in algorithms in \cite{perolat2021scaling}, \cite{fabian2022learning} and our work are those induced by the policies in each step. 

`Online playing' means that the algorithms can be implemented with the data collected from the online playing of agents. In general, the algorithms that do not require population manipulations can be implemented by letting agents play their policies in the online game. Thus, these algorithms admit online playing. In contrast, \cite{anahtarci2022q}, \cite{perrin2020fictitious} and \cite{geist2021concave} need to solve the optimal policy on specific distribution flows. Thus, they need the access to a simulator for this purpose. 

`Heterogeneity' means the modeling of the heterogeneity among agents. The works for \ac{mfg} only consider homogeneous agents, and thus cannot model the heterogeneity. 

`Discrete-time algorithm' means the provably efficient discrete-time algorithms here. Although some discrete-time algorithms are provided in \cite{perrin2020fictitious},\cite{perolat2021scaling} and \cite{fabian2022learning}, neither the consistency nor the convergence rate is provided therein. 

`Convergence rate' in the final column refers to the convergence rate of both discrete-time algorithms and continuous-time algorithms. \cite{perrin2020fictitious} provides the convergence rate for their continuous-time algorithm, and other works with `Yes' all provide the convergence rate for the discrete-time algorithms. 

In summary, our work provides the first provably efficient discrete-time algorithm in the monotone \ac{gmfg} without any extra conditions. This result deepen our understanding of the monotone \ac{gmfg}s, as a complementary setting of contractive \ac{gmfg}s.

\section{Experiment Details}\label{app:experiment}
We adopt the Beach Bar problem as our \ac{gmfg}. This problem is initially proposed in \citet{perolat2021scaling,perrin2020fictitious} for \ac{mfg} and modified by \citet{fabian2022learning} to \ac{gmfg}. In the Beach Bar problem, Agents can move their towels between locations and try to be close to the bar but also avoid crowded
areas and neighbors in an underlying network. The state space $\calS$ is $\{1,2,\cdots,|\calS|\}$, and we set $|\calS|=10$ in our experiments. The bar is located at $B=|\calS|/2$. The action space is $\calA=\{-1,0,1\}$, which indicates the movement of the towel. The transition kernel is $s_{t+1}^{\alpha}= s_{t}^{\alpha}+a_{t}^{\alpha}+\varepsilon_{t}^{\alpha}$, where $\varepsilon_{t}^{\alpha}$ is the noise that takes $+1$ or $-1$ with probability $1/2$. The reward function is defined as
\begin{align*}
    r_{t}(s_{t}^{\alpha},a_{t}^{\alpha},z_{t}^{\alpha})=\frac{2}{|\calS|}|B-s_{t}^{\alpha}|+\frac{2}{|\calS|}|a_{t}^{\alpha}|-8z_{t}^{\alpha}\text{ for all }t\in[H].
\end{align*}
In our experiments, we regualrize this reward function with $\lambda=1$. The underlying graphons in our experiments are \ac{sbm} and exp-graphons. The exp-graphon is defined as
\begin{align*}
    W_{\theta}^{\rm exp}(\alpha,\beta)=\frac{2\exp(\theta\cdot \alpha\beta)}{1+\exp(\theta\cdot \alpha\beta)}-1,
\end{align*}
where $\theta>0$ is the parameter. In our simulation, we set $\theta=3$. The \ac{sbm} in our experiments has two communities with $70\%$ and $30\%$ population respectively. The inter-community rate is $0.3$, and the intra-community rate is $0.9$. In our experiments, we adopt the exploitability to measure the closeness between a policy and the \ac{ne}. 

Since the Beach Bar problem only involves the finite state and action spaces, our algorithms take the function class $\calF_{h}=\{f:\calS\times\calA\rightarrow [0,H(1+\lambda\log|\calA|)]\}$ for all $h\in[H]$. 

Figure~\ref{fig:simu} is generated from five Monte-Carlo implementations for each algorithm. The error bar in the figure indicates the maximal and the minimal error in the Monte-Carlo. To simulate cases with constant graphons, we implement mirror descent algorithm with the nominal action-value functions, which are directly calculated from the ground-truth transition kernels and reward functions. Thus, there is no error bar for thm. To simulate the policy mirror descent for unregularized \ac{gmfg}, we directly use the code of \cite{fabian2022learning}, and the action-value functions are also calculated from the ground-truth model. Thus, there is no error bar for it, either. Our simulations run on a single Intel(R) Xeon(R) CPU E5-2697 v4 @ 2.30 GHz, and the experiments take about two days,

\section{Proof of Theorem~\ref{thm:exist}}
\begin{proof}[Proof of Theorem~\ref{thm:exist}]
    We prove the existence of \ac{ne} by three steps:
    \begin{itemize}
        \item We construct a $\lambda$-regularized \ac{mfg} based on the $\lambda$-regularized \ac{gmfg}.
        \item We show that we can construct an \ac{ne} of the $\lambda$-regularized \ac{gmfg} from an \ac{ne} of the constructed $\lambda$-regularized \ac{mfg}.
        \item We show that the constructed $\lambda$-regularized \ac{mfg} has \ac{ne} under Assumptions~\ref{assump:lip} and \ref{assump:conti}.
    \end{itemize}
    
    \textbf{Step 1: Construction of a $\lambda$-regularized \ac{mfg}}
    
    The state and action spaces of the $\lambda$-regularized \ac{mfg} is $\bar{\calS}=\calS\times\calI$ and $\calA$ respectively, where $\calI=[0,1]$. Here, we treat the positions of players as a state in \ac{mfg}. The state of the player is denoted as $\bar{s}_{h}=(s_{h},\alpha_{h})\in\bar{\calS}$, and we denote the distribution of the state at time $h$ as $\calL_{h}=\calL(\bar{s}_{h})=\calL(s_{h},\alpha_{h})$, which is the law of state $\bar{s}_{t}$. 
    
    At time $h$, the transition kernel of such \ac{mfg} is $\bar{P}_{h}:\bar{\calS}\times\calA\times\Delta(\bar{\calS})\rightarrow\Delta(\bar{\calS})$. To specify $\bar{P}_{h}$, we first define a function of $\alpha_{h}$ and $\calL_{h}$ as $f(\alpha_{h},\calL_{h},W_{h}):\calI\times \Delta(\calS\times\calI)\times\calW\rightarrow \calM(\calS)$, whose output is a measure supported on $\calS$, i.e.,
    \begin{align}
        \big[f(\alpha_{h},\calL_{h},W_{h})\big](\cdot)=\int_{0}^{1} W_{h}(\alpha_{h},\beta)\calL_{h}(\cdot,\beta)\rmd \beta.\label{eq:5}
    \end{align}
    
    Eqn.~\eqref{eq:5} enables us to define the transition kernel $\bar{P}_{h}$ as
    \begin{align}
        \bar{P}_{h}(\bar{s}_{h+1}\,|\, \bar{s}_{h},a_{h},\calL_{h})&=\delta_{\alpha_{h+1}=\alpha_{h}}\cdot P_{h}\big(s_{h+1}\,|\, s_{h},a_{h},f(\alpha_{h},\calL_{h},W_{h})\big)\nonumber\\
        &=\delta_{\alpha_{h+1}=\alpha_{h}}\cdot P_{h}\Big(s_{h+1}\,|\, s_{h},a_{h},\int_{0}^{1} W_{h}(\alpha_{h},\beta)\calL_{h}(\cdot,\beta)\rmd \beta\Big),\label{eq:6}
    \end{align}
    where $\delta_{\alpha_{h+1}=\alpha_{h}}$ is the Dirac's delta function at $\alpha_{h}$, and $P_{h}$ is the transition kernel of the \ac{gmfg}. The reward function of the \ac{mfg} can be similarly defined as
    \begin{align}
        \bar{r}_{h}(\bar{s}_{h},a_{h},\calL_{h})=r_{h}\Big(s_{h},a_{h},\int_{0}^{1} W_{h}(\alpha_{h},\beta)\calL_{h}(\cdot,\beta)\rmd \beta\Big).\label{eq:7}
    \end{align}
    The initial state distribution of the \ac{mfg} is specified as $\calL_{1}=\mu_{1}\times \unif([0,1])$. The value functions of the $\lambda$-regularized \ac{mfg} are defined as
    \begin{align*}
        \bar{V}^{\lambda}_{h}\big((s,\alpha),\pi,\calL\big)=\bbE^{\pi}\bigg[\sum_{t=h}^{H}\bar{r}_{h}(\bar{s}_{t},a_{t},\calL_{t})-\lambda\log\pi_{t}(a_{t}\,|\,s_{t},\alpha_{t})\,\bigg|\, s_{h}=s,\alpha_{h}=\alpha\bigg],
    \end{align*}
    where the expectation $\bbE^{\pi}$ is taken with respect to the \ac{mdp} $a_{t}\sim\pi_{t}(\cdot\,|\,s_{t},\alpha_{t})$ and $\bar{s}_{t+1}\sim\bar{P}_{t}(\cdot\,|\, \bar{s}_{t},a_{t},\calL_{t})$ for $t\in[H]$. Then the cumulative reward function is defined as
    \begin{align}
        \bar{J}^{\lambda}(\pi,\calL)=\bbE_{\calL_{1}}\big[\bar{V}^{\lambda}_{1}(s,\alpha,\pi,\calL)\big].\label{eq:mfgreward}
    \end{align}
    
    \textbf{Step 2: Construction of the \ac{ne} of the $\lambda$-regularized \ac{gmfg} from the \ac{ne} of the $\lambda$-regularized \ac{mfg}}
    
    In this step, we assume that the $\lambda$-regularized \ac{mfg} defined in Eqn.~\eqref{eq:6} and \eqref{eq:7} admits an \ac{ne} $(\tilde{\pi},\tilde{\calL})$, which is defined in Definition~\ref{def:mfgne}, replacing the discounted reward function therein by the reward defined in Eqn.~\eqref{eq:mfgreward}. We will construct a policy and distribution flow pair $(\pi^{\calI},\mu^{\calI})$ of the $\lambda$-regularized \ac{gmfg} from $(\tilde{\pi},\tilde{\calL})$ and show that $(\pi^{\calI},\mu^{\calI})$ is indeed an \ac{ne} of the $\lambda$-regularized \ac{gmfg}.
    
    We construct the policy and distribution flow pair as $\pi_{h}^{\alpha}(\cdot\,|\, s)=\tilde{\pi}_{h}(\cdot\,|\,s,\alpha)$ and $\mu_{h}^{\alpha}(\cdot)=\tilde{\calL}_{h}(\cdot,\alpha)$ for all $h\in[H]$, $s\in\calS$, and $\alpha\in\calI$. To prove that $(\pi^{\calI},\mu^{\calI})$ is an \ac{ne} of \ac{gmfg}, we need to show: (i) $\mu^{\calI}$ is induced by the policy $\pi^{\calI}$, i.e., $\Gamma_{2}(\pi^{\calI},W)=\mu^{\calI}$. (ii) $\pi^{\calI}$ is the optimal policy given $\mu^{\calI}$ for all the players.
    
    We use induction to prove (i). Define $\tilde{\mu}^{\calI}=\Gamma_{2}(\pi^{\calI},W)$. We will show that $\tilde{\mu}^{\calI}=\mu^{\calI}$. For $h=1$, we have $\mu_{1}^{\prime,\alpha}=\mu_{1}=\calL_{1}(\cdot,\alpha)=\mu_{1}^{\alpha}$ for all $\alpha\in\calI$. Assume that $\mu_{h}^{\prime,\alpha}=\mu_{h}^{\alpha}$ for all $\alpha\in\calI$, for time $h+1$ and any $s^{\prime}\in\calS$ and $\alpha\in\calI$, we have
    \begin{align*}
        \mu_{h+1}^{\prime,\alpha}(s^{\prime})&=\int_{\calS}\int_{\calA}\mu_{h}^{\prime,\alpha}(s)\pi_{h}^{\alpha}(a\,|\, s)P_{h}\Big(s^{\prime}\, |\, s,a,\int_{0}^{1}W_{h}(\alpha,\beta)\mu_{h}^{\prime,\alpha}\rmd \beta\Big)\rmd a \rmd s\\
        &=\int_{\calS}\int_{\calA}\int_{0}^{1}\tilde{\calL}_{h}(s,\alpha^{\prime})\tilde{\pi}_{h}(a\,|\, s,\alpha^{\prime})P_{h}\Big(s^{\prime}\, |\, s,a,\int_{0}^{1}W_{h}(\alpha^{\prime},\beta)\tilde{\calL}_{h}(\cdot,\beta)\rmd \beta\Big)\delta_{\alpha=\alpha^{\prime}}\rmd\alpha^{\prime}\rmd a \rmd s\\
        &=\tilde{\calL}_{h+1}(s^{\prime},\alpha),
    \end{align*}
    where the first equation follows from the definition of $\tilde{\mu}^{\calI}$, the second equation follows from the fact that $\mu_{h}^{\prime,\alpha}=\mu_{h}^{\alpha}$ and $\mu_{h}^{\alpha}(\cdot)=\tilde{\calL}_{h}(\cdot,\alpha)$, and the last equation follows from the definition of $\tilde{\calL}$. Then (i) results from the fact that $\mu_{h+1}^{\alpha}(\cdot)=\tilde{\calL}_{h+1}(\cdot,\alpha)= \mu_{h+1}^{\prime,\alpha}(\cdot)$.
    
    To prove (ii), we compare the \ac{mdp}s given $\tilde{\calL}$ and $\mu^{\calI}$ in \ac{mfg} and \ac{gmfg}. The \ac{mdp} for the player $\alpha$ in \ac{gmfg} is specified by the transition kernel $s_{h+1}^{\alpha}\sim P_{h}(\cdot\,|\, s_{h}^{\alpha},a_{h}^{\alpha},\int_{0}^{1}W_{h}(\alpha,\beta)\mu_{h}^{\beta}\rmd\beta)$, and the reward function $r_{h}(s_{h}^{\alpha},a_{h}^{\alpha},\int_{0}^{1}W_{h}(\alpha,\beta)\mu_{h}^{\beta}\rmd\beta)$. We want to prove that $V_{1}^{\lambda,\alpha}(s,\pi^{\alpha},\mu^{\calI})\geq V_{1}^{\lambda,\alpha}(s,\bar{\pi}^{\alpha},\mu^{\calI})$ for all $s\in\calS$, and $\bar{\pi}^{\alpha}\in\Pi$.
    
    Since $\tilde{\pi}$ is optimal with respect to $\tilde{\calL}$, we have $\bar{V}_{1}^{\lambda}(s,\alpha,\tilde{\pi},\tilde{L})\geq \bar{V}_{1}^{\lambda}(s,\alpha,\bar{\pi},\tilde{L})$ for all $s\in\calS$, $\alpha\in\calI$, and policy $\bar{\pi}$. Given $s_{1}=s$ and $\alpha_{1}=\alpha$, the \ac{mdp} in \ac{mfg} is specified by the transition kernel $s_{h+1}\sim P_{h}(s_{h+1}\,|\, s_{h},a_{h},\int_{0}^{1}W_{h}(\alpha,\beta)\tilde{\calL}_{t}(\cdot,\beta)\rmd\beta)$, $\alpha_{h+1}=\alpha$, and the reward function $r_{h}(s_{h}^{\alpha},a_{h}^{\alpha},\int_{0}^{1}W_{h}(\alpha,\beta)\tilde{\calL}_{t}(\cdot,\beta)\rmd\beta)$. We note that these two \ac{mdp}s are the same, and $V_{1}^{\lambda,\alpha}(s,\pi^{\alpha},\mu^{\calI})=\bar{V}_{1}^{\lambda}(s,\alpha,\tilde{\pi},\tilde{L})$. This proves the claim $(ii)$.
    
    \textbf{Step 3: Prove the existence of \ac{ne} in the constructed \ac{mfg} under Assumptions~\ref{assump:glip} and \ref{assump:conti}.}
    
    In order to prove the existence of \ac{ne} in the constructed \ac{mfg}, we only need to verify Assumption~\ref{assump:mfg} in Theorem~\ref{thm:mfgexist}.
    
    We first verify Assumption~\ref{assump:mfg} (1) and (2) hold. Our reward functions $\barr_{h}$ are bounded, $\calA$ is finite, and the state space $\calS\times\calI$ is compact.
    
    For Assumption~\ref{assump:mfg} (3) and (4), we only need to verify that the reward function in Eqn.~\eqref{eq:7} is continuous and the transition kernel in Eqn.~\eqref{eq:6} is continuous with respect to total variation.  Since $r_{h}$ is continuous, we only need to prove that $\int_{0}^{1} W_{h}(\alpha_{h},\beta)\calL_{h}(\cdot,\beta)\rmd \beta$ is continuous for the continuity of $\bar{r}_{h}$. In the following, we make use of the fact that the convergence in total variation implies the weak convergence. 
    
    Given two sequences $\{\alpha_{n}\}$ and $\{\calL_{n}\}$ such that $\alpha_{n}\rightarrow\alpha$ and $\calL_{n}$ converges to $\calL$ in total variation, we have
    \begin{align}
        &\int_{\calS}\bigg|\int_{0}^{1}W_{h}(\alpha_{n},\beta)\calL_{n}(s,\beta)\rmd\beta-\int_{0}^{1}W_{h}(\alpha,\beta)\calL(s,\beta)\rmd\beta\bigg|\rmd s\nonumber\\
        &\quad\leq\int_{\calS}\bigg|\int_{0}^{1}W_{h}(\alpha_{n},\beta)\calL_{n}(s,\beta)\rmd\beta-\int_{0}^{1}W_{h}(\alpha,\beta)\calL_{n}(s,\beta)\rmd\beta\bigg|\rmd s\nonumber\\
        &\quad\qquad +\int_{\calS}\bigg|\int_{0}^{1}W_{h}(\alpha,\beta)\calL_{n}(s,\beta)\rmd\beta-\int_{0}^{1}W_{h}(\alpha,\beta)\calL(s,\beta)\rmd\beta\bigg|\rmd s\nonumber\\
        &\quad\leq \int_{\calS}\bigg|\int_{0}^{1}W_{h}(\alpha_{n},\beta)\calL_{n}(s,\beta)\rmd\beta-\int_{0}^{1}W_{h}(\alpha,\beta)\calL_{n}(s,\beta)\rmd\beta\bigg|\rmd s\nonumber\\
        &\quad\qquad+\int_{\calS}\int_{0}^{1}\big|\calL_{n}(s,\beta)-\calL(s,\beta)\big|\rmd\beta\rmd s,\label{ieq:1}
    \end{align}
    where the first inequality follows from the triangle inequality. Since $\alpha_{n}\rightarrow\alpha$ and the uniform continuity of graphons, the first term in the right-hand side of inequality \eqref{ieq:1} tends to $0$. Since $\calL_{n}$ converges to $\calL$ in total variation, the second term in the right-hand side of inequality \eqref{ieq:1} tends to $0$. Thus, the reward function is continuous, which verifies Assumption~\ref{assump:mfg} (3). For any $g\in C_{b}(\calS\times\calI)$, given four sequences $\{s_{n}\}$, $\{\alpha_{n}\}$, $\{a_{n}\}$, and $\{L_{n}\}$ such that $s_{n}\rightarrow s$, $\alpha_{n}\rightarrow \alpha$, $a_{n}\rightarrow a$, and $L_{n}$ weakly converges to $L$, we have
    \begin{align}
        &\bigg|\int_{0}^{1}\int_{\calS}g(s^{\prime},\alpha^{\prime})\big[\bar{P}_{h}(s^{\prime},\alpha^{\prime}\,|\, \bar{s}_{h},a_{h},\calL_{h})-\bar{P}_{h}(s^{\prime},\alpha^{\prime}\,|\, \bar{s},a,\calL)\big]\rmd s^{\prime}\rmd\alpha^{\prime}\bigg|\nonumber\\
        &\quad =\bigg|\int_{\calS}g(s^{\prime},\alpha_{n})P_{h}\Big(s^{\prime}\,\Big|\, s_{n},a_{n},z_{h}^{\alpha_{n}}(\calL_{n},W_{h})\Big)\rmd s^{\prime}-\int_{\calS}g(s^{\prime},\alpha)P_{h}\Big(s^{\prime}\,\Big|\, s,a,z_{h}^{\alpha}(\calL,W_{h})\Big)\rmd s^{\prime}\bigg|\nonumber\\
        &\quad\leq \bigg|\int_{\calS}g(s^{\prime},\alpha_{n})P_{h}\Big(s^{\prime}\,\Big|\, s_{n},a_{n},z_{h}^{\alpha_{n}}(\calL_{n},W_{h})\Big)\rmd s^{\prime}-\int_{\calS}g(s^{\prime},\alpha)P_{h}\Big(s^{\prime}\,\Big|\, s,a,z_{h}^{\alpha_{n}}(\calL_{n},W_{h})\Big)\rmd s^{\prime}\bigg|\nonumber\\
        &\quad\qquad +\bigg|\int_{\calS}g(s^{\prime},\alpha)P_{h}\Big(s^{\prime}\,\Big|\, s,a,z_{h}^{\alpha_{n}}(\calL_{n},W_{h})\Big)\rmd s^{\prime}-\int_{\calS}g(s^{\prime},\alpha)P_{h}\Big(s^{\prime}\,\Big|\, s,a,z_{h}^{\alpha}(\calL,W_{h})\Big)\rmd s^{\prime}\bigg|,\label{ieq:2}
    \end{align}
    where $z_{h}^{\alpha_{n}}(\calL_{n},W_{h})=\int_{0}^{1} W_{h}(\alpha_{n},\beta)\calL_{n}(\cdot,\beta)\rmd \beta$, $z_{h}^{\alpha}(\calL,W_{h})=\int_{0}^{1} W_{h}(\alpha,\beta)\calL(\cdot,\beta)\rmd \beta$, the equation follows from Eqn.~\eqref{eq:6}, and the inequality follows from the triangle inequality. Since $g\in C_{b}(\calS\times\calI)$ is a continuous function on a compact set and $\alpha_{n}\rightarrow\alpha$, the first term in the right-hand side of inequality~\eqref{ieq:2} tends to $0$. Since inequality \eqref{ieq:1} proves that $z_{h}^{\alpha_{n}}(\calL_{n},W_{h})$ converges to $z_{h}^{\alpha}(\calL,W_{h})$ in $\ell_{1}$ and $g(\cdot,\alpha)\in C_{b}(\calS)$, the second term in the right-hand side of inequality~\eqref{ieq:2} tends to $0$. Thus, the transition kernel is continuous, which verifies Assumption~\ref{assump:mfg}. It concludes the verification of Assumption~\ref{assump:mfg} in Theorem~\ref{thm:mfgexist}. Thus, we conclude the proof of Theorem~\ref{thm:exist}.
\end{proof}

\section{Proof of Theorem~\ref{thm:mfgexist}}
\begin{proof}[Proof of Theorem~\ref{thm:mfgexist}]
    We prove the existence of \ac{ne} by two steps:
    \begin{itemize}
        \item We construct an operator $\Gamma$ that is defined for the state-action distribution flow and show that we can construct the \ac{ne} from the fixed point of this operator.
        \item We show that the fixed point set of the operator $\Gamma$ is not empty.
    \end{itemize}
    \textbf{Step 1: Construction of an operator $\Gamma$.}
    
    Without the loss of generality, we assume that $r_{h}:\calS\times\calA\times\Delta(\calS)\rightarrow[0,1]$ for $h\in[H]$. Define constants $L_{h}=(H-h+1)(1+\lambda\log|\calA|)$ for $h\in[H]$. Given the continuous functions set $C(\calS)$, we define the $L_{h}-$bounded continuous function set $C_{h}=\{f\in C(\calS)\,|\, \sup_{s\in\calS}|f(s)|\leq L_{h}\}$ and the product of them $\calC=\prod_{h=1}^{H}C_{h}(\calS)$. Given a constant $0<\sigma<1$, we equip $\calC$ with the metric $\rho(u,v)=\sum_{h=1}^{H}\sigma^{-h}\|u_{h}-v_{h}\|_{\infty}$ for any $u,v\in\calC$ and $\|f\|_{\infty}=\sup_{s\in\calS}|f(s)|$. Then $(\calC,\rho)$ is complete.
    
    We define the state-action distribution flow set as $\Xi=\prod_{h=1}^{H}\Delta(\calS\times\calA)$. For ease of notation, we denote the marginalization of any $\nu\in\Xi$ on $\calS$ as $\bar{\nu}_{h}(s)=\sum_{a\in\calA}\nu_{h}(s,a)$.
    
    For any $\nu\in\Xi$, define an operator $T_{h}^{\nu}$ acting on $\calS\rightarrow\bbR$ as
    \begin{align*}
        T_{h}^{\nu}u(s)&=\sup_{p\in\Delta(\calA)} \sum_{a\in\calA}p(a)\barr_{h}(s,a,\bar{\nu}_{h})-\lambda R(p)+\sum_{a\in\calA}\int_{\calS}p(a)\barP_{h}(s^{\prime}\,|\,s,a,\bar{\nu}_{h})u(s^{\prime})\rmd s^{\prime} \text{ for }h\in[H-1],\\ %\label{eq:10}
        T_{H}^{\nu}u(s)&=\sup_{p\in\Delta(\calA)} \sum_{a\in\calA}p(a)\barr_{H}(s,a,\bar{\nu}_{H})-\lambda R(p),%\label{eq:11}
    \end{align*}
    where $R(\cdot)$ is the negative entropy. When $\lambda=0$, the supremum is taken with respect to the action $a\in\calA$, and the following proposition can be similarly built for $\lambda=0$.
    
    \begin{proposition}\label{prop:mfgcontract}
        Let $\nu\in\Xi$ be arbitrary distribution flow. For all $h\in[H]$, $T_{h}^{\nu}$ maps $C_{h+1}(\calS)$ into $C_{h}(\calS)$. In addition, for any $u,v\in C_{h+1}$, we have $\|T_{h}^{\nu}u-T_{h}^{\nu}v\|_{\infty}\leq \|u-v\|_{\infty}$.
    \end{proposition}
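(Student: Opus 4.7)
The plan is to verify three properties of $T_h^\nu u$: (i) boundedness by $L_h$, (ii) continuity in $s$, and (iii) the $1$-Lipschitz (non-expansiveness) bound with respect to $\|\cdot\|_\infty$. I will treat the case $h \in [H-1]$ since $h = H$ is even simpler (no transition term).

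For the boundedness part, I will fix $u \in C_{h+1}(\calS)$ with $\|u\|_\infty \leq L_{h+1}$ and bound the objective
\[
F(s,p) \;=\; \sum_{a\in\calA} p(a)\,\bar r_h(s,a,\bar\nu_h) \;-\; \lambda R(p) \;+\; \sum_{a\in\calA}\int_\calS p(a)\,\bar P_h(s'\mid s,a,\bar\nu_h)\,u(s')\,\rmd s'
\]
uniformly in $p \in \Delta(\calA)$. Using $\bar r_h \in [0,1]$, $0 \leq -\lambda R(p) \leq \lambda\log|\calA|$, and $|u| \leq L_{h+1}$, I get $|F(s,p)| \leq 1 + \lambda\log|\calA| + L_{h+1} = L_h$ by the recursive definition of $L_h$, so the same bound holds for $T_h^\nu u(s) = \sup_p F(s,p)$.

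For continuity, I will apply Berge's maximum theorem. The constraint set $\Delta(\calA)$ is a fixed compact subset of $\bbR^{|\calA|}$ (independent of $s$), so it suffices to show that $F(\cdot,\cdot)$ is jointly continuous on $\calS \times \Delta(\calA)$. Continuity in $p$ is clear (the entropy is continuous on the simplex, and the coefficients are bounded). For continuity in $s$: the reward term is continuous by Assumption~\ref{assump:mfg}; the transition term $s \mapsto \int u(s')\,\bar P_h(s'\mid s,a,\bar\nu_h)\,\rmd s'$ is continuous because $u \in C_b(\calS)$ and $\bar P_h(\cdot\mid s,a,\bar\nu_h)$ is weakly continuous in $s$, which is exactly the definition of weak convergence applied to the bounded continuous test function $u$. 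Joint continuity on a compact set then yields continuity of the supremum, so $T_h^\nu u \in C(\calS)$; combined with the bound above, $T_h^\nu u \in C_h(\calS)$.

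For the non-expansiveness, the argument is the standard Bellman contraction trick. For any $s\in\calS$, let $p_u^*, p_v^* \in \Delta(\calA)$ achieve the suprema defining $T_h^\nu u(s)$ and $T_h^\nu v(s)$ respectively (they exist since $F(s,\cdot)$ is continuous on the compact $\Delta(\calA)$). Writing $F_u$ and $F_v$ for the objective with $u$ and $v$ plugged in, the entropy and reward terms cancel, so
\[
T_h^\nu u(s) - T_h^\nu v(s) \;\leq\; F_u(s,p_u^*) - F_v(s,p_u^*) \;=\; \sum_{a} p_u^*(a)\int_\calS \bigl(u(s')-v(s')\bigr)\,\bar P_h(s'\mid s,a,\bar\nu_h)\,\rmd s',
\]
which is bounded by $\|u-v\|_\infty$ since the measure $\sum_a p_u^*(a)\bar P_h(\cdot\mid s,a,\bar\nu_h)$ is a probability measure. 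Swapping the roles of $u$ and $v$ and taking $\sup_s$ gives $\|T_h^\nu u - T_h^\nu v\|_\infty \leq \|u-v\|_\infty$.

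The main subtlety I expect is the continuity step, specifically justifying the use of Berge's theorem and making sure that weak continuity of $\bar P_h$ (rather than, say, total-variation continuity) is strong enough to handle the transition integral. This is where the assumption $u \in C_b(\calS)$ (which is built into $C_{h+1}(\calS)$) is crucial: it provides exactly the class of test functions against which weak continuity gives pointwise continuity of the integral. Everything else is essentially bookkeeping with the bounds $\bar r_h \in [0,1]$, $-\lambda R(p) \in [0,\lambda\log|\calA|]$, and the telescoping definition $L_h = (H-h+1)(1+\lambda\log|\calA|)$.
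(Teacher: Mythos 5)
Your proposal is correct and follows essentially the same route as the paper: the same telescoping bound $1+\lambda\log|\calA|+L_{h+1}=L_h$ for boundedness, a maximum-theorem argument for continuity (the paper cites Proposition 7.32 of \citet{bertsekas1996stochastic}, which is the same device you invoke via Berge, and your explicit verification that weak continuity of $\bar P_h$ suffices against the bounded continuous test function $u$ is the right justification), and the standard cancellation of the reward and entropy terms for non-expansiveness. Your maximizer-based version of the last step is equivalent to the paper's use of $|\sup f-\sup g|\leq\sup|f-g|$.
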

    \begin{proof}[Proof of Proposition~\ref{prop:mfgcontract}]
        See Appendix~\ref{app:mfgcontract}.
    \end{proof}
    We then define an operator $T^{\nu}:\calC\rightarrow\calC$ as $(T^{\nu}u)_{h}=T^{\nu}_{h}u_{h+1}$ for all $h\in[H]$. We then have
    \begin{align}
        \rho(T^{\nu}u,T^{\nu}v)=\sum_{h=1}^{H}\sigma^{-h}\|T^{\nu}_{h}u_{h+1}-T^{\nu}_{h}v_{h+1}\|_{\infty}\leq \sum_{h=1}^{H}\sigma^{-h}\|u_{h+1}-v_{h+1}\|_{\infty}\leq \sigma\rho(u,v),\label{ieq:5}
    \end{align}
    where the first inequality results from Proposition~\ref{prop:mfgcontract}. Thus, $T^{\nu}$ is a contraction on $(\calC,\rho)$, and it has an unique fixed point. For any state-action distribution $\nu\in\Xi$, we use $\barV_{h}^{\lambda,\nu}:\calS\rightarrow\bbR$ for $h\in[H]$ to denote the value functions of the optimal policy in the $\lambda$-regularized \ac{mdp} induced by $\nu$ as $\barr_{h}(s,a,\bar{\nu}_{h})$ and $\barP_{h}(\cdot\,|\,s,a,\bar{\nu}_{h})$ for $h\in[H]$. Then the theory of Markov process shows that \cite[Theorem 14.1, Theorem 17.1]{hindererfoundations}
    \begin{proposition}\label{prop:optcondition}
        For any $\nu\in\Xi$, $\barV^{\lambda,\nu}=(\barV_{h}^{\lambda,\nu})_{h=1}^{H}$ is the unique fixed point of $T^{\nu}$. A policy $\pi\in\Pi^{H}$ is optimal if and only if the following equation holds for any $h\in[H]$ and $\mu_{h}^{+}-a.s.$ state $s\in\calS$, where $\mu^{+}$ is the distribution of states when implementing $\pi$ on the \ac{mdp} induced by $\bar{\nu}$.
        \begin{align*}
            \sum_{a\in\calA}p(a)\barr_{h}(s,a,\bar{\nu}_{h})-\lambda R(p)+\sum_{a\in\calA}\int_{\calS}p(a)\barP_{h}(s^{\prime}\,|\,s,a,\bar{\nu}_{h})\barV_{h+1}^{\lambda,\nu}(s^{\prime})\rmd s^{\prime}=T_{h}^{\nu}\barV_{h+1}^{\lambda,\nu}(s).
        \end{align*}
    \end{proposition}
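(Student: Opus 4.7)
The plan is to split Proposition~\ref{prop:optcondition} into two parts and handle each via classical regularized dynamic-programming arguments, leaning heavily on the contraction result already noted in inequality~\eqref{ieq:5}. Part~(i) identifies $\barV^{\lambda,\nu}$ as the unique fixed point of $T^\nu$; part~(ii) characterizes when a policy is optimal.

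For part~(i), uniqueness is immediate from~\eqref{ieq:5}: since $T^\nu:\calC\to\calC$ is a $\sigma$-contraction on the complete metric space $(\calC,\rho)$, Banach's fixed-point theorem yields a unique fixed point in $\calC$. To verify that $\barV^{\lambda,\nu}$ is this fixed point, I would proceed by backward induction on $h$. At $h=H$, the definition of $\barV_H^{\lambda,\nu}$ matches $T_H^\nu u$ (which has no continuation term), so the base case is immediate. For $h<H$, the induction step is the standard regularized Bellman optimality equation $\barV_h^{\lambda,\nu} = T_h^\nu \barV_{h+1}^{\lambda,\nu}$ for finite-horizon entropy-regularized MDPs. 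Boundedness $\|\barV_h^{\lambda,\nu}\|_\infty\le L_h$ follows from $\barr_h\in[0,1]$ and $-\lambda R(p)\le \lambda\log|\calA|$, and continuity in $s$ is exactly what was proved inside Proposition~\ref{prop:mfgcontract}; hence $\barV^{\lambda,\nu}\in\calC$, and by uniqueness it is the Banach fixed point.

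For part~(ii), the idea is a one-stage performance-difference decomposition. Given $\pi\in\Pi^H$ with regularized value function $V_h^\pi$ on the MDP induced by $\bar{\nu}$, define the per-stage regret
\begin{align*}
g_h(s) := T_h^\nu \barV_{h+1}^{\lambda,\nu}(s) - \sum_{a}\pi_h(a\,|\,s)\Bigl[\barr_h(s,a,\bar{\nu}_h) + \int_{\calS}\barP_h(s'\,|\,s,a,\bar{\nu}_h)\barV_{h+1}^{\lambda,\nu}(s')\,\rmd s'\Bigr] + \lambda R(\pi_h(\cdot\,|\,s)),
\end{align*}
which is pointwise nonnegative because $\pi_h(\cdot\,|\,s)$ is a feasible candidate in the supremum defining $T_h^\nu$. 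A short backward-induction plus tower-property argument then yields the telescoping identity
\begin{align*}
\bbE_{\calL_1}[\barV_1^{\lambda,\nu}] - \bbE_{\calL_1}[V_1^\pi] = \sum_{h=1}^{H} \bbE_{s\sim\mu_h^+}[g_h(s)].
\end{align*}
Since each summand is nonnegative, optimality of $\pi$ is equivalent to $\bbE_{\mu_h^+}[g_h]=0$ for every $h$, which, again by nonnegativity, is equivalent to $g_h(s)=0$ at $\mu_h^+$-a.s.\ $s$; this is exactly the Bellman-maximizer condition stated in the proposition.

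The main technical obstacle is ensuring that $T_h^\nu \barV_{h+1}^{\lambda,\nu}$ is attained by a measurable (indeed continuous) selector, so that the backward induction and the performance-difference identity run without measurability gaps. This follows from Berge's maximum theorem applied to the compact convex set $\Delta(\calA)$ with the continuous objective obtained from Assumption~\ref{assump:mfg} and from the continuity statement inside Proposition~\ref{prop:mfgcontract}; strict concavity of $-R$ further makes the maximizer unique and continuous in $s$, removing any selection ambiguity. Overall, both parts reduce to the textbook verification theorems for finite-horizon regularized MDPs cited as [Theorems~14.1 and~17.1, Hinderer], and the proof would consist of invoking these after checking that our $(T_h^\nu,\barV^{\lambda,\nu},R)$ satisfies their hypotheses.
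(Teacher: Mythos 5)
Your proposal is correct, but it is a genuinely different (and more self-contained) route than the paper's: the paper offers no proof of Proposition~\ref{prop:optcondition} at all, simply asserting it via the citation to Hinderer's finite-horizon theory (Theorems~14.1 and~17.1), which applies after one views the $\lambda$-regularized problem as a standard Borel MDP with compact action space $\Delta(\calA)$ and continuous one-stage reward $\sum_{a}p(a)\barr_{h}(s,a,\bar{\nu}_{h})-\lambda R(p)$; the continuity needed to check those hypotheses is exactly what Proposition~\ref{prop:mfgcontract} supplies. Your argument instead re-derives the cited results: Banach's theorem on $(\calC,\rho)$ for uniqueness, backward induction to identify $\barV^{\lambda,\nu}$ as the fixed point, and the nonnegative per-stage regret $g_{h}$ with the telescoping identity $\bbE[\barV_{1}^{\lambda,\nu}]-\bbE[V_{1}^{\pi}]=\sum_{h}\bbE_{\mu_{h}^{+}}[g_{h}]$ to turn optimality into $g_{h}=0$, $\mu_{h}^{+}$-a.s.\ --- which is precisely the stated Bellman-maximizer condition, and which makes explicit where the a.s.\ qualifier and measurable selection enter (points the bare citation glosses over). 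Two small remarks: in finite horizon the contraction is not actually needed for uniqueness, since $T_{H}^{\nu}u$ does not depend on $u$ and backward substitution forces the fixed point stage by stage (the paper uses the contraction for this same purpose elsewhere, e.g.\ in Proposition~\ref{prop:vfuncconverge}); and your Berge/strict-concavity step giving a \emph{unique continuous} maximizer requires $\lambda>0$ --- for the $\lambda=0$ case the proposition also covers, uniqueness fails, but finiteness of $\calA$ still yields a measurable greedy selector, so your telescoping argument goes through unchanged.
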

    
    For any $\nu\in\Xi$, we define the sets
    \begin{align*}
        A(\nu)&=\bigg\{\xi\in\Xi\,\bigg|\,\bar{\xi}_{1}=\mu_{1},\, \bar{\xi}_{h+1}(\cdot)=\sum_{a\in\calA}\int_{\calS}\barP_{h}(\cdot\,|\,s,a,\bar{\nu}_{h})\nu_{h}(\rmd s, a)\text{ for all }h\in[H-1]\bigg\},\\
        B(\nu)&=\bigg\{\xi\in\Xi\,\bigg|\,\text{For all }h\in[H], \, \sum_{a\in\calA}\xi_{h}(a\,|\,s)\barr_{h}(s,a,\bar{\nu}_{h})-\lambda R\big(\xi_{h}(\cdot\,|\,s)\big)\\
        &\quad\qquad+\sum_{a\in\calA}\int_{\calS}\xi_{h}(a\,|\,s)\barP_{h}(s^{\prime}\,|\,s,a,\bar{\nu}_{h})\barV_{h+1}^{\lambda,\nu}(s^{\prime})\rmd s^{\prime}=\barV_{h}^{\lambda,\nu}(s),\,\bar{\xi}_{h}\text{-a.s.}\bigg\},\\
        \Gamma(\nu)&=A(\nu)\cap B(\nu).
    \end{align*}
    We note that $\Gamma(\nu)\neq\emptyset$, since set $A(\nu)$ imposes constraints on $\bar{\xi}_{h}$ while $B(\nu)$ imposes constraints on $\xi_{h}(\cdot\,|\,s)$. We say that $\nu$ is a fixed point of $\Gamma$ if $\nu\in\Gamma(\nu)$.
    
    \textbf{Step 2: The existence of the fixed point of $\Gamma$.}
    
    \begin{proposition}\label{prop:gammapolicy}
        Suppose that $\Gamma$ has a fixed point $\nu\in\xi$. Then we construct a policy as: $\pi_{h}(\cdot\,|\,s)=\nu_{h}(\cdot\,|\,s)$ for all $s\in\supp(\bar{\nu}_{h})$ and $h\in[H]$, $\pi_{h}(\cdot\,|\,s)$ for $s\notin\supp(\bar{\nu}_{h})$ can be arbitrarily defined. Then the pair $(\pi,\bar{\nu})$ is an \ac{ne} of the $\lambda$-regularized \ac{mfg}.
    \end{proposition}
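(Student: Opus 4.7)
The plan is to verify the two defining properties of an \ac{ne} in Definition~\ref{def:mfgne} for the pair $(\pi,\bar{\nu})$: (i) distribution consistency, i.e., the distribution flow induced by $\pi$ starting from $\mu_1$ equals $\bar{\nu}$; and (ii) player rationality, i.e., $\pi$ is optimal on the \ac{mdp} induced by $\bar{\nu}$. The key observation is that the two membership conditions $\nu\in A(\nu)$ and $\nu\in B(\nu)$ from the fixed-point property exactly encode, respectively, a forward distribution-evolution equation and a stagewise Bellman optimality equation written in terms of conditional distributions $\nu_h(\cdot\,|\,s)$, both with respect to the same aggregate $\bar{\nu}$.

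For property (i), I would argue by induction on $h\in[H]$ that the law of the state at time $h$ under $\pi$ on the MDP with kernels $\barP_h(\cdot\,|\,s,a,\bar{\nu}_h)$ coincides with $\bar{\nu}_h$. The base case $h=1$ is immediate from the constraint $\bar{\xi}_1=\mu_1$ built into $A(\nu)$. For the inductive step, disintegrate $\nu_h(\rmd s,a)=\pi_h(a\,|\,s)\bar{\nu}_h(\rmd s)$ on $\supp(\bar{\nu}_h)$, so that the update formula in $A(\nu)$ becomes
\begin{align*}
\bar{\nu}_{h+1}(\cdot)=\sum_{a\in\calA}\int_{\calS}\barP_{h}(\cdot\,|\,s,a,\bar{\nu}_h)\pi_h(a\,|\,s)\bar{\nu}_h(\rmd s),
\end{align*}
which is precisely the one-step push-forward under $\pi$. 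Since $\bar{\nu}_h$ assigns zero mass to $\calS\setminus\supp(\bar{\nu}_h)$, the arbitrary choice of $\pi_h(\cdot\,|\,s)$ outside the support does not affect this equation.

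For property (ii), I would invoke Proposition~\ref{prop:optcondition}. Since $\nu\in B(\nu)$, for each $h$ the conditional distribution $\pi_h(\cdot\,|\,s)=\nu_h(\cdot\,|\,s)$ attains the supremum defining $T_h^{\nu}\barV_{h+1}^{\lambda,\nu}(s)$ at $\bar{\nu}_h$-almost every $s$. By part (i), the state distribution $\mu_h^+$ induced by $\pi$ on the MDP associated with $\bar{\nu}$ equals $\bar{\nu}_h$, so the optimality condition holds $\mu_h^+$-a.s., which is exactly the ``if'' direction of Proposition~\ref{prop:optcondition}. Hence $\pi$ is optimal for the MDP induced by $\bar{\nu}$, yielding $\barJ^{\lambda}(\pi,\bar{\nu})=\max_{\tilde{\pi}\in\Pi^{H}}\barJ^{\lambda}(\tilde{\pi},\bar{\nu})$.

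The main subtlety, and the one to flag carefully, is that $\pi_h(\cdot\,|\,s)$ is defined only up to $\bar{\nu}_h$-null sets: outside the support we make an arbitrary choice. This is harmless because both $A(\nu)$ only involves integrals against $\bar{\nu}_h$ and $B(\nu)$ only requires the Bellman optimality equation to hold $\bar{\nu}_h$-a.s., and Proposition~\ref{prop:optcondition} characterizes optimality in exactly this almost-everywhere sense along the induced state distribution. Combining these two almost-sure statements with the identification $\mu_h^+=\bar{\nu}_h$ from step (i) closes the proof.
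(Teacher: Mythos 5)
Your proposal is correct and follows exactly the same route as the paper's (much terser) proof: distribution consistency from $\nu\in A(\nu)$ via the disintegration $\nu_h(\rmd s,a)=\pi_h(a\,|\,s)\bar{\nu}_h(\rmd s)$ and induction, and player rationality from $\nu\in B(\nu)$ combined with the optimality characterization in Proposition~\ref{prop:optcondition}. Your additional care about the $\bar{\nu}_h$-null sets and the identification $\mu_h^{+}=\bar{\nu}_h$ is a faithful elaboration of what the paper leaves implicit.
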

    \begin{proof}[Proof of Proposition~\ref{prop:gammapolicy}]
        Since $\nu\in A(\nu)$, the distribution consistency condition in Definition~\ref{def:mfgne} holds. Since $\nu\in B(\nu)$, the policy defined in Proposition~\ref{prop:gammapolicy} satisfies the optimality condition in Proposition~\ref{prop:optcondition}, which verifies the player rationality condition in Definition~\ref{def:mfgne}.
    \end{proof}
    
    \begin{proposition}\label{prop:closedgraph}
        The graph of $\Gamma$, i.e., $\text{Gr}(\Gamma)=\{(\nu,\xi)\in\Xi\times\Xi\,|\, \xi\in\Gamma(\nu)\}$ is closed.
    \end{proposition}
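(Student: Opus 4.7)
\textbf{Proof proposal for Proposition~\ref{prop:closedgraph}.}
The plan is to verify closedness sequentially. Endow $\Xi=\prod_{h=1}^{H}\Delta(\calS\times\calA)$ with the product weak topology, which is metrizable and compact since $\calS\times\calA$ is compact. Take an arbitrary sequence $(\nu_{n},\xi_{n})\in\mathrm{Gr}(\Gamma)$ with $\nu_{n}\to\nu$ and $\xi_{n}\to\xi$ in this topology, and show $\xi\in A(\nu)\cap B(\nu)$. The marginalizations $\bar{\nu}_{n,h}$ and $\bar{\xi}_{n,h}$ automatically inherit weak convergence, which I will use throughout.

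For $\xi\in A(\nu)$, the initial condition $\bar{\xi}_{1}=\mu_{1}$ passes to the limit trivially. For the transition constraint, I would test the identity $\bar{\xi}_{h+1}(\cdot)=\sum_{a}\int\bar{P}_{h}(\cdot\mid s,a,\bar{\nu}_{h})\,\nu_{h}(\rmd s,a)$ against an arbitrary $g\in C_{b}(\calS)$. The left side converges by weak convergence of $\bar{\xi}_{n,h+1}$. For the right side, weak continuity of $\bar{P}_{h}$ (Assumption~\ref{assump:mfg}) together with boundedness of $g$ shows that the map $(s,a,\mu)\mapsto\int g(s')\bar{P}_{h}(\rmd s'\mid s,a,\mu)$ lies in $C_{b}(\calS\times\calA\times\Delta(\calS))$. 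A triangle-inequality split analogous to inequality~\eqref{ieq:2} — one term controlled by the weak convergence of $\bar{\nu}_{n,h}$ inside the kernel, the other by the weak convergence of $\nu_{n,h}$ against the now-continuous bounded integrand — then yields convergence of the right side.

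For $\xi\in B(\nu)$, I first establish continuity of the optimal value map $\nu\mapsto\bar{V}^{\lambda,\nu}\in\calC$. Since $\bar{V}^{\lambda,\nu}$ is the unique fixed point of the contraction $T^{\nu}$ on $(\calC,\rho)$ with uniform contraction constant $\sigma$ (Proposition~\ref{prop:optcondition} and inequality~\eqref{ieq:5}), it suffices to show that $\nu\mapsto T^{\nu}u$ is continuous for every fixed $u\in\calC$; then the standard stability bound $\rho(\bar{V}^{\lambda,\nu_{n}},\bar{V}^{\lambda,\nu})\le (1-\sigma)^{-1}\rho(T^{\nu_{n}}\bar{V}^{\lambda,\nu},T^{\nu}\bar{V}^{\lambda,\nu})$ yields $\bar{V}^{\lambda,\nu_{n}}\to\bar{V}^{\lambda,\nu}$ uniformly on $\calS$. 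Continuity of $T_{h}^{\nu}u$ in $\nu$ uniformly in $s\in\calS$ follows from joint continuity of $\bar{r}_{h}$, weak continuity of $\bar{P}_{h}$, finiteness of $\calA$ (so $\Delta(\calA)$ is a compact simplex), and continuity in $p$ of the objective inside the supremum — by a Berge-type maximum-theorem argument plus uniform control in $s$ using compactness of $\calS$.

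With uniform convergence of $\bar{V}^{\lambda,\nu_{n}}$ available, I would pass to the limit in the defining Bellman equation of $B(\nu_{n})$. To handle its $\bar{\xi}_{n,h}$-almost-sure nature, I would integrate the equation against an arbitrary test function $\phi\in C_{b}(\calS)$ with respect to $\bar{\xi}_{n,h}$, turning each a.s. identity into an equality of real numbers. Every term then becomes the integral of a bounded continuous function of $(s,a)$ — after composing with $\bar{r}_{h}(\cdot,\cdot,\bar{\nu}_{n,h})$, the expected next-state value via $\bar{P}_{h}$, and $R(\xi_{n,h}(\cdot\mid s))$ — against the weakly convergent measure $\xi_{n,h}$, while $\bar{V}_{h+1}^{\lambda,\nu_{n}}$ enters only through uniform convergence. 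Passing to the limit in each term, and then using that $\phi$ is arbitrary, yields the a.s. identity at $(\nu,\xi)$, i.e., $\xi\in B(\nu)$. The main obstacle is the triple convergence inside the value-function term $\int\bar{V}_{h+1}^{\lambda,\nu_{n}}(s')\bar{P}_{h}(\rmd s'\mid s,a,\bar{\nu}_{n,h})\,\xi_{n,h}(\rmd s,a)$, which couples uniform convergence of $\bar{V}^{\lambda,\nu_{n}}$, weak continuity of $\bar{P}_{h}$ in its measure argument, and weak convergence of $\xi_{n,h}$; this is resolved by the same triangle-inequality splitting as in inequality~\eqref{ieq:2}, isolating each source of perturbation in turn.
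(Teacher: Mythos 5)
Your overall architecture mirrors the paper's: split the verification into $\xi\in A(\nu)$ and $\xi\in B(\nu)$, handle $A(\nu)$ by testing against $C_{b}(\calS)$ functions with a triangle-inequality split, and obtain continuity of $\nu\mapsto\barV^{\lambda,\nu}$ from the uniform contraction (your stability bound $\rho(\barV^{\lambda,\nu_{n}},\barV^{\lambda,\nu})\le(1-\sigma)^{-1}\rho(T^{\nu_{n}}\barV^{\lambda,\nu},T^{\nu}\barV^{\lambda,\nu})$ is an equivalent packaging of the paper's Proposition~\ref{prop:vfuncconverge}, which iterates $u_{k}^{(n)}=T^{(n)}u_{k-1}^{(n)}$ and lets $k\to\infty$). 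Two genuine differences: you work with the product weak topology on $\Xi$, while the paper takes sequences converging in total variation; and for the $\bar{\xi}_{h}$-a.s.\ Bellman identity you integrate against a test function $\phi$ to convert the a.s.\ statement into scalar equalities, whereas the paper constructs the set $A_{h}=\cap_{k}\cup_{n\ge k}A_{h}^{(n)}$, proves $\bar{\xi}_{h}(A_{h})=1$ (Proposition~\ref{prop:ioevent}), and passes to the limit pointwise along a subsequence using $\xi_{h}^{(n_{k})}(a\,|\,s)\to\xi_{h}(a\,|\,s)$, which is where the total-variation convergence is actually used.

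There is, however, a gap in your treatment of the entropy term in $B(\nu)$. You claim that after integrating against $\phi$, ``every term becomes the integral of a bounded continuous function of $(s,a)$ against the weakly convergent measure $\xi_{n,h}$,'' including the term $\int\phi(s)\,R\bigl(\xi_{n,h}(\cdot\,|\,s)\bigr)\,\bar{\xi}_{n,h}(\rmd s)$. This term is not of that form: $s\mapsto R(\xi_{n,h}(\cdot\,|\,s))$ is a nonlinear functional of the \emph{disintegration} of $\xi_{n,h}$, defined only $\bar{\xi}_{n,h}$-a.e.\ and with no continuous version in general, and the functional $\xi\mapsto\int\phi(s)R(\xi(\cdot\,|\,s))\bar{\xi}(\rmd s)=\sum_{a}\int\phi(s)\log\frac{\rmd\xi(\cdot,a)}{\rmd\bar{\xi}}(s)\,\xi(\rmd s,a)$ is a relative-entropy-type quantity that is only lower semicontinuous, not continuous, under weak convergence. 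So the term-by-term limit passage fails exactly there; the linear terms (reward, transition, and $\barV_{h}^{\lambda,\nu_{n}}$) are fine by continuous convergence plus your uniform value-function convergence. The argument can be repaired by exploiting the sign structure: for every $s$ the left-hand side of the Bellman identity is the value of the particular choice $p=\xi_{h}(\cdot\,|\,s)$ inside the supremum defining $T_{h}^{\nu}\barV_{h+1}^{\lambda,\nu}(s)$, hence $\mathrm{LHS}(s)\le\mathrm{RHS}(s)$ pointwise; taking $\phi\equiv1$, using $\int(\mathrm{RHS}_{n}-\mathrm{LHS}_{n})\,\rmd\bar{\xi}_{n,h}=0$, weak lower semicontinuity of the negative conditional entropy, and continuity of the remaining terms gives $\int(\mathrm{RHS}-\mathrm{LHS})\,\rmd\bar{\xi}_{h}\le0$ and therefore equality $\bar{\xi}_{h}$-a.s. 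As written, though, the step is not justified.
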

    \begin{proof}[Proof of Proposition~\ref{prop:closedgraph}]
        See Appendix~\ref{app:closedgraph}.
    \end{proof}
    
    The existence of the fixed point of operator $\Gamma$ follows from the Kakutani's Theorem~\cite[Corollary 17.55 ]{guide2006infinite}. We note that the existence of the \ac{ne} is the direct result of Proposition~\ref{prop:optcondition}. This concludes the proof of Theorem~\ref{thm:mfgexist}.
    
\end{proof}
\section{Proof of Theorem~\ref{thm:monoopt}}\label{app:optproof}
\begin{proof}[Proof of Theorem~\ref{thm:monoopt}]
According to the policy update procedures Line~$5$ and Line~$6$ in Algorithm~\ref{algo:pmd}, we have that
\begin{align}
    &\kl\big(\pi_{h}^{*,\alpha}(\cdot\,|\,s_{h}^{\alpha})\|\pi_{t+1,h}^{\alpha}(\cdot\,|\,s_{h}^{\alpha})\big)\nonumber\\
    &\quad\leq \kl\big(\pi_{h}^{*,\alpha}(\cdot\,|\,s_{h}^{\alpha})\|\hpi_{t+1,h}^{\alpha}(\cdot\,|\,s_{h}^{\alpha})\big)+\beta/(1-\beta)\nonumber\\
    &\quad\leq -\eta\big\langle \hatQ_{h}^{\lambda,\alpha}(s_{h}^{\alpha},\cdot,\pi_{t}^{\alpha},\mu_{t}^{\calI})-\lambda\log\pi_{h}^{\alpha}(\cdot\,|\,s_{h}^{\alpha}),\pi_{h}^{*,\alpha}(\cdot\,|\,s_{h}^{\alpha})-\pi_{t,h}^{\alpha}(\cdot\,|\,s_{h}^{\alpha})\big\rangle\nonumber\\
    &\quad\qquad+\kl\big(\pi_{h}^{*,\alpha}(\cdot\,|\,s_{h}^{\alpha})\|\pi_{t,h}^{\alpha}(\cdot\,|\,s_{h}^{\alpha})\big)+\frac{1}{2}\eta^{2}\bigg(H+\lambda H\log|\calA|+\lambda\log\frac{|\calA|}{\beta}\bigg)^{2}+\frac{\beta}{1-\beta}\nonumber\\
    &\quad\leq -\eta\big\langle Q_{h}^{\lambda,\alpha}(s_{h}^{\alpha},\cdot,\pi_{t}^{\alpha},\mu_{t}^{\calI})-\lambda\log\pi_{h}^{\alpha}(\cdot\,|\,s_{h}^{\alpha}),\pi_{h}^{*,\alpha}(\cdot\,|\,s_{h}^{\alpha})-\pi_{t,h}^{\alpha}(\cdot\,|\,s_{h}^{\alpha})\big\rangle\nonumber\\
    &\quad\qquad+\kl\big(\pi_{h}^{*,\alpha}(\cdot\,|\,s_{h}^{\alpha})\|\pi_{t,h}^{\alpha}(\cdot\,|\,s_{h}^{\alpha})\big)+\frac{1}{2}\eta^{2}\bigg(H+\lambda H\log|\calA|+\lambda\log\frac{|\calA|}{\beta}\bigg)^{2}+\frac{\beta}{1-\beta}+\eta\varepsilon_{h}^{\alpha},\label{ieq:3}
\end{align}
where the first inequality results from Lemma~\ref{lem:policyave}, and the second inequality results from Lemma~\ref{lem:mdupdate}, and the last inequality follows from the definition of $\varepsilon_{h}^{\alpha}$, which is defined as the upperbound of
\begin{align*}
    \Big|\big\langle\hatQ_{h}^{\lambda,\alpha}(s_{h}^{\alpha},\cdot,\pi_{t}^{\alpha},\mu_{t}^{\calI})-Q_{h}^{\lambda,\alpha}(s_{h}^{\alpha},\cdot,\pi_{t}^{\alpha},\mu_{t}^{\calI}),\pi_{h}^{*,\alpha}(\cdot\,|\,s_{h}^{\alpha})-\pi_{t,h}^{\alpha}(\cdot\,|\,s_{h}^{\alpha}) \big\rangle\Big|\leq \varepsilon_{h}^{\alpha} \text{ for all }t\in[T].
\end{align*}

Taking expectation with respect to $\mu_{h}^{*,\alpha}$ on the both sides of inequality~\eqref{ieq:3}, we can upper bound the difference between $D(\pi_{t+1}^{\calI})$ and $D(\pi_{t}^{\calI})$ as
\begin{align}
    &D(\pi_{t+1}^{\calI})-D(\pi_{t}^{\calI})\nonumber\\
    &\quad=\int_{0}^{1}\sum_{h=1}^{H}\bbE_{\mu_{h}^{*,\alpha}}\Big[\kl\big(\pi_{h}^{*,\alpha}(\cdot\,|\,s_{h}^{\alpha})\|\pi_{t+1,h}^{\alpha}(\cdot\,|\,s_{h}^{\alpha})\big)-\kl\big(\pi_{h}^{*,\alpha}(\cdot\,|\,s_{h}^{\alpha})\|\pi_{t,h}^{\alpha}(\cdot\,|\,s_{h}^{\alpha})\big)\Big]\rmd\alpha\nonumber\\
    &\quad \leq \eta\int_{0}^{1} J^{\lambda,\alpha}(\pi_{t}^{\alpha},\mu_{t}^{\calI})-J^{\lambda,\alpha}(\pi^{*,\alpha},\mu_{t}^{\calI})\rmd\alpha-\lambda\eta\int_{0}^{1}\sum_{h=1}^{H}\bbE_{\mu_{h}^{*,\alpha}}\Big[\kl\big(\pi_{h}^{*,\alpha}(\cdot\,|\,s_{h}^{\alpha})\|\pi_{t,h}^{\alpha}(\cdot\,|\,s_{h}^{\alpha})\big)\Big]\rmd\alpha\nonumber\\
    &\quad\qquad+\frac{1}{2}\eta^{2}H\bigg(H+\lambda H\log|\calA|+\lambda\log\frac{|\calA|}{\beta}\bigg)^{2}+\frac{\beta}{1-\beta}H+2\eta\int_{0}^{1}\sum_{h=1}^{H}\bbE_{\mu_{h}^{*,\alpha}}[\varepsilon_{h}^{\alpha}]\rmd\alpha\nonumber\\
    &\quad\leq-\lambda\eta D(\pi_{t}^{\calI})+\frac{1}{2}\eta^{2}H\bigg(H+\lambda H\log|\calA|+\lambda\log\frac{|\calA|}{\beta}\bigg)^{2}+\frac{\beta}{1-\beta}H+2\eta\int_{0}^{1}\sum_{h=1}^{H}\bbE_{\mu_{h}^{*,\alpha}}[\varepsilon_{h}^{\alpha}]\rmd\alpha,\label{ieq:4}
\end{align}
where the first equation results from the definition of $H(\cdot)$, the first inequality results from inequality~\eqref{ieq:3} and Lemma~\ref{lem:pdl}, and the last inequality results from Proposition~\ref{prop:mono} and the definition of \ac{ne}. 
The inequality~\eqref{ieq:4} can be reformulated as
\begin{align*}
    D(\pi_{t}^{\calI})&\leq \frac{1}{\lambda\eta}\big(D(\pi_{t}^{\calI})-D(\pi_{t+1}^{\calI})\big)+\frac{\eta}{2\lambda}H\bigg(H+\lambda H\log|\calA|+\lambda\log\frac{|\calA|}{\beta}\bigg)^{2}+\frac{\beta H}{(1-\beta)\lambda\eta}\\*
    &\quad\qquad+\frac{2}{\lambda}\int_{0}^{1}\sum_{h=1}^{H}\bbE_{\mu_{h}^{*,\alpha}}[\varepsilon_{h}^{\alpha}]\rmd\alpha.
\end{align*}
Thus, we have that
\begin{align*}
    \frac{1}{T}\sum_{t=1}^{T}D(\pi_{t}^{\calI})&\leq \frac{1}{T\lambda\eta}D(\pi_{1}^{\calI})+\frac{\eta}{2\lambda}H\bigg(H+\lambda H\log|\calA|+\lambda\log\frac{|\calA|}{\beta}\bigg)^{2}+\frac{\beta H}{(1-\beta)\lambda\eta}\\
    &\quad\qquad+\frac{2}{\lambda}\int_{0}^{1}\sum_{h=1}^{H}\bbE_{\mu_{h}^{*,\alpha}}[\varepsilon_{h}^{\alpha}]\rmd\alpha.
\end{align*}
Take $\eta=O(T^{-1/2})$ and $\beta=O(T^{-1})$, then we have $\beta/(1-\beta)=O(T^{-1})$. Thus, we have
\begin{align*}
    \frac{1}{T}\sum_{t=1}^{T}D(\pi_{t}^{\calI})=O\bigg(\frac{\lambda\log^{2} T}{T^{1/2}}+\frac{\int_{0}^{1}\sum_{h=1}^{H}\bbE_{\mu_{h}^{*,\alpha}}[\varepsilon_{h}^{\alpha}]\rmd\alpha}{\lambda}\bigg).
\end{align*}
The desired result in Theorem~~\ref{thm:monoopt} follows from the convexity of KL divergence. Thus, we conclude the proof of Theorem~\ref{thm:monoopt} by noting that $\varepsilon_{h}^{\alpha}=0$ in this case.
\end{proof}
\section{Proof of Theorem~\ref{thm:optest}}
\begin{proof}[Proof of Theorem~\ref{thm:optest}]
    The proof of Theorem~\ref{thm:optest} mainly involves two steps:
    \begin{itemize}
        \item Derive the performance guarantee for Algorithm~\ref{algo:est}.
        \item Combine the estimation result with the optimization result in Theorem~\ref{thm:monoopt}.
    \end{itemize}
    
    \textbf{Step 1: Derive the performance guarantee for Algorithm~\ref{algo:est}.}
    
    Now we focus on estimating the action-value function of $i^{\rm th}$ player with policy $\pi_{t}^{i/N}$. We first introduce some notations. The nominal action-value and the value functions of the $i^{\rm th}$ player with policy $\pi_{t}^{i/N}$ and underlying distribution flow $\mu_{t}^{\calI}$ are respectively denoted as $Q_{h}^{\lambda,i}(s,a)=Q_{h}^{\lambda,i}(s,a,\pi_{t}^{i/N},\mu_{t}^{\calI})$ and $V_{h}^{\lambda,i}(s)=V_{h}^{\lambda,i}(s,\pi_{t}^{i/N},\mu_{t}^{\calI})$ for $h\in[H]$. In the following, we use $z_{h}^{\alpha}$ to denote the aggregate $z_{h}^{\alpha}(\mu_{t,h}^{\calI},W_{h})$. The distribution of $(s_{h}^{i},a_{h}^{i},r_{h}^{i},s_{h+1}^{i})$ induced by the behavior policy $\pi_{t}^{\rmb,i/N}$ and the distribution flow $\mu_{t}^{\calI}$ as $(s_{h}^{i},a_{h}^{i},r_{h}^{i},s_{h+1}^{i})\sim\mu_{h}^{\rmb,i}\times\pi_{h}^{\rmb,i}\times\delta_{r_{h}}\times P_{h}=\tilde{\rho}_{h}^{\rmb,i}$, and the marginalization of this distribution on $(s_{h}^{i},a_{h}^{i},r_{h}^{i})$ is denoted as $\rho_{h}^{\rmb,i}$. For any function $f:\calS\rightarrow\bbR$, we define an operator $\calP_{h}$ as $(\calP_{h}f)(s,a)=\bbE_{s^{\prime}\sim P_{h}(\cdot|s,a)}[f(s^{\prime})]$. Then we adopt recurrence on the time step $h\in[H]$ to derive the estimation performance guarantee.
    
    For time $h=H$, Line 7 in Algorithm~\ref{algo:est} simplifies to
    \begin{align*}
        \hatQ_{H}^{\lambda,i}=\argmin_{f\in\calF_{h}}\sum_{\tau=1}^{K}\big(f(s_{\tau,h}^{i},a_{\tau,h}^{i})-r_{\tau,h}^{i}\big)^{2}
    \end{align*}
    This corresponds to the classical non-parametric regression problem, Theorem 11.4 of \citet{gyorfi2002distribution} shows that the performance guarantee can be derived as
    \begin{align*}
        \bbE_{\rho_{H}^{\rmb,i}}\Big[\big|Q_{H}^{\lambda,i}(s,a)-\hatQ_{H}^{\lambda,i}(s,a)\big|^{2}\Big]=O\bigg(\frac{B_{H}^{4}}{K}\log\frac{\calN_{\infty}(5B_{H}/K,\calF_{H})}{\delta}\bigg).
    \end{align*}
    
    For a time step $h\in[H-1]$, we have that
    \begin{align}
        &\bbE_{\rho_{h}^{\rmb,i}}\Big[\big|Q_{h}^{\lambda,i}(s,a)-\hatQ_{h}^{\lambda,i}(s,a)\big|^{2}\Big]\nonumber\\
        &\quad\leq 2\bbE_{\rho_{h}^{\rmb,i}}\Big[\big|Q_{h}^{\lambda,i}(s,a)-r_{h}(s,a,z_{h}^{i/N})-(\calP_{h}\hatV_{h+1}^{\lambda,i})(s,a)\big|^{2}\Big]\nonumber\\
        &\quad\qquad+2\bbE_{\rho_{h}^{\rmb,i}}\Big[\big|r_{h}(s,a,z_{h}^{i/N})+(\calP_{h}\hatV_{h+1}^{\lambda,i})(s,a)-\hatQ_{h}^{\lambda,i}(s,a)\big|^{2}\Big].\label{ieq:8}
    \end{align}
    For the first term in the right-hand side of inequality~\eqref{ieq:8}, we have
    \begin{align}
        &\bbE_{\rho_{h}^{\rmb,i}}\Big[\big|Q_{h}^{\lambda,i}(s,a)-r_{h}(s,a,z_{h}^{i/N})-(\calP_{h}\hatV_{h+1}^{\lambda,i})(s,a)\big|^{2}\Big]\nonumber\\
        &\quad=\bbE_{\rho_{h}^{\rmb,i}}\Big[\big|r_{h}(s,a,z_{h}^{i/N})+(\calT_{h}^{\pi_{t}^{i/N}}Q_{h+1}^{\lambda,i})(s,a)-r_{h}(s,a,z_{h}^{i/N})-(\calT_{h}^{\pi_{t}^{i/N}}\hatQ_{h+1}^{\lambda,i})(s,a)\big|^{2}\Big]\nonumber\\
        &\quad\leq C_{1}^{2}\cdot\bbE_{\rho_{h+1}^{\rmb,i}}\Big[\big|Q_{h+1}^{\lambda,i}(s,a)-\hatQ_{h+1}^{\lambda,i}(s,a)\big|^{2}\Big],\label{ieq:9}
    \end{align}
    where the equality follows from the definition of operator $\calT$, and the inequality follows from Assumption~\ref{assump:concen}. In the follow, we control the second term in the right-hand side of inequality~\eqref{ieq:8}. For any function $g\in\calF_{h+1}$, we define the value function at time $h+1$ for $\pi_{t}^{i/N}$ induced by $g$ as
    \begin{align*}
        V_{g}^{i}(s)=\langle g(s,\cdot),\pi_{h+1}^{i/N}(\cdot\,|\,s)\rangle+\lambda R\big(\pi_{h+1}^{i/N}(\cdot\,|\,s)\big).
    \end{align*}
    Then the value function defined in Line 8 of Algorithm~\ref{algo:est} can be expressed as $\hatV_{h+1}^{\lambda,i}=V_{\hatQ_{h+1}^{\lambda,i}}^{i}$. For any function $g\in\calF_{h+1}$, we define the regression problem and the corresponding estimate
    \begin{align*}
        \hatQ_{g}=\argmin_{f\in\calF_{h}}\sum_{\tau=1}^{K}\big(f(s_{\tau,h}^{i},a_{\tau,h}^{i})-r_{\tau,h}^{i}-V_{g}^{i}(s_{\tau,h+1}^{i})\big)^{2}.
    \end{align*}
    Then $\hatQ_{h}^{\lambda,i}=\hatQ_{\hatQ_{h+1}^{\lambda,i}}$. Thus, the second term on the right-hand side of inequality~\ref{ieq:8} can be bounded as
    \begin{align*}
        &\bbE_{\rho_{h}^{\rmb,i}}\Big[\big|r_{h}(s,a,z_{h}^{i/N})+(\calP_{h}\hatV_{h+1}^{\lambda,i})(s,a)-\hatQ_{h}^{\lambda,i}(s,a)\big|^{2}\Big]\\
        &\quad\leq \sup_{g\in\calF_{h+1}}\bbE_{\rho_{h}^{\rmb,i}}\Big[\big|r_{h}(s,a,z_{h}^{i/N})+(\calP_{h} V_{g}^{i})(s,a)-\hatQ_{g}(s,a)\big|^{2}\Big].
    \end{align*}
    The term inside the supremum can be handled as
    \begin{align*}
        &\bbE_{\rho_{h}^{\rmb,i}}\Big[\big|r_{h}(s,a,z_{h}^{i/N})+(\calP_{h} V_{g}^{i})(s,a)-\hatQ_{g}(s,a)\big|^{2}\Big]\\
        &\quad=\bbE_{\tilde{\rho}_{h}^{\rmb,i}}\Big[\big|r_{h}(s,a,z_{h}^{i/N})+ V_{g}^{i}(s^{\prime})-\hatQ_{g}(s,a)\big|^{2}\Big]\\
        &\quad\qquad-\bbE_{\tilde{\rho}_{h}^{\rmb,i}}\Big[\big|r_{h}(s,a,z_{h}^{i/N})+(\calP_{h} V_{g}^{i})(s,a)-r_{h}(s,a,z_{h}^{i/N})- V_{g}^{i}(s^{\prime})\big|^{2}\Big],
    \end{align*}
    which follows from the basic calculation. For any $g\in\calF_{h+1}$, Assumption~\ref{assump:complete} implies that
    \begin{align*}
        r_{h}(\cdot,\cdot,z_{h}^{i/N})+(\calP_{h}V_{g}^{i})\in\calF_{h}.
    \end{align*}
    Thus, the definition of $\hatQ_{g}$ shows that
    \begin{align*}
        &\sum_{\tau=1}^{K}\big(r_{\tau,h}^{i}+V_{g}^{i}(s_{\tau,h+1}^{i})-\hatQ_{g}(s_{\tau,h}^{i},a_{\tau,h}^{i})\big)^{2}\leq\sum_{\tau=1}^{K}\big(r_{\tau,h}^{i}+V_{g}^{i}(s_{\tau,h+1}^{i})-r_{\tau,h}^{i}-(\calP_{h}V_{g}^{i})(s_{\tau,h}^{i},a_{\tau,h}^{i})\big)^{2}.
    \end{align*}
    Further, we have that
    \begin{align*}
        &\sup_{g\in\calF_{h+1}}\bbE_{\rho_{h}^{\rmb,i}}\Big[\big|r_{h}(s,a,z_{h}^{i/N})+(\calP_{h} V_{g}^{i})(s,a)-\hatQ_{g}(s,a)\big|^{2}\Big]\\
        &\quad\leq \sup_{g\in\calF_{h+1}}\bigg\{\bbE_{\tilde{\rho}_{h}^{\rmb,i}}\Big[\big|r_{h}(s,a,z_{h}^{i/N})+ V_{g}^{i}(s^{\prime})-\hatQ_{g}(s,a)\big|^{2}\Big]-\bbE_{\tilde{\rho}_{h}^{\rmb,i}}\Big[\big|(\calP_{h} V_{g}^{i})(s,a)-V_{g}^{i}(s^{\prime})\big|^{2}\Big]\\
        &\quad\qquad -\frac{2}{K}\Big[\sum_{\tau=1}^{K}\big(r_{\tau,h}^{i}+V_{g}^{i}(s_{\tau,h+1}^{i})-\hatQ_{g}(s_{\tau,h}^{i},a_{\tau,h}^{i})\big)^{2}-\sum_{\tau=1}^{K}\big(V_{g}^{i}(s_{\tau,h+1}^{i})-(\calP_{h}V_{g}^{i})(s_{\tau,h}^{i},a_{\tau,h}^{i})\big)^{2}\Big]\bigg\}\\
        &\quad\leq \sup_{g\in\calF_{h+1},f\in\calF_{h}}\bigg\{\bbE_{\tilde{\rho}_{h}^{\rmb,i}}\Big[\big|r_{h}(s,a,z_{h}^{i/N})+ V_{g}^{i}(s^{\prime})-f(s,a)\big|^{2}\Big]-\bbE_{\tilde{\rho}_{h}^{\rmb,i}}\Big[\big|(\calP_{h} V_{g}^{i})(s,a)-V_{g}^{i}(s^{\prime})\big|^{2}\Big]\\
        &\quad\qquad -\frac{2}{K}\Big[\sum_{\tau=1}^{K}\big(r_{\tau,h}^{i}+V_{g}^{i}(s_{\tau,h+1}^{i})-f(s_{\tau,h}^{i},a_{\tau,h}^{i})\big)^{2}-\sum_{\tau=1}^{K}\big(V_{g}^{i}(s_{\tau,h+1}^{i})-(\calP_{h}V_{g}^{i})(s_{\tau,h}^{i},a_{\tau,h}^{i})\big)^{2}\Big]\bigg\}.
    \end{align*}
    We define that
    \begin{align*}
        e_{g,f}(s,a,s^{\prime})=\big(r_{h}(s,a,z_{h}^{i/N})+V_{g}^{i}(s^{\prime})-f(s,a)\big)^{2}-\big(V_{g}^{i}(s^{\prime})-(\calP_{h}V_{g}^{i})(s,a)\big)^{2}.
    \end{align*}
    Then we have that
    \begin{align*}
        &\bbE_{\rho_{h}^{\rmb,i}}\Big[\big|r_{h}(s,a,z_{h}^{i/N})+(\calP_{h}\hatV_{h+1}^{\lambda,i})(s,a)-\hatQ_{h}^{\lambda,i}(s,a)\big|^{2}\Big]\\
        &\quad\leq \sup_{g\in\calF_{h+1},f\in\calF_{h}} \bigg\{\bbE_{\tilde{\rho}_{h}^{\rmb,i}}\big[e_{g,f}(s,a,s^{\prime})\big]-\frac{2}{K}\sum_{\tau=1}^{K}e_{g,f}(s_{\tau,h}^{i},a_{\tau,h}^{i},s_{\tau,h+1}^{i})\bigg\}.
    \end{align*}
    Define $B_{H}=(1+\lambda\log|\calA|)H$. The bound for the generalization error is as follows.
    \begin{proposition}\label{prop:gene}
        For any $\varepsilon,\gamma,\theta>0$ we have
        \begin{align*}
            &\bbP\bigg(\exists f\in\calF_{h},g\in\calF_{h+1}, \bbE_{\tilde{\rho}_{h}^{\rmb,i}}\big[e_{g,f}(s,a,s^{\prime})\big]-\frac{1}{K}\sum_{\tau=1}^{K}e_{g,f}(s_{\tau,h}^{i},a_{\tau,h}^{i},s_{\tau,h+1}^{i})\\
            &\quad\qquad \geq \varepsilon\Big(\gamma+\theta+\bbE_{\tilde{\rho}_{h}^{\rmb,i}}\big[e_{g,f}(s,a,s^{\prime})\big]\Big)\bigg)\\
            &\quad\leq 12 \calN_{\infty}\bigg(\frac{\varepsilon\theta}{320B_{H}^{3}},\calF_{h}\bigg)\cdot\calN_{\infty}\bigg(\frac{\varepsilon\theta}{320B_{H}^{3}},\calF_{h+1}\bigg)\cdot \exp\bigg(-\frac{\varepsilon^{2}(1-\varepsilon)\gamma K}{280(1+\varepsilon)B_{H}^{4}}\bigg).
        \end{align*}
    \end{proposition}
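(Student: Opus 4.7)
The plan is to combine Bernstein's inequality for a fixed pair $(f,g)$ with an $\ell_\infty$ covering argument over $\calF_h\times\calF_{h+1}$, exploiting a Bellman-style variance bound on $e_{g,f}$ to convert the usual additive Hoeffding-type tail into the multiplicative form displayed in the statement. The structural key is to expose a mean-zero cross term in $e_{g,f}$: setting $A(s,a) := r_h(s,a,z_h^{i/N}) + (\calP_h V_g^i)(s,a) - f(s,a)$ and $B(s,a,s') := V_g^i(s') - (\calP_h V_g^i)(s,a)$ gives $r_h(s,a,z_h^{i/N}) + V_g^i(s') - f(s,a) = A+B$ with $\bbE[B\mid s,a]=0$, so
\begin{align*}
e_{g,f}(s,a,s') \,=\, (A+B)^2 - B^2 \,=\, A^2 + 2AB.
\end{align*}
Taking expectations yields $\bbE_{\tilde{\rho}_h^{\rmb,i}}[e_{g,f}] = \bbE[A^2] \geq 0$; computing the second moment with $\bbE[A^3 B] = 0$ together with the uniform range bounds $|A|, |B| = O(B_H)$ (which follow from $\|f\|_\infty, \|V_g^i\|_\infty = O(B_H)$ and $|r_h|\le 1$) produces the Bellman variance inequality $\mathrm{Var}_{\tilde{\rho}_h^{\rmb,i}}[e_{g,f}] \leq c\,B_H^2\,\bbE_{\tilde{\rho}_h^{\rmb,i}}[e_{g,f}]$, along with the almost-sure bound $|e_{g,f}| \leq c'B_H^2$.

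With the variance controlled by the mean, I would apply Bernstein's inequality to the i.i.d.\ sample $\{e_{g,f}(s_{\tau,h}^i,a_{\tau,h}^i,s_{\tau,h+1}^i)\}_{\tau=1}^K$ for each fixed $(f,g)$, with deviation threshold $t = \varepsilon(\gamma+\mu)$, where $\mu := \bbE_{\tilde{\rho}_h^{\rmb,i}}[e_{g,f}]$. Substituting the above variance and range bounds into Bernstein's exponent $-Kt^2/(2\sigma^2 + \tfrac{2}{3}Mt)$ and using the elementary inequalities $(\gamma+\mu)^2 \geq (\gamma+\mu)\gamma$, $\mu \leq \gamma+\mu$, and $\varepsilon<1$, the denominator collapses to $O\bigl(B_H^2(\gamma+\mu)\bigr)$ (tracked with explicit $(1+\varepsilon)$ loss) while the numerator is bounded below by $\varepsilon^2(1-\varepsilon)\gamma K$, yielding the advertised exponent $\varepsilon^2(1-\varepsilon)\gamma K / \bigl[(1+\varepsilon)\cdot O(B_H^4)\bigr]$.

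Finally I would upgrade the pointwise estimate to a uniform bound over $\calF_h\times\calF_{h+1}$ via $\ell_\infty$ covers of radius $r = \varepsilon\theta/(320\,B_H^3)$. For arbitrary $(f,g)$, let $(\tilde f,\tilde g)$ be their nearest net elements. Since $e_{g,f}$ depends linearly-quadratically on $(f, V_g^i)$ through $(r_h + V_g^i - f)^2 - (V_g^i - \calP_h V_g^i)^2$, a direct expansion together with $\|V_g^i-V_{\tilde g}^i\|_\infty \leq r$, $\|f-\tilde f\|_\infty \leq r$, and $\|f\|_\infty, \|V_g^i\|_\infty = O(B_H)$ yields a pointwise and in-expectation perturbation $|e_{g,f}-e_{\tilde g,\tilde f}| = O(B_H r)$ and a corresponding $O(B_H^3 r)$ variance perturbation; both are at most $O(\varepsilon\theta)$ by the choice of $r$, and the multiplicative slack $\varepsilon\theta$ in the statement is engineered precisely to absorb them when transferring Bernstein's bound back from the net. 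Union bounding over the $\calN_\infty(r,\calF_h)\cdot\calN_\infty(r,\calF_{h+1})$ net pairs (the constant $12$ comes from the slack terms in the bidirectional transfer) then produces the stated prefactor $12\,\calN_\infty(\cdot,\calF_h)\cdot\calN_\infty(\cdot,\calF_{h+1})$. The main obstacle is the variance inequality in the first step: the cancellation $\bbE[AB] = 0$ arising from the Bellman/martingale structure of the regression target is precisely what converts a naive $O(B_H^4)$ variance into the mean-controlled $O(B_H^2)\,\bbE[e_{g,f}]$, and only with this refinement does the resulting multiplicative tail close the recursion $\bbE_{\rho_h^{\rmb,i}}[|Q_h^{\lambda,i}-\hatQ_h^{\lambda,i}|^2] \leq 2C_1^2\,\bbE_{\rho_{h+1}^{\rmb,i}}[|Q_{h+1}^{\lambda,i}-\hatQ_{h+1}^{\lambda,i}|^2] + (\text{generalization error})$ at the advertised $K^{-1/2}$ rate. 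The covering step itself is routine once $r$ is small enough relative to the $\varepsilon\theta$ slack.
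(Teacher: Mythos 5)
Your proposal is correct in substance, but it takes a genuinely different route from the paper. The paper does not prove the concentration inequality itself: it invokes Theorem 11.4 of \citet{gyorfi2002distribution} (a uniform relative-deviation bound for classes satisfying $|e|\leq K_{1}$ and $\bbE[e^{2}]\leq K_{2}\bbE[e]$, proved there via symmetrization with a ghost sample and empirical $\ell_{1}$ covers), and the appendix only supplies the one missing ingredient, namely the conversion $\calN_{1}(\delta,\{e_{g,f}\},\cdot)\leq\calN_{\infty}(\delta/8B_{H},\calF_{h})\cdot\calN_{\infty}(\delta/8B_{H},\calF_{h+1})$; the constants $12$, $320$, $280$ are inherited from Gy\"orfi's $14$, $20$, $214$ through that conversion. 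You instead reprove the whole inequality from scratch: Bernstein for each fixed $(f,g)$, the variance condition $\bbE[e_{g,f}^{2}]\leq O(B_{H}^{2})\,\bbE[e_{g,f}]$ via the decomposition $e_{g,f}=A^{2}+2AB$ with $\bbE[B\,|\,s,a]=0$ and $\bbE[A^{3}B]=0$, and a union bound over a data-independent $\ell_{\infty}$ net with the $\varepsilon\theta$ slack absorbing the transfer error. This is legitimate here because the samples are i.i.d.\ across the $K$ independent episodes and the net is deterministic, so no symmetrization is needed; it is also more self-contained and makes explicit the variance condition that the paper's appendix leaves implicit (in Gy\"orfi et al.\ it is a hypothesis of Theorem 11.4, verified there only for the least-squares excess-loss class). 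What you lose is the literal constants: a direct Bernstein-plus-union-bound argument yields a bound of the identical functional form $\calN_{\infty}(c\varepsilon\theta/B_{H}^{3},\calF_{h})\cdot\calN_{\infty}(c\varepsilon\theta/B_{H}^{3},\calF_{h+1})\exp(-c'\varepsilon^{2}(1-\varepsilon)\gamma K/((1+\varepsilon)B_{H}^{4}))$ but not necessarily the exact prefactor $12$ or the exact $320$ and $280$; in particular your attribution of the factor $12$ to ``slack in the bidirectional transfer'' is not where it comes from in the paper (it is the symmetrization constant). Since these constants are immaterial to how the proposition is used in Theorem~\ref{thm:optest}, this is a cosmetic rather than substantive discrepancy.
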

    \begin{proof}[Proof of Proposition~\ref{prop:gene}]
        See Appendix~\ref{app:gene}.
    \end{proof}
    We take $\varepsilon=1/2$, $\gamma=\theta=t/2$, then we have
    \begin{align*}
        &\bbP\bigg(\sup_{g\in\calF_{h+1},f\in\calF_{h}} \bigg\{\bbE_{\tilde{\rho}_{h}^{\rmb,i}}\big[e_{g,f}(s,a,s^{\prime})\big]-\frac{2}{K}\sum_{\tau=1}^{K}e_{g,f}(s_{\tau,h}^{i},a_{\tau,h}^{i},s_{\tau,h+1}^{i})\bigg\}\geq t\bigg)\\
        &\quad\leq 12 \calN_{\infty}\bigg(\frac{t}{1280B_{H}^{3}},\calF_{h}\bigg)\cdot\calN_{\infty}\bigg(\frac{t}{1280B_{H}^{3}},\calF_{h+1}\bigg)\cdot \exp\bigg(-\frac{t K}{6720B_{H}^{4}}\bigg).
    \end{align*}
    Thus, with probability at least $1-\delta$, we have
    \begin{align*}
        &\bbE_{\rho_{h}^{\rmb,i}}\Big[\big|r_{h}(s,a,z_{h}^{i/N})+(\calP_{h}\hatV_{h+1}^{\lambda,i})(s,a)-\hatQ_{h}^{\lambda,i}(s,a)\big|^{2}\Big]\\
        &\quad=O\bigg(\frac{B_{H}^{4}}{K}\log\frac{\calN_{\infty}(5B_{H}/K,\calF_{h})\calN_{\infty}(5B_{H}/K,\calF_{h+1})}{\delta}\bigg)
    \end{align*}
    Substituting this inequality and inequality~\eqref{ieq:9} into inequality~\eqref{ieq:8}, we have that
    \begin{align*}
        &\bbE_{\rho_{h}^{\rmb,i}}\Big[\big|Q_{h}^{\lambda,i}(s,a)-\hatQ_{h}^{\lambda,i}(s,a)\big|^{2}\Big]\nonumber\\
        &\quad\leq C_{1}^{2}\cdot\bbE_{\rho_{h+1}^{\rmb,i}}\Big[\big|Q_{h+1}^{\lambda,i}(s,a)-\hatQ_{h+1}^{\lambda,i}(s,a)\big|^{2}\Big]\\
        &\quad\qquad +O\bigg(\frac{B_{H}^{4}}{K}\log\frac{\calN_{\infty}(5B_{H}/K,\calF_{h})\calN_{\infty}(5B_{H}/K,\calF_{h+1})}{\delta}\bigg).
    \end{align*}
    Define the maximal covering number $\calN_{\infty}(\delta,\calF_{[H]})=\max_{h\in[H]}\calN_{\infty}(\delta,\calF_{h})$. Then from the union bound, we have that with probability at least $1-\delta$, for any $i\in[N]$, $h\in[H]$
    \begin{align*}
        \bbE_{\rho_{h}^{\rmb,i}}\Big[\big|Q_{h}^{\lambda,i}(s,a)-\hatQ_{h}^{\lambda,i}(s,a)\big|^{2}\Big]=O\bigg(C_{1}^{2}\frac{HB_{H}^{4}}{K}\log\frac{NH\cdot\calN_{\infty}(5B_{H}/K,\calF_{[H]})}{\delta}\bigg).
    \end{align*}
    
    \textbf{Step 2: Combine the estimation result with the optimization result in Theorem~\ref{thm:monoopt}.}
    
    From the proof of Theorem~\ref{thm:monoopt}, we need to bound the term $\varepsilon_{h}^{\alpha}$. We divide the interval $\calI=[0,1]$ into $N$ small intervals $\calI_{i}=((i-1)/N,i/N]$ for $i\in\{2,\cdots,N\}$ and $\calI_{1}=[0,i/N]$. For any $\alpha\in\calI_{i}$, we have that
    \begin{align*}
        &\Big|\big\langle\hatQ_{h}^{\lambda,\alpha}(s_{h}^{\alpha},\cdot,\pi_{t}^{\alpha},\mu_{t}^{\calI})-Q_{h}^{\lambda,\alpha}(s_{h}^{\alpha},\cdot,\pi_{t}^{\alpha},\mu_{t}^{\calI}),\pi_{h}^{*,\alpha}(\cdot\,|\,s_{h}^{\alpha})-\pi_{t,h}^{\alpha}(\cdot\,|\,s_{h}^{\alpha}) \big\rangle\Big|\\
        &\quad\leq \Big|\big\langle\hatQ_{h}^{\lambda,i}(s_{h}^{\alpha},\cdot)-Q_{h}^{\lambda,i}(s_{h}^{\alpha},\cdot),\pi_{h}^{*,\alpha}(\cdot\,|\,s_{h}^{\alpha})-\pi_{t,h}^{\alpha}(\cdot\,|\,s_{h}^{\alpha}) \big\rangle\Big|\\
        &\quad\qquad+\Big|\big\langle Q_{h}^{\lambda,i}(s_{h}^{\alpha},\cdot)-Q_{h}^{\lambda,\alpha}(s_{h}^{\alpha},\cdot,\pi_{t}^{\alpha},\mu_{t}^{\calI}),\pi_{h}^{*,\alpha}(\cdot\,|\,s_{h}^{\alpha})-\pi_{t,h}^{\alpha}(\cdot\,|\,s_{h}^{\alpha}) \big\rangle\Big|\\
        &\quad \leq 2C_{1}\Big|\big\langle\hatQ_{h}^{\lambda,i}(s_{h}^{\alpha},\cdot)-Q_{h}^{\lambda,i}(s_{h}^{\alpha},\cdot),\pi_{t,h}^{\rmb,i}(\cdot\,|\,s_{h}^{\alpha}) \big\rangle\Big|+O\bigg(\frac{\lambda\log T}{N}\bigg),
    \end{align*}
    where the first inequality results from the triangle inequality, the second inequality results from Assumption~\ref{assump:concen}, the Lipschitzness of reward function in Assumption~\ref{assump:lip}, the Lipschiz constant of $R$ for distributions $p\geq \unif(\calA)/T$, and Cauchy–Schwarz inequality, and we omit the Lipschitz constant dependency on $L_{r}$ for ease of notation. Then with probability at least $1-\delta$, the first term on the right-hand side of this inequality can be controlled as
    \begin{align*}
        &\bbE_{\mu_{h}^{*,\alpha}}\Big[\Big|\big\langle\hatQ_{h}^{\lambda,i}(s_{h}^{\alpha},\cdot)-Q_{h}^{\lambda,i}(s_{h}^{\alpha},\cdot),\pi_{t,h}^{\rmb,i}(\cdot\,|\,s_{h}^{\alpha}) \big\rangle\Big|\Big]\\
        &\quad\leq C_{2}\bbE_{\rho_{h}^{\rmb,i}}\Big[\Big|\big\langle\hatQ_{h}^{\lambda,i}(s_{h}^{\alpha},\cdot)-Q_{h}^{\lambda,i}(s_{h}^{\alpha},\cdot),\pi_{t,h}^{\rmb,i}(\cdot\,|\,s_{h}^{\alpha}) \big\rangle\Big|\Big]\\
        &\quad\leq C_{2}\sqrt{\bbE_{\rho_{h}^{\rmb,i}}\Big[\big(\hatQ_{h}^{\lambda,i}(s,a)-Q_{h}^{\lambda,i}(s,a)\big)^{2}\Big]}\\
        &\quad\leq C_{1}C_{2}\frac{\sqrt{H}B_{H}^{2}}{\sqrt{K}}\log\frac{TNH\cdot\calN_{\infty}(5B_{H}/K,\calF_{[H]})}{\delta},
    \end{align*}
    where the first inequality results from Assumption~\ref{assump:concen}, the second inequality results from H\"older inequality, and the last inequality results from Step 1 and the union bound for $t\in[T]$. Thus, we have
    \begin{align*}
        \int_{0}^{1}\sum_{h=1}^{H}\bbE_{\mu_{h}^{*,\alpha}}[\varepsilon_{h}^{\alpha}]\rmd\alpha=O\bigg(C_{1}C_{2}\frac{H^{3/2}B_{H}^{2}}{\sqrt{K}}\log\frac{TNH\cdot\calN_{\infty}(5B_{H}/K,\calF_{[H]})}{\delta}+\frac{\lambda H\log T}{N}\bigg).
    \end{align*}
    Combined with the proof of Theorem~\ref{thm:monoopt}, this concludes the proof of Theorem~\ref{thm:optest}.
\end{proof}
\section{Proof of Proposition~\ref{prop:uniquene}}
\begin{proof}[Proof of Proposition~\ref{prop:uniquene}]
    The existence of the \ac{ne} follows from Theorem~\ref{thm:exist}. Here we only prove that there are at most one \ac{ne}. Suppose there exists two different \ac{ne}s $(\pi^{\calI},\mu^{\calI})$ and $(\tilde{\pi}^{\calI},\tilde{\mu}^{\calI})$. According to the definition of \ac{ne}, we have that
    \begin{align*}
        &\int_{0}^{1}J^{\lambda,\alpha}(\pi^{\alpha},\mu^{\calI})-J^{\lambda,\alpha}(\tilde{\pi}^{\alpha},\mu^{\calI})\rmd\alpha\geq 0\\
        &\int_{0}^{1}J^{\lambda,\alpha}(\tilde{\pi}^{\alpha},\tilde{\mu}^{\calI})-J^{\lambda,\alpha}(\pi^{\alpha},\tilde{\mu}^{\calI})\rmd\alpha\geq 0.
    \end{align*}
    Summing these two inequalities, we have
    \begin{align*}
        \int_{0}^{1}J^{\lambda,\alpha}(\pi^{\alpha},\mu^{\calI})+J^{\lambda,\alpha}(\tilde{\pi}^{\alpha},\tilde{\mu}^{\calI})-J^{\lambda,\alpha}(\tilde{\pi}^{\alpha},\mu^{\calI})-J^{\lambda,\alpha}(\pi^{\alpha},\tilde{\mu}^{\calI})\rmd\alpha\geq 0,
    \end{align*}
    which contradicts the strictly weak monotone condition.
\end{proof}
\section{Proof of Proposition~\ref{prop:mono}}
\begin{proof}[Proof of Proposition~\ref{prop:mono}]
    We first note that
    \begin{align*}
        J^{\lambda,\alpha}(\pi^{\alpha},\mu^{\calI})-J^{\lambda,\alpha}(\pi^{\alpha},\tilde{\mu}^{\calI})&=J^{\alpha}(\pi^{\alpha},\mu^{\calI})-J^{\alpha}(\pi^{\alpha},\tilde{\mu}^{\calI}),\\*
        J^{\lambda,\alpha}(\tilde{\pi}^{\alpha},\tilde{\mu}^{\calI})-J^{\lambda,\alpha}(\tilde{\pi}^{\alpha},\mu^{\calI})&=J^{\alpha}(\tilde{\pi}^{\alpha},\tilde{\mu}^{\calI})-J^{\alpha}(\tilde{\pi}^{\alpha},\mu^{\calI}),
    \end{align*}
    since the transition kernel is independent of the distribution flow, where we denote $J^{\lambda,\alpha}(\pi^{\alpha},\mu^{\calI})$ for $\lambda=0$ as $J^{\alpha}(\pi^{\alpha},\mu^{\calI})$. Thus, the desired inequality is equivalent to
    \begin{align*}
        \int_{0}^{1}J^{\alpha}(\pi^{\alpha},\mu^{\calI})+J^{\alpha}(\tilde{\pi}^{\alpha},\tilde{\mu}^{\calI})-J^{\alpha}(\tilde{\pi}^{\alpha},\mu^{\calI})-J^{\alpha}(\pi^{\alpha},\tilde{\mu}^{\calI})\rmd\alpha\leq 0.
    \end{align*}
    We define $\rho_{h}^{\calI}, \tilde{\rho}_{h}^{\calI}\in\Delta(\calS\times\calA)^{\calI}$ for $h\in[H]$ as $\rho_{h}^{\alpha}(s,a)=\mu_{h}^{\alpha}(s)\pi_{h}^{\alpha}(a\,|\,s)$ and $\tilde{\rho}_{h}^{\alpha}(s,a)=\tilde{\mu}_{h}^{\alpha}(s)\tilde{\pi}_{h}^{\alpha}(a\,|\,s)$ for all $h\in[H]$ and $\alpha\in\calI$. Then the weakly monotone condition implies that
    \begin{align*}
        \int_{\calI}\sum_{a\in\calA}\int_{\calS} \big(\rho_{h}^{\alpha}(s,a)-\tilde{\rho}_{h}^{\alpha}(s,a)\big)\Big(r_{h}\big(s,a,z_{h}^{\alpha}(\mu_{h}^{\calI},W_{h})\big)-r_{h}\big(s,a,z_{h}^{\alpha}(\tilde{\mu}_{h}^{\calI},W_{h})\big)\Big)\rmd s\rmd \alpha \leq 0.
    \end{align*}
    Then we have that
    \begin{align*}
        &\int_{0}^{1}J^{\alpha}(\pi^{\alpha},\mu^{\calI})+J^{\alpha}(\tilde{\pi}^{\alpha},\tilde{\mu}^{\calI})-J^{\alpha}(\tilde{\pi}^{\alpha},\mu^{\calI})-J^{\alpha}(\pi^{\alpha},\tilde{\mu}^{\calI})\rmd\alpha\\
        &\quad=\sum_{h=1}^{H}\int_{\calI}\sum_{a\in\calA}\int_{\calS} \big(\rho_{h}^{\alpha}(s,a)-\tilde{\rho}_{h}^{\alpha}(s,a)\big)\Big(r_{h}\big(s,a,z_{h}^{\alpha}(\mu_{h}^{\calI},W_{h})\big)-r_{h}\big(s,a,z_{h}^{\alpha}(\tilde{\mu}_{h}^{\calI},W_{h})\big)\Big)\rmd s\rmd \alpha\\
        &\quad\leq 0.
    \end{align*}
    Thus, we conclude the proof of Proposition~\ref{prop:mono}.
\end{proof}
\section{Supporting Propositions and Lemmas}
\subsection{Proof of Proposition~\ref{prop:mfgcontract}}\label{app:mfgcontract}
\begin{proof}[Proof of Proposition~\ref{prop:mfgcontract}]
        We first prove that $T_{h}^{\nu}u\in C_{h}(\calS)$ for any $u\in C_{h+1}(\calS)$. By Proposition 7.32 in \cite{bertsekas1996stochastic}, $T_{h}^{\nu}u$ is continuous. The sup-norm of it can be upper-bounded as
        \begin{align*}
            \|T_{h}^{\nu}u\|_{\infty}\leq 1+\lambda \log|\calA|+(H-h)(1+\lambda \log|\calA|)=(H-h+1)(1+\lambda \log|\calA|).
        \end{align*}
        For the second claim, we have that 
        \begin{align*}
            &\|T_{h}^{\nu}u-T_{h}^{\nu}v\|_{\infty}\\
            &\quad=\sup_{s\in\calS}\big|\sup_{p\in\Delta(\calA)} \sum_{a\in\calA}p(a)\barr_{h}(s,a,\bar{\nu}_{h})-\lambda R(p)+\sum_{a\in\calA}\int_{\calS}p(a)\barP_{h}(s^{\prime}\,|\,s,a,\bar{\nu}_{h})u(s^{\prime})\rmd s^{\prime}\\
            &\quad\qquad-\sup_{q\in\Delta(\calA)} \sum_{a\in\calA}q(a)\barr_{h}(s,a,\bar{\nu}_{h})-\lambda R(q)+\sum_{a\in\calA}\int_{\calS}q(a)\barP_{h}(s^{\prime}\,|\,s,a,\bar{\nu}_{h})v(s^{\prime})\rmd s^{\prime}\big|\\
            &\quad\leq\sup_{s\in\calS,p\in\Delta(\calA)}\Big|\sum_{a\in\calA}\int_{\calS}p(a)\barP_{h}(s^{\prime}\,|\,s,a,\bar{\nu}_{h})\big(v(s^{\prime})-u(s^{\prime})\big)\rmd s^{\prime}\Big|\\
            &\quad\leq \|u-v\|_{\infty},
        \end{align*}
        where the first inequality results from that $|\sup_{x\in\calX}f(x)-\sup_{x\in\calX}g(x)|\leq\sup_{x\in\calX}|f(x)-g(x)|$ for any real-valued functions $f,g$ and set $\calX$. Thus, we conclude the proof of Proposition~\ref{prop:mfgcontract}.
    \end{proof}
\subsection{Proof of Proposition~\ref{prop:closedgraph}}\label{app:closedgraph}
\begin{proof}[Proof of Proposition~\ref{prop:closedgraph}]
        Let $\{(\nu^{(n)},\xi^{(n)})\}_{n\geq 1}\subseteq\Xi\times\Xi$ be a sequence such that $\xi^{(n)}\in\Gamma(\nu^{(n)})$ for all $n\geq 1$ and $(\nu^{(n)},\xi^{(n)})\rightarrow(\nu,\xi)$ as $n\rightarrow\infty$ with respect to the total variation distance for some $(\nu,\xi)\in\Xi\times\Xi$. To prove the graph of $\Gamma$ is closed, we need to prove that $\xi\in\Gamma(\nu)$.
        
        We first prove that $\xi\in A(\nu)$. For any $n\geq 1$ and $h\in[H]$, we have
        \begin{align*}
            \bar{\xi}_{h+1}^{(n)}(s^{\prime})=\sum_{a\in\calA}\int_{\calS}\barP_{h}(\cdot\,|\,s,a,\bar{\nu}_{h}^{(n)})\nu^{(n)}_{h}(\rmd s,a).
        \end{align*}
        Since $\xi^{(n)}\rightarrow\xi$ in total variation, $\xi_{h}^{(n)}\rightarrow\xi_{h}$ weakly. Take any bounded continuous function $g\in C_{b}(\calS)$. Then
        \begin{align}
            \lim_{n\rightarrow\infty}\sum_{a\in\calA}\int_{\calS}\int_{\calS}g(s^{\prime})\barP_{h}(\rmd s^{\prime}\,|\,s,a,\bar{\nu}_{h}^{(n)})\nu_{h}^{(n)}(\rmd s,a)=\sum_{a\in\calA}\int_{\calS}\int_{\calS}g(s^{\prime})\barP_{h}(\rmd s^{\prime}\,|\,s,a,\bar{\nu}_{h})\nu_{h}(\rmd s,a),\label{eq:12}
        \end{align}
        which results from \cite{langen1981convergence}, $\int_{\calS}g(s^{\prime})\barP_{h}(\rmd s^{\prime}\,|\,s,a,\bar{\nu}_{h}^{(n)})$ converges to $\int_{\calS}g(s^{\prime})\barP_{h}(\rmd s^{\prime}\,|\,s,a,\bar{\nu}_{h})$ continuously, and $\nu^{(n)}$ converges to $\nu$. Eqn.~\eqref{eq:12} implies that $\sum_{a\in\calA}\int_{\calS}\barP_{h}(\cdot\,|\,s,a,\bar{\nu}_{h}^{(n)})\nu^{(n)}_{h}(\rmd s,a)$ weakly converges to $\sum_{a\in\calA}\int_{\calS}\barP_{h}(\cdot\,|\,s,a,\bar{\nu}_{h})\nu_{h}(\rmd s,a)$. Thus, we have
        \begin{align*}
            \bar{\xi}_{h+1}(\cdot)=\sum_{a\in\calA}\int_{\calS}\barP_{h}(\cdot\,|\,s,a,\bar{\nu}_{h})\nu_{h}(\rmd s,a).
        \end{align*}
        We then prove that $\xi\in B(\nu)$. Since $\xi^{(n)}\in B(\nu^{(n)})$, there exists sets $A_{h}^{(n)}\subseteq\calS$ for all $n\geq 1$ and $h\in[H]$ such that $\bar{\xi}_{h}^{(n)}(A_{h}^{(n)})=1$, and for any $n\geq 1$, $h\in[H]$ and $s\in A_{h}^{(n)}$, the following equation holds
        \begin{align}
            \sum_{a\in\calA}\xi_{h}^{(n)}(a\,|\,s)\barr_{h}(s,a,\bar{\nu}_{h}^{(n)})-\lambda R\big(\xi_{h}^{(n)}(\cdot\,|\,s)\big)+\sum_{a\in\calA}\int_{\calS}\xi_{h}^{(n)}(a\,|\,s)\barP_{h}(s^{\prime}\,|\,s,a,\bar{\nu}_{h}^{(n)})\barV_{h+1}^{\lambda,\nu^{(n)}}(s^{\prime})\rmd s^{\prime}=\barV_{h}^{\lambda,\nu^{(n)}}(s).\label{eq:14}
        \end{align}
        
        We construct the set $A_{h}=\cap_{k=1}^{\infty}\cup_{n=k}^{\infty}A_{h}^{(n)}$ for all $h\in[H]$. Then the following proposition shows that $\bar{\xi}_{h}(A_{h})=1$ for all $h\in[H]$.
        \begin{proposition}\label{prop:ioevent}
            Suppose that $\{P_{n}\}_{n= 1}^{\infty}$ and $P$ are distributions on a measurable space, and $P_{n}\rightarrow P$ with respect to the total variation distance as $n\rightarrow\infty$. Take any sequence of sets $\{A_{n}\}_{n=1}^{\infty}$ such that $P_{n}(A_{n})=1$. Then we have
            \begin{align*}
                P\big(\cap_{k=1}^{\infty}\cup_{n=k}^{\infty}A_{n}\big)=1.
            \end{align*}
        \end{proposition}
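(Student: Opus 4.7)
The plan is to exploit the total variation convergence together with a standard continuity-from-above argument applied to the nested unions that appear in the limit superior $\limsup_{n} A_{n} = \bigcap_{k=1}^{\infty}\bigcup_{n=k}^{\infty} A_{n}$.

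First, for each fixed $k \geq 1$, set $B_{k} = \bigcup_{n=k}^{\infty} A_{n}$. Since $A_{n} \subseteq B_{k}$ whenever $n \geq k$, the hypothesis $P_{n}(A_{n}) = 1$ immediately gives $P_{n}(B_{k}) \geq P_{n}(A_{n}) = 1$ for all $n \geq k$, hence $P_{n}(B_{k}) = 1$ for all $n \geq k$. Because total variation convergence $P_{n} \to P$ implies $|P_{n}(B_{k}) - P(B_{k})| \to 0$ uniformly over measurable sets, we may pass to the limit $n \to \infty$ and conclude $P(B_{k}) = 1$ for every $k \geq 1$.

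Next, observe that $\{B_{k}\}_{k\geq 1}$ is a decreasing sequence of measurable sets, since $B_{k+1} = \bigcup_{n=k+1}^{\infty} A_{n} \subseteq \bigcup_{n=k}^{\infty} A_{n} = B_{k}$. Applying continuity of the probability measure $P$ from above to this decreasing sequence yields
\begin{equation*}
    P\Big(\bigcap_{k=1}^{\infty}\bigcup_{n=k}^{\infty} A_{n}\Big) = P\Big(\bigcap_{k=1}^{\infty} B_{k}\Big) = \lim_{k \to \infty} P(B_{k}) = 1,
\end{equation*}
which is exactly the claimed conclusion.

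There is no real obstacle here: the argument is a two-line combination of (i) the fact that total variation convergence preserves measures of fixed sets in the limit, and (ii) downward continuity of probability measures. The only minor point to verify is the monotonicity direction of $B_{k}$, ensuring that downward (rather than upward) continuity is the correct tool.
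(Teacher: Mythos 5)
Your proof is correct. Each step checks out: for $n\geq k$ you have $A_{n}\subseteq B_{k}=\cup_{m=k}^{\infty}A_{m}$, so $P_{n}(B_{k})=1$; total variation convergence gives $|P_{n}(B_{k})-P(B_{k})|\leq \tv(P_{n},P)\to 0$ for the \emph{fixed} set $B_{k}$, hence $P(B_{k})=1$ for every $k$; and downward continuity of the (finite) measure $P$ along the decreasing sequence $B_{k}$ finishes the argument.

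Your route is genuinely different from, and cleaner than, the paper's. The paper works with the complementary sets $\cap_{n=k}^{\infty}A_{n}^{\complement}$ and the varying intersections $G\cap A_{n}$ and $B_{k}\cap A_{n}$, performs an interchange of $\limsup_{n}$ and $\liminf_{k}$, and then invokes the Portmanteau theorem to pass from $\limsup_{n}P_{n}(G)$ to $P(G)$. That last step is somewhat delicate, since Portmanteau bounds $\limsup_{n}P_{n}(F)$ by $P(F)$ only for closed sets (or $P$-continuity sets), whereas $G$ is merely measurable; the paper's argument is rescued only because total variation convergence in fact gives $P_{n}(G)\to P(G)$ for \emph{every} measurable set, which makes the weak-convergence machinery superfluous. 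Your proof applies the total variation hypothesis exactly where it is needed --- to the fixed sets $B_{k}$ --- and thereby avoids both the limit interchange and the appeal to Portmanteau. One further merit of your approach is that it makes transparent why total variation (rather than weak) convergence is essential: the counterexample $P_{n}=\delta_{1/n}$, $A_{n}=\{1/n\}$, $P=\delta_{0}$ satisfies weak convergence yet $P(\limsup_{n}A_{n})=P(\emptyset)=0$.
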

        \begin{proof}[Proof of Proposition~\ref{prop:ioevent}]
                See Appendix~\ref{app:ioevent}.
        \end{proof}
        We are going to prove that for any $s\in A_{h}$, the following equation holds, 
        \begin{align*}
            \sum_{a\in\calA}\xi_{h}(a\,|\,s)\barr_{h}(s,a,\bar{\nu}_{h})-\lambda R\big(\xi_{h}(\cdot\,|\,s)\big)+\sum_{a\in\calA}\int_{\calS}\xi_{h}(a\,|\,s)\barP_{h}(s^{\prime}\,|\,s,a,\bar{\nu}_{h})\barV_{h+1}^{\lambda,\nu}(s^{\prime})\rmd s^{\prime}=\barV_{h}^{\lambda,\nu}(s).
        \end{align*}
        We first show that the optimal value functions $\barV_{h}^{\lambda,\nu^{(n)}}$ converge to $\barV_{h}^{\lambda,\nu}$ continuously.
        \begin{proposition}\label{prop:vfuncconverge}
            Given $\calS$ is compact, if $\nu^{(n)}\rightarrow\nu$ in total variation, we have
            \begin{align*}
                \lim_{n\rightarrow\infty}\sup_{s\in\calS}\big|\barV_{h}^{\lambda,\nu^{(n)}}(s)-\barV_{h}^{\lambda,\nu}(s)\big|=0 \text{ for all }h\in[H].
            \end{align*}
        \end{proposition}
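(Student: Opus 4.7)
The plan is to proceed by backward induction on $h$, from $h=H$ down to $h=1$, exploiting the fact that $\barV_h^{\lambda,\nu}$ is the unique fixed point of $T_h^\nu$ (Proposition~\ref{prop:optcontent}, i.e., Proposition~\ref{prop:optcondition}) and that $T_h^\nu$ is non-expansive in its function argument (Proposition~\ref{prop:mfgcontract}).

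For the \emph{base case} $h=H$, note that
\[
\barV_H^{\lambda,\nu^{(n)}}(s) - \barV_H^{\lambda,\nu}(s)
  = \sup_{p\in\Delta(\calA)}\bigl\{\langle p, \barr_H(s,\cdot,\bar\nu_H^{(n)})\rangle - \lambda R(p)\bigr\}
  - \sup_{p\in\Delta(\calA)}\bigl\{\langle p, \barr_H(s,\cdot,\bar\nu_H)\rangle - \lambda R(p)\bigr\},
\]
whose absolute value is bounded by $\max_{a\in\calA}|\barr_H(s,a,\bar\nu_H^{(n)})-\barr_H(s,a,\bar\nu_H)|$. Since total variation convergence implies weak convergence and $\barr_H$ is continuous, compactness of $\calS$ (together with finiteness of $\calA$) yields uniform convergence in $s$ by a subsequence argument: if uniform convergence failed, there would exist $\varepsilon>0$ and $(s_n,a_n)\to(s^*,a^*)$ with $|\barr_H(s_n,a_n,\bar\nu_H^{(n)})-\barr_H(s_n,a_n,\bar\nu_H)|>\varepsilon$, contradicting joint continuity of $\barr_H$.

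For the \emph{inductive step}, assume $\|\barV_{h+1}^{\lambda,\nu^{(n)}}-\barV_{h+1}^{\lambda,\nu}\|_\infty\to 0$. I would split the error via the triangle inequality:
\[
\bigl\|\barV_h^{\lambda,\nu^{(n)}}-\barV_h^{\lambda,\nu}\bigr\|_\infty
 \le \bigl\|T_h^{\nu^{(n)}}\barV_{h+1}^{\lambda,\nu^{(n)}}-T_h^{\nu^{(n)}}\barV_{h+1}^{\lambda,\nu}\bigr\|_\infty
   + \bigl\|T_h^{\nu^{(n)}}\barV_{h+1}^{\lambda,\nu}-T_h^\nu\barV_{h+1}^{\lambda,\nu}\bigr\|_\infty.
\]
The first term is controlled by non-expansiveness of $T_h^{\nu^{(n)}}$ (Proposition~\ref{prop:mfgcontract}) and the inductive hypothesis. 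The second term is the crux: for fixed $u=\barV_{h+1}^{\lambda,\nu}\in C_{h+1}(\calS)$ I need to show $T_h^{\nu^{(n)}}u\to T_h^\nu u$ uniformly on $\calS$.

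The main obstacle is precisely this continuity of $T_h^{\cdot}u$ in the parameter $\nu$ in the supremum norm. I would handle it by introducing
\[
\phi_\mu(s,a) \;:=\; \barr_h(s,a,\mu) + \int_\calS \barP_h(\mathrm{d}s'\,|\,s,a,\mu)\,u(s'),
\]
and showing that $(s,a,\mu)\mapsto \phi_\mu(s,a)$ is jointly continuous on $\calS\times\calA\times\Delta(\calS)$, where $\Delta(\calS)$ carries the weak topology. Continuity in $s,a$ and the reward part follow directly from Assumption~\ref{assump:mfg}; for the transition term, weak continuity of $\barP_h$ together with $u\in C_b(\calS)$ gives continuity of the integral in $(s,a,\mu)$ by the portmanteau-type argument used already for inequality~\eqref{ieq:2}. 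Since $\calS\times\calA$ is compact and $\bar\nu_h^{(n)}\to\bar\nu_h$ weakly, a standard subsequence-plus-compactness argument upgrades pointwise convergence $\phi_{\bar\nu_h^{(n)}}\to\phi_{\bar\nu_h}$ to uniform convergence on $\calS\times\calA$. Finally, using $|\sup_p\langle p,\phi_n\rangle-\sup_p\langle p,\phi\rangle|\le\|\phi_n-\phi\|_\infty$, the soft-max over $p\in\Delta(\calA)$ with entropy regularization preserves this uniform convergence, giving $\|T_h^{\nu^{(n)}}u-T_h^\nu u\|_\infty\to 0$ and completing the induction.
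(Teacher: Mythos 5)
Your proof is correct, but it is organized quite differently from the paper's. You exploit the finite-horizon structure directly: since $\barV_{H}^{\lambda,\nu}$ involves only the reward term and $\barV_{h}^{\lambda,\nu}=T_{h}^{\nu}\barV_{h+1}^{\lambda,\nu}$ by Proposition~\ref{prop:optcondition}, you run a backward induction on $h$, splitting each step into a non-expansiveness term (Proposition~\ref{prop:mfgcontract} applied to the inductive hypothesis) and a parameter-continuity term $\|T_{h}^{\nu^{(n)}}u-T_{h}^{\nu}u\|_{\infty}$ for the fixed continuous function $u=\barV_{h+1}^{\lambda,\nu}$. The paper instead approximates both fixed points by the $k$-fold iterates $u_{k}^{(n)}=(T^{\nu^{(n)}})^{k}0$ and $u_{k}=(T^{\nu})^{k}0$, uses the contraction of $T^{\nu}$ in the weighted metric $\rho$ to bound $\|\barV^{\lambda,(n)}-u_{k}^{(n)}\|$ by $\sigma^{k}L_{0}$ \emph{uniformly in $n$}, proves $\sup_{s}|u_{k,h}^{(n)}(s)-u_{k,h}(s)|\to 0$ by induction on $k$, and closes with a three-term triangle inequality, sending $k\to\infty$ after $n\to\infty$. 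The crux is the same in both arguments --- one must show that $(s,a,\mu)\mapsto \barr_{h}(s,a,\mu)+\int\barP_{h}(\rmd s'|s,a,\mu)v(s')$ converges continuously (equivalently, uniformly on the compact $\calS\times\calA$) as $\bar\nu_{h}^{(n)}\to\bar\nu_{h}$, which both you and the paper obtain from Assumption~\ref{assump:mfg} and the compactness/subsequence argument (the paper cites Langen's theorem for the transition term where you give the portmanteau-style argument explicitly). Your route avoids the double limit in $k$ and $n$ and the auxiliary metric $\rho$, and is arguably the more elementary argument for a finite-horizon model; the paper's contraction-based route has the advantage of not relying on a terminal stage to anchor the recursion, so it would carry over unchanged to an infinite-horizon discounted formulation. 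One cosmetic point: make sure you cite Proposition~\ref{prop:optcondition} for the identity $\barV_{h}^{\lambda,\nu}=T_{h}^{\nu}\barV_{h+1}^{\lambda,\nu}$ (your sketch garbles the label), and note that $\bar\nu_{h}^{(n)}\to\bar\nu_{h}$ in total variation (hence weakly) because marginalization does not increase total variation.
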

        \begin{proof}[Proof of Proposition~\ref{prop:vfuncconverge}]
                See Appendix~\ref{app:vfuncconverge}.
        \end{proof}
        From the definition of $A_{h}$, for any $s\in A_{h}$, there is a sequence $\{n_{k}\}_{k=1}^{\infty}$ such that $s\in A_{h}^{(n_{k})}$ for all $k\geq 1$. Since $\xi_{h}^{(n_{k})}\rightarrow\xi_{h}$ in total variation, $\xi_{h}^{(n_{k})}(a\,|\,s)\rightarrow\xi_{h}(a\,|\,s)$ for all $a\in\calA$ and $s\in\supp(\bar{\xi}_{h})$. Thus, we have
        \begin{align}
            &\lim_{k\rightarrow\infty}\sum_{a\in\calA}\xi_{h}^{(n)}(a\,|\,s)\barr_{h}(s,a,\bar{\nu}_{h}^{(n)})-\lambda R\big(\xi_{h}^{(n)}(\cdot\,|\,s)\big)+\sum_{a\in\calA}\int_{\calS}\xi_{h}^{(n)}(a\,|\,s)\barP_{h}(s^{\prime}\,|\,s,a,\bar{\nu}_{h}^{(n)})\barV_{h+1}^{\lambda,\nu^{(n)}}(s^{\prime})\rmd s^{\prime}\nonumber\\
            &=\sum_{a\in\calA}\xi_{h}(a\,|\,s)\barr_{h}(s,a,\bar{\nu}_{h})-\lambda R\big(\xi_{h}(\cdot\,|\,s)\big)+\sum_{a\in\calA}\int_{\calS}\xi_{h}(a\,|\,s)\barP_{h}(s^{\prime}\,|\,s,a,\bar{\nu}_{h})\barV_{h+1}^{\lambda,\nu}(s^{\prime})\rmd s^{\prime},\label{eq:15}
        \end{align}
        which results from \cite{langen1981convergence}, Assumption~\ref{assump:mfg} (3) and (4), Proposition~\ref{prop:vfuncconverge} and $\xi_{h}^{(n_{k})}(a\,|\,s)\rightarrow\xi_{h}(a\,|\,s)$. Combining Eqn.~\eqref{eq:15} and that $\barV_{h+1}^{\lambda,\nu^{(n_{k})}}(s)\rightarrow\barV_{h+1}^{\lambda,\nu}(s)$, we prove the Eqn.~\eqref{eq:14}. Thus, we conclude the proof of Proposition~\ref{prop:closedgraph}.
    \end{proof}
\subsection{Proof of Proposition~\ref{prop:ioevent}}\label{app:ioevent}
\begin{proof}[Proof of Proposition~\ref{prop:ioevent}]
        Define the event $G=\cap_{k=1}^{\infty}\cup_{n=k}^{\infty}A_{n}$ and $B_{k}=\cap_{n=k}^{\infty}A_{n}^{\complement}$. Then we have that $G^{\complement}=\cup_{k=1}^{\infty}B_{k}$. Note that $B_{k}\subseteq B_{k+1}$ and the monotone convergence theorem, we have that
        \begin{align}
            P_{n}\big(G^{\complement}\cap A_{n}\big)=\liminf_{k\rightarrow\infty}P_{n}\big(B_{k}\cap A_{n}\big).\label{eq:13}
        \end{align}
        We then have that
        \begin{align}
            1&=\limsup_{n\rightarrow\infty}\liminf_{k\rightarrow\infty} \Big[P_{n}\big(G\cap A_{n}\big)+P_{n}\big(B_{k}\cap A_{n}\big)\Big]\nonumber\\
            &\leq \liminf_{k\rightarrow\infty}\limsup_{n\rightarrow\infty}\Big[P_{n}\big(G\cap A_{n}\big)+P_{n}\big(B_{k}\cap A_{n}\big)\Big],\label{ieq:7}
        \end{align}
        where the equation results from Eqn.~\eqref{eq:13}. For the second term in the right-hand side of inequality~\eqref{ieq:7}, we fix any $k>0$, then for $n>k$, we have that $P_{n}(B_{k}\cap A_{n})=0$ from the definition of $B_{k}$. Thus, we have that
        \begin{align*}
            \limsup_{n\rightarrow\infty}P_{n}\big(B_{k}\cap A_{n}\big)=0,
            \liminf_{k\rightarrow\infty}\limsup_{n\rightarrow\infty}P_{n}\big(G\cap A_{n}\big)=1,
        \end{align*}
        where the second equation results from the first equation and inequality~\eqref{ieq:7}. Since $P_{n}$ converges to $P$ in total variation distance, the weak convergence of $P_{n}$ to $P$ is guaranteed. Portmanteau Theorem shows that
        \begin{align*}
            P(G)\geq \limsup_{n\rightarrow\infty}P_{n}(G)\geq \liminf_{k\rightarrow\infty}\limsup_{n\rightarrow\infty}P_{n}\big(G\cap A_{n}\big)=1.
        \end{align*}
        This concludes the proof of Proposition~\ref{prop:ioevent}.
    \end{proof}
\subsection{Proof of Proposition~\ref{prop:vfuncconverge}}\label{app:vfuncconverge}
\begin{proof}[Proof of Proposition~\ref{prop:vfuncconverge}]
        For ease of notation, we define $T^{(n)}=T^{\nu^{(n)}}$, $T=T^{\nu}$, $\barV^{\lambda,(n)}=\barV^{\lambda,\nu^{(n)}}$, $\barV^{\lambda}=\barV^{\lambda,\nu}$, $u_{0}^{(n)}=u_{0}=0$ and we let
        \begin{align*}
            u_{k+1}^{(n)}=T^{(n)}u_{k}^{(n)}\quad u_{k+1}=Tu_{k}\text{ for }k\geq 1.
        \end{align*}
        From the contraction property in inequality~\eqref{ieq:5}, we have
        \begin{align*}
            \rho(u_{k}^{(n)},\barV^{\lambda,(n)})\leq \sigma^{k}L_{0}\quad \rho(u_{k},\barV^{\lambda)})\leq \sigma^{k}L_{0}.
        \end{align*}
        We then prove that $\lim_{n\rightarrow\infty}\sup_{s\in\calS}|u_{k,h}^{(n)}(s)-u_{k,h}(s)|=0$ for all $h\in[H]$ and $k\geq 0$. We prove this by induction. When $k=0$, $u_{0,h}^{(n)}(s)=u_{0,h}^{(n)}(s)=0$ from definition. Suppose that the claim holds for $k$ and all $h\in[H]$. Consider $k+1$ and any $h\in[H]$, we have that
        \begin{align*}
            &\sup_{s\in\calS}|u_{k+1,h}^{(n)}(s)-u_{k+1,h}(s)|\\
            &\quad= \sup_{s\in\calS}\big|\sup_{p\in\Delta(\calA)} \sum_{a\in\calA}p(a)\barr_{h}(s,a,\bar{\nu}_{h}^{(n)})-\lambda R(p)+\sum_{a\in\calA}\int_{\calS}p(a)\barP_{h}(s^{\prime}\,|\,s,a,\bar{\nu}_{h}^{(n)})u_{k,h+1}^{(n)}(s^{\prime})\rmd s^{\prime}\\
            &\quad\qquad-\sup_{q\in\Delta(\calA)} \sum_{a\in\calA}q(a)\barr_{h}(s,a,\bar{\nu}_{h})-\lambda R(q)+\sum_{a\in\calA}\int_{\calS}q(a)\barP_{h}(s^{\prime}\,|\,s,a,\bar{\nu}_{h})u_{k,h+1}(s^{\prime})\rmd s^{\prime}\big|\\
            &\quad\leq \sup_{s\in\calS,p\in\Delta(\calA)}\bigg|\sum_{a\in\calA}\int_{\calS}p(a)\barP_{h}(s^{\prime}\,|\,s,a,\bar{\nu}_{h}^{(n)})u_{k,h+1}^{(n)}(s^{\prime})\rmd s^{\prime}-\sum_{a\in\calA}\int_{\calS}q(a)\barP_{h}(s^{\prime}\,|\,s,a,\bar{\nu}_{h})u_{k,h+1}(s^{\prime})\rmd s^{\prime}\bigg|\\
            &\quad\qquad+\sup_{s\in\calS,p\in\Delta(\calA)}\Big|\sum_{a\in\calA}p(a)\big(\barr_{h}(s,a,\bar{\nu}_{h}^{(n)})-\barr_{h}(s,a,\bar{\nu}_{h})\big)\Big|.
        \end{align*}
        From Assumption~\ref{assump:mfg} (2), $\barr_{h}(\cdot,\cdot,\bar{\nu}_{h}^{(n)})$ converges to $\barr_{h}(\cdot,\cdot,\bar{\nu}_{h})$ continuously. Also, since continuous function $u_{k,h+1}^{(n)}$ uniformly converges to $u_{k,h+1}$ on compact sets $\calS$ as $n\rightarrow\infty$, we have that $u_{k,h+1}^{(n)}$ converges to $u_{k,h+1}$ continuously. By \cite[Theorem 3.5]{langen1981convergence} and Assumption~\ref{assump:mfg} (4), we have that $\int_{\calS}\barP_{h}(s^{\prime}\,|\,\cdot,\cdot,\bar{\nu}_{h}^{(n)})u_{k,h+1}^{(n)}(s^{\prime})\rmd s^{\prime}$ continuously converges to $\int_{\calS}\barP_{h}(s^{\prime}\,|\,\cdot,\cdot,\bar{\nu}_{h})u_{k,h+1}(s^{\prime})\rmd s^{\prime}$. Since the continuous convergence is equivalent to uniform convergence on compact sets, we have that $\lim_{n\rightarrow\infty}\sup_{s\in\calS}|u_{k+1,h}^{(n)}(s)-u_{k+1,h}^{(n)}(s)|=0$. 
        
        Thus, we have that
        \begin{align}
            &\sup_{s\in\calS}\big|\barV_{h}^{\lambda,\nu^{(n)}}(s)-\barV_{h}^{\lambda,\nu}(s)\big|\nonumber\\
            &\quad\leq \|\barV_{h}^{\lambda,(n)}-u_{k,h}^{(n)}\|_{\infty}+\|u_{k,h}^{(n)}-u_{k,h}\|_{\infty}+\|\barV_{h}^{\lambda}-u_{k,h}\|_{\infty}\nonumber\\
            &\quad\leq \sigma^{h}\rho(\barV^{\lambda,(n)},u_{k}^{(n)})+\|u_{k,h}^{(n)}-u_{k,h}\|_{\infty}+\sigma^{h}\rho(\barV^{\lambda},u_{k})\nonumber\\
            &\quad\leq 2\sigma^{h+k}L_{0}+\|u_{k,h}^{(n)}-u_{k,h}\|_{\infty}.\label{ieq:6}
        \end{align}
        The right-hand side of inequality~\eqref{ieq:6} can be made arbitrarily small by first choosing a large enough $k$ and then let $n\rightarrow\infty$. Thus, we conclude the proof of Proposition~\ref{prop:vfuncconverge}.
    \end{proof}
\subsection{Proof of Proposition~\ref{prop:gene}}\label{app:gene}
\begin{proof}[Proof of Proposition~\ref{prop:gene}]
        The proof of Proposition~\ref{prop:gene} generally follows the proof of Theorem 11.4 in \citet{gyorfi2002distribution}. Here we only specify the different parts. In the following, we bound the $\ell_{1}$ covering number of function class $\{e_{g,f}\,|\,f\in\calF_{h},g\in\calF_{h+1}\}$ on samples $\{(s_{\tau,h}^{i},a_{\tau,h}^{i},s_{\tau,h+1}^{i})\}_{\tau=1}^{K}$. Assume that we have the $\delta$-covers $\calC_{h}^{\delta}$ and $\calC_{h+1}^{\delta}$ of $\calF_{h}$ and $\calF_{h+1}$ with respect to the $\ell_{\infty}$, i.e., for any $f\in\calF_{h}$, there exists $f_{c}\in\calC_{h}^{\delta}$ such that $\|f-f_{c}\|_{\infty}=\sup_{s\in\calS,a\in\calA}|f(s,a)-f_{c}(s,a)|\leq \delta$. Then for any $e_{g,f}$, we can find $f_{c}\in\calC_{h}^{\delta}$ and $g_{c}\in\calC_{h+1}^{\delta}$ such that $\|f-f_{c}\|_{\infty},\|g-g_{c}\|_{\infty}\leq\delta$. The $\ell_{1}$ distance between $e_{g,f}$ and $e_{g_{c},f_{c}}$ on samples $\{(s_{\tau,h}^{i},a_{\tau,h}^{i},s_{\tau,h+1}^{i})\}_{\tau=1}^{K}$ can be bounded as
        \begin{align}
            &\frac{1}{K}\sum_{\tau=1}^{K}\big|e_{g,f}(s_{\tau,h}^{i},a_{\tau,h}^{i},s_{\tau,h+1}^{i})-e_{g_{c},f_{c}}(s_{\tau,h}^{i},a_{\tau,h}^{i},s_{\tau,h+1}^{i})\big|\nonumber\\
            &\quad\leq \frac{1}{K}\sum_{\tau=1}^{K}\bigg\{\Big|\big(r_{h}(s_{\tau,h}^{i},a_{\tau,h}^{i},z_{h}^{i/N})+V_{g}^{i}(s_{\tau,h+1}^{i})-f(s_{\tau,h}^{i},a_{\tau,h}^{i})\big)^{2}\nonumber\\
            &\quad\qquad -\big(r_{h}(s_{\tau,h}^{i},a_{\tau,h}^{i},z_{h}^{i/N})+V_{g_{c}}(s_{\tau,h+1}^{i})-f_{c}(s_{\tau,h}^{i},a_{\tau,h}^{i})\big)^{2}\Big|\nonumber\\
            &\quad\qquad +\Big|\big(V_{g}^{i}(s_{\tau,h+1}^{i})-(\calP_{h}V_{g}^{i})(s_{\tau,h}^{i},a_{\tau,h}^{i})\big)^{2}-\big(V_{g_{c}}(s_{\tau,h+1}^{i})-(\calP_{h}V_{g_{c}})(s_{\tau,h}^{i},a_{\tau,h}^{i})\big)^{2}\Big|\bigg\}.\label{ieq:10}
        \end{align}
        For the first term in the right-hand side of this inequality can be bounded as
        \begin{align*}
            &\frac{1}{K}\sum_{\tau=1}^{K}\Big|\big(r_{h}(s_{\tau,h}^{i},a_{\tau,h}^{i},z_{h}^{i/N})+V_{g}^{i}(s_{\tau,h+1}^{i})-f(s_{\tau,h}^{i},a_{\tau,h}^{i})\big)^{2}\\
            &\quad\qquad -\big(r_{h}(s_{\tau,h}^{i},a_{\tau,h}^{i},z_{h}^{i/N})+V_{g_{c}}(s_{\tau,h+1}^{i})-f_{c}(s_{\tau,h}^{i},a_{\tau,h}^{i})\big)^{2}\Big|\\
            &\quad\leq \frac{2B_{H}}{K}\sum_{\tau=1}^{K}\big|V_{g}^{i}(s_{\tau,h+1}^{i})-V_{g_{c}}(s_{\tau,h+1}^{i})\big|+\big|f(s_{\tau,h}^{i},a_{\tau,h}^{i})-f_{c}(s_{\tau,h}^{i},a_{\tau,h}^{i})\big|\\
            &\quad\leq 4B_{H}\delta,
        \end{align*}
        where the last inequality results from the definition of $f_{c}$ and $g_{c}$. The second term in the right-hand side of inequality~\ref{ieq:10} can be similarly bounded, then we have that
        \begin{align*}
            \frac{1}{K}\sum_{\tau=1}^{K}\big|e_{g,f}(s_{\tau,h}^{i},a_{\tau,h}^{i},s_{\tau,h+1}^{i})-e_{g_{c},f_{c}}(s_{\tau,h}^{i},a_{\tau,h}^{i},s_{\tau,h+1}^{i})\big|\leq 8B_{H}\delta.
        \end{align*}
        The covering number can be correspondingly bounded as
        \begin{align*}
            \calN_{1}\big(\delta,\{e_{g,f}\},\{(s_{\tau,h}^{i},a_{\tau,h}^{i},s_{\tau,h+1}^{i})\}_{\tau=1}^{K}\big)\leq \calN_{\infty}(\delta/8B_{N},\calF_{h})\cdot\calN_{\infty}(\delta/8B_{N},\calF_{h+1}).
        \end{align*}
        Combined with the proof of Theorem 11.4 in \citet{gyorfi2002distribution}, this concludes the proof of Proposition~\ref{prop:gene}.
    \end{proof}

\begin{lemma}\label{lem:policyave}
    For any two distributions $p^{*},p\in\Delta(\calA)$ and $\hatp=(1-\beta)p+\beta\unif(\calA)$ with $\beta\in(0,1)$. Then
    \begin{align*}
        \kl(p^{*}\|\hatp)&\leq\log\frac{|\calA|}{\beta}\\
        \kl(p^{*}\|\hatp)-\kl(p^{*}\|p)&\leq \beta/(1-\beta).
    \end{align*}
\end{lemma}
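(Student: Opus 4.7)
The plan is to derive both inequalities through elementary pointwise bounds on the mixture $\hat{p}(a) = (1-\beta)p(a) + \beta/|\calA|$, followed by taking expectations against $p^{*}$.

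For the first inequality, the key observation is that the uniform admixture lower-bounds $\hat{p}$ coordinate-wise: $\hat{p}(a) \geq \beta/|\calA|$ for every $a \in \calA$. Hence $\log(1/\hat{p}(a)) \leq \log(|\calA|/\beta)$. Since $p^{*}(a) \in [0,1]$, we have $\log p^{*}(a) \leq 0$, so
\begin{align*}
\kl(p^{*}\|\hat{p}) = \sum_{a\in\calA} p^{*}(a)\log\frac{p^{*}(a)}{\hat{p}(a)} \leq \sum_{a\in\calA} p^{*}(a)\log\frac{1}{\hat{p}(a)} \leq \log\frac{|\calA|}{\beta}.
\end{align*}

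For the second inequality, I would rewrite the difference as a single expectation:
\begin{align*}
\kl(p^{*}\|\hat{p}) - \kl(p^{*}\|p) = \sum_{a\in\calA} p^{*}(a) \log\frac{p(a)}{\hat{p}(a)}.
\end{align*}
Now use the complementary pointwise bound $\hat{p}(a) \geq (1-\beta)p(a)$, which gives $p(a)/\hat{p}(a) \leq 1/(1-\beta)$ whenever $p(a) > 0$ (and the corresponding term vanishes otherwise under the standard convention $0\log 0 = 0$). Therefore each summand is at most $-\log(1-\beta)$, and taking the $p^{*}$-expectation yields
\begin{align*}
\kl(p^{*}\|\hat{p}) - \kl(p^{*}\|p) \leq -\log(1-\beta).
\end{align*}
Finally, the elementary inequality $-\log(1-\beta) \leq \beta/(1-\beta)$ for $\beta \in [0,1)$ (verified by comparing derivatives at $\beta = 0$, since $1/(1-\beta) \leq 1/(1-\beta)^{2}$) delivers the stated bound.

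Neither step presents a real obstacle: both reduce to invoking a one-sided pointwise bound on $\hat{p}$ (from below by the uniform floor in the first case, and from below by $(1-\beta)p$ in the second) and then integrating. The only minor subtlety is a careful handling of zero entries of $p$ in the second inequality, which is dispensed with by the standard KL-divergence convention.
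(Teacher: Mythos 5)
Your proof is correct and follows essentially the same route as the paper: both inequalities are obtained from the pointwise lower bounds $\hat{p}(a)\geq \beta/|\calA|$ and $\hat{p}(a)\geq(1-\beta)p(a)$, and your final step $-\log(1-\beta)\leq\beta/(1-\beta)$ is just the paper's use of $\log x\leq x-1$ at $x=1/(1-\beta)$ phrased via a derivative comparison. No gaps.
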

\begin{proof}[Proof of Lemma~\ref{lem:policyave}]
    \begin{align*}
        \kl(p^{*}\|\hatp)\leq \Big\langle p^{*},\log \frac{p^{*}}{(1-\beta)p+\beta/|\calA|}\Big\rangle\leq \Big\langle p^{*},\log \frac{1}{\beta/|\calA|}\Big\rangle=\log\frac{|\calA|}{\beta}.
    \end{align*}
    Thus, we prove the first inequality. For the second inequality, we have
    \begin{align*}
        \kl(p^{*}\|\hatp)-\kl(p^{*}\|p)=\Big\langle p^{*},\log \frac{p}{(1-\beta)p+\beta/|\calA|}\Big\rangle\leq \Big\langle p^{*},\log \frac{p}{(1-\beta)p}\Big\rangle \leq \Big\langle p^{*},\frac{\beta}{1-\beta}\Big\rangle=\frac{\beta}{1-\beta},
    \end{align*}
    where the second inequality results from $\log(x)\leq x-1$ for $x>0$. Thus, we conclude the proof of Lemma~\ref{lem:policyave}.
    
\end{proof}

\begin{lemma}[Performance Difference Lemma]\label{lem:pdl}
    Given a policy $\pi^{\calI}$ and the corresponding mean-field flow $\mu^{\calI}$, for any player $\alpha\in\calI$ and any policy $\tilde{\pi}^{\alpha}$, we have
    \begin{align*}
        &V_{1}^{\lambda,\alpha}(s,\tilde{\pi}^{\alpha},\mu^{\calI})-V_{1}^{\lambda,\alpha}(s,\pi^{\alpha},\mu^{\calI})+\lambda \bbE_{\tilde{\pi}^{\alpha},\mu^{\calI}}\bigg[\sum_{h=1}^{H}\kl\big(\tilde{\pi}_{h}^{\alpha}(\cdot\,|\,s_{h}^{\alpha})\|\pi_{h}^{\alpha}(\cdot\,|\,s_{h}^{\alpha})\big)\,|\,s_{1}^{\alpha}=s\bigg]\\
        &\quad= \bbE_{\tilde{\pi}^{\alpha},\mu^{\calI}}\bigg[\sum_{h=1}^{H}\big\langle Q_{h}^{\lambda,\alpha}(s_{h}^{\alpha},\cdot,\pi^{\alpha},\mu^{\calI})-\lambda\log\pi_{h}^{\alpha}(\cdot\,|\,s_{h}^{\alpha}),\tilde{\pi}_{h}^{\alpha}(\cdot\,|\,s_{h}^{\alpha})-\pi_{h}^{\alpha}(\cdot\,|\,s_{h}^{\alpha})\big\rangle\,|\,s_{1}^{\alpha}=s\bigg],
    \end{align*}
    where the expectation $\bbE_{\tilde{\pi}^{\alpha},\mu^{\calI}}$ is taken with respect to the randomness in implementing policy $\tilde{\pi}^{\alpha}$ for player $\alpha$ under the \ac{mdp} induced by $\mu^{\calI}$.
\end{lemma}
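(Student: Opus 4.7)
The plan is to establish a one-step Bellman-type identity for the difference $V_h^{\lambda,\alpha}(s,\tilde{\pi}^{\alpha},\mu^{\calI})-V_h^{\lambda,\alpha}(s,\pi^{\alpha},\mu^{\calI})$ and then unroll it over $h=1,\dots,H$. Writing both value functions via their one-step decompositions (using Eqns.~\eqref{eq:vf} and \eqref{eq:avf}, so that $V_h^{\lambda,\alpha}(s,\pi^{\alpha},\mu^{\calI})=\langle Q_h^{\lambda,\alpha}(s,\cdot,\pi^{\alpha},\mu^{\calI})-\lambda\log\pi_h^{\alpha}(\cdot\,|\,s),\pi_h^{\alpha}(\cdot\,|\,s)\rangle$, and similarly for $\tilde{\pi}^{\alpha}$), I will add and subtract the ``mixed'' quantity $\langle Q_h^{\lambda,\alpha}(s,\cdot,\pi^{\alpha},\mu^{\calI})-\lambda\log\pi_h^{\alpha}(\cdot\,|\,s),\tilde{\pi}_h^{\alpha}(\cdot\,|\,s)\rangle$. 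This produces two groups of terms: one is exactly the inner product $\langle Q_h^{\lambda,\alpha}(s,\cdot,\pi^{\alpha},\mu^{\calI})-\lambda\log\pi_h^{\alpha}(\cdot\,|\,s),\tilde{\pi}_h^{\alpha}(\cdot\,|\,s)-\pi_h^{\alpha}(\cdot\,|\,s)\rangle$ that appears on the right-hand side of the lemma.

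The remaining group is $\langle Q_h^{\lambda,\alpha}(s,\cdot,\tilde{\pi}^{\alpha},\mu^{\calI})-Q_h^{\lambda,\alpha}(s,\cdot,\pi^{\alpha},\mu^{\calI}),\tilde{\pi}_h^{\alpha}(\cdot\,|\,s)\rangle - \lambda\langle\log\tilde{\pi}_h^{\alpha}(\cdot\,|\,s)-\log\pi_h^{\alpha}(\cdot\,|\,s),\tilde{\pi}_h^{\alpha}(\cdot\,|\,s)\rangle$. The second piece is by definition $-\lambda\kl(\tilde{\pi}_h^{\alpha}(\cdot\,|\,s)\|\pi_h^{\alpha}(\cdot\,|\,s))$. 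For the first piece, note that since the transition kernel $P_h(\cdot\,|\,s,a,z_h^{\alpha})$ used by $Q_h^{\lambda,\alpha}(s,a,\cdot,\mu^{\calI})$ depends only on the common flow $\mu^{\calI}$ (not on the player's policy), Eqn.~\eqref{eq:avf} yields
\begin{align*}
Q_h^{\lambda,\alpha}(s,a,\tilde{\pi}^{\alpha},\mu^{\calI})-Q_h^{\lambda,\alpha}(s,a,\pi^{\alpha},\mu^{\calI})=\bbE_{s'\sim P_h(\cdot|s,a,z_h^{\alpha})}\bigl[V_{h+1}^{\lambda,\alpha}(s',\tilde{\pi}^{\alpha},\mu^{\calI})-V_{h+1}^{\lambda,\alpha}(s',\pi^{\alpha},\mu^{\calI})\bigr].
\end{align*}
Taking the outer expectation under $a\sim\tilde{\pi}_h^{\alpha}(\cdot\,|\,s)$ turns this into the expected one-step-ahead value gap, which is precisely the recursion needed for unrolling.

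Combining these pieces gives the one-step identity
\begin{align*}
&V_h^{\lambda,\alpha}(s,\tilde{\pi}^{\alpha},\mu^{\calI})-V_h^{\lambda,\alpha}(s,\pi^{\alpha},\mu^{\calI})\\
&\quad=\bbE_{\tilde{\pi}^{\alpha}}\bigl[V_{h+1}^{\lambda,\alpha}(s_{h+1}^{\alpha},\tilde{\pi}^{\alpha},\mu^{\calI})-V_{h+1}^{\lambda,\alpha}(s_{h+1}^{\alpha},\pi^{\alpha},\mu^{\calI})\,\big|\,s_h^{\alpha}=s\bigr]\\
&\quad\qquad-\lambda\kl\bigl(\tilde{\pi}_h^{\alpha}(\cdot\,|\,s)\|\pi_h^{\alpha}(\cdot\,|\,s)\bigr)+\bigl\langle Q_h^{\lambda,\alpha}(s,\cdot,\pi^{\alpha},\mu^{\calI})-\lambda\log\pi_h^{\alpha}(\cdot\,|\,s),\tilde{\pi}_h^{\alpha}(\cdot\,|\,s)-\pi_h^{\alpha}(\cdot\,|\,s)\bigr\rangle.
\end{align*}
Applying this identity iteratively from $h=1$ up to $h=H$ (with the convention $V_{H+1}^{\lambda,\alpha}\equiv 0$) and taking the full expectation under the trajectory induced by $\tilde{\pi}^{\alpha}$ on the \ac{mdp} driven by $\mu^{\calI}$ telescopes the value-gap terms and sums the KL and inner-product terms over $h$. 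Rearranging the KL sum to the left-hand side yields the claimed identity.

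I do not foresee a real obstacle here: the only subtlety is the bookkeeping of which policy the action-value function and the trajectory expectation correspond to, and keeping the sign of the entropic term straight (the $-\lambda\log\tilde{\pi}_h^{\alpha}$ inside $V_h^{\lambda,\alpha}(\cdot,\tilde{\pi}^{\alpha},\cdot)$ and the $-\lambda\log\pi_h^{\alpha}$ that appears alongside $Q_h^{\lambda,\alpha}(\cdot,\cdot,\pi^{\alpha},\cdot)$ combine via add/subtract to form the KL divergence). The argument relies crucially on the transition kernel being independent of the player's policy (given $\mu^{\calI}$), which is built into the \ac{mdp} induced by the distribution flow in Definition~\ref{def:gmfgne} and Eqn.~\eqref{eq:avf}.
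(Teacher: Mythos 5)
Your proof is correct and follows essentially the same route as the paper's: both rest on the identity $V_{h}^{\lambda,\alpha}(s,\pi^{\alpha},\mu^{\calI})=\langle Q_{h}^{\lambda,\alpha}(s,\cdot,\pi^{\alpha},\mu^{\calI})-\lambda\log\pi_{h}^{\alpha}(\cdot\,|\,s),\pi_{h}^{\alpha}(\cdot\,|\,s)\rangle$, an add-and-subtract of the mixed inner product, the observation that the $Q$-difference reduces to the one-step-ahead value gap (reward and transition being policy-independent given $\mu^{\calI}$), and a telescoping over $h$; your recursive unrolling is just the paper's trajectory-level telescoping written stepwise.
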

\begin{proof}[Proof of Lemma~\ref{lem:pdl}]
    From the definition of $V_{1}^{\lambda,\alpha}(s,\tilde{\pi}^{\alpha},\mu^{\calI})$, we have
    \begin{align}
        &V_{1}^{\lambda,\alpha}(s,\tilde{\pi}^{\alpha},\mu^{\calI})\nonumber\\
        &\quad=\bbE_{\tilde{\pi}^{\alpha},\mu^{\calI}}\bigg[\sum_{h=1}^{H}r_{h}(s_{h}^{\alpha},a_{h}^{\alpha},z_{h}^{\alpha})-\lambda\log\tilde{\pi}_{h}^{\alpha}(a_{h}^{\alpha}\,|\,s_{h}^{\alpha})+V_{h}^{\lambda,\alpha}(s_{h}^{\alpha},\pi^{\alpha},\mu^{\calI})-V_{h}^{\lambda,\alpha}(s_{h}^{\alpha},\pi^{\alpha},\mu^{\calI})\,\bigg|\,s_{1}^{\alpha}=s\bigg]\nonumber\\
        &\quad=\bbE_{\tilde{\pi}^{\alpha},\mu^{\calI}}\bigg[\sum_{h=1}^{H}r_{h}(s_{h}^{\alpha},a_{h}^{\alpha},z_{h}^{\alpha})-\lambda\log\tilde{\pi}_{h}^{\alpha}(a_{h}^{\alpha}\,|\,s_{h}^{\alpha})+V_{h+1}^{\lambda,\alpha}(s_{h+1}^{\alpha},\pi^{\alpha},\mu^{\calI})\nonumber\\
        &\quad\qquad-V_{h}^{\lambda,\alpha}(s_{h}^{\alpha},\pi^{\alpha},\mu^{\calI})\,\bigg|\,s_{1}^{\alpha}=s\bigg]+V_{1}^{\lambda,\alpha}(s,\pi^{\alpha},\mu^{\calI}),\label{eq:8}
    \end{align}
    where the second equation results from the rearrangement from the terms. We then focus on a part of the right-hand side of Eqn.~\eqref{eq:8}.
    \begin{align}
        &\bbE_{\tilde{\pi}^{\alpha},\mu^{\calI}}\big[r_{h}(s_{h}^{\alpha},a_{h}^{\alpha},z_{h}^{\alpha})-\lambda\log\tilde{\pi}_{h}^{\alpha}(a_{h}^{\alpha}\,|\,s_{h}^{\alpha})+V_{h+1}^{\lambda,\alpha}(s_{h+1}^{\alpha},\pi^{\alpha},\mu^{\calI})\,|\,s_{1}^{\alpha}=s\big]\nonumber\\
        &\quad=\bbE_{\tilde{\pi}^{\alpha},\mu^{\calI}}\big[r_{h}(s_{h}^{\alpha},a_{h}^{\alpha},z_{h}^{\alpha})+V_{h+1}^{\lambda,\alpha}(s_{h+1}^{\alpha},\pi^{\alpha},\mu^{\calI})\,|\,s_{1}^{\alpha}=s\big]-\lambda\bbE_{\tilde{\pi}^{\alpha},\mu^{\calI}}\Big[ R\big(\tilde{\pi}_{h}^{\alpha}(\cdot\,|\,s_{h}^{\alpha})\big)\,|\,s_{1}^{\alpha}=s\Big]\nonumber\\
        &\quad=\bbE_{\tilde{\pi}^{\alpha},\mu^{\calI}}\Big[\big\langle Q_{h}^{\lambda,\alpha}(s_{h}^{\alpha},\cdot,\pi^{\alpha},\mu^{\calI}),\tilde{\pi}_{h}^{\alpha}(\cdot\,|\,s_{h}^{\alpha})\big\rangle\,|\,s_{1}^{\alpha}=s\Big]-\lambda\bbE_{\tilde{\pi}^{\alpha},\mu^{\calI}}\Big[ R\big(\tilde{\pi}_{h}^{\alpha}(\cdot\,|\,s_{h}^{\alpha})\big)\,|\,s_{1}^{\alpha}=s\Big],\label{eq:9}
    \end{align}
    where $R(\cdot)$ is the negative entropy function, the inner product $\langle\cdot,\cdot\rangle$ is taken with respect to the action space $\calA$, and the second equation results from the definition of $Q_{h}^{\lambda,\alpha}$ and $V_{h+1}^{\lambda,\alpha}$. Substituting Eqn.~\eqref{eq:9} into Eqn.~\eqref{eq:8} and noting the fact that $V_{h}^{\lambda,\alpha}(s_{h}^{\alpha},\pi^{\alpha},\mu^{\calI})=\langle Q_{h}^{\lambda,\alpha}(s_{h}^{\alpha},\cdot,\pi^{\alpha},\mu^{\calI}),\pi_{h}^{\alpha}(\cdot\,|\,s_{h}^{\alpha})\rangle-R(\pi_{h}^{\alpha}(\cdot\,|\,s_{h}^{\alpha}))$, we derive that
    \begin{align*}
        &V_{1}^{\lambda,\alpha}(s,\tilde{\pi}^{\alpha},\mu^{\calI})-V_{1}^{\lambda,\alpha}(s,\pi^{\alpha},\mu^{\calI})\nonumber\\
        &\quad =\bbE_{\tilde{\pi}^{\alpha},\mu^{\calI}}\bigg[\sum_{h=1}^{H}\big\langle Q_{h}^{\lambda,\alpha}(s_{h}^{\alpha},\cdot,\pi^{\alpha},\mu^{\calI}),\tilde{\pi}_{h}^{\alpha}(\cdot\,|\,s_{h}^{\alpha})-\pi_{h}^{\alpha}(\cdot\,|\,s_{h}^{\alpha})\big\rangle\,|\,s_{1}^{\alpha}=s\bigg]\nonumber\\
        &\quad\qquad-\lambda\bbE_{\tilde{\pi}^{\alpha},\mu^{\calI}}\bigg[\sum_{h=1}^{H}R\big(\tilde{\pi}_{h}^{\alpha}(\cdot\,|\,s_{h}^{\alpha})\big)-R\big(\pi_{h}^{\alpha}(\cdot\,|\,s_{h}^{\alpha})\big)\,|\,s_{1}^{\alpha}=s\bigg]\\
        &\quad =\bbE_{\tilde{\pi}^{\alpha},\mu^{\calI}}\bigg[\sum_{h=1}^{H}\big\langle Q_{h}^{\lambda,\alpha}(s_{h}^{\alpha},\cdot,\pi^{\alpha},\mu^{\calI}),\tilde{\pi}_{h}^{\alpha}(\cdot\,|\,s_{h}^{\alpha})-\pi_{h}^{\alpha}(\cdot\,|\,s_{h}^{\alpha})\big\rangle\,|\,s_{1}^{\alpha}=s\bigg]\nonumber\\
        &\quad\qquad -\lambda\bbE_{\tilde{\pi}^{\alpha},\mu^{\calI}}\bigg[\sum_{h=1}^{H}\kl\big(\tilde{\pi}_{h}^{\alpha}(\cdot\,|\,s_{h}^{\alpha})\|\pi_{h}^{\alpha}(\cdot\,|\,s_{h}^{\alpha})\big)+\big\langle \log\pi_{h}^{\alpha}(\cdot\,|\,s_{h}^{\alpha}),\tilde{\pi}_{h}^{\alpha}(\cdot\,|\,s_{h}^{\alpha})-\pi_{h}^{\alpha}(\cdot\,|\,s_{h}^{\alpha}) \big\rangle\,|\,s_{1}^{\alpha}=s\bigg],
    \end{align*}
    where the last equation results from the definition of the negative entropy $R(\cdot)$. This concludes the proof of Lemma~\ref{lem:pdl}.
    
\end{proof}

\begin{lemma}[Lemma 3.3 in~\cite{cai2020provably}]\label{lem:mdupdate}
    For any distribution $p,p^{*}\in\Delta(\calA)$ and any function $g:\calA\rightarrow [0,H]$, it holds for $q\in\Delta(\calA)$ with $q(\cdot)\propto p(\cdot)\exp\big(\alpha g(\cdot)\big)$ that
    \begin{align*}
        \langle g(\cdot),p^{*}(\cdot)-p(\cdot)\rangle \leq \alpha H^{2}/2+\alpha^{-1}\big[\kl(p^{*}\|p)-\kl(p^{*}\|q)\big].
    \end{align*}
\end{lemma}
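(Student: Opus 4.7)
The plan is to establish the bound via the standard ``three-point'' KL identity induced by an exponential (mirror descent) update, followed by a one-sided MGF bound that controls the extra KL term using the boundedness of $g$.

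First I would exploit the explicit form of $q$. Since $q(a)\propto p(a)\exp(\alpha g(a))$, the normalizer $Z=\sum_{a}p(a)\exp(\alpha g(a))$ satisfies $\log q(a)=\log p(a)+\alpha g(a)-\log Z$, so $\alpha g(a)=\log q(a)-\log p(a)+\log Z$. Pairing this with $p^{*}-p$ and using $\langle 1,p^{*}-p\rangle=0$ eliminates the $\log Z$ term, yielding
\begin{align*}
\alpha\langle g,p^{*}-p\rangle=\langle \log q-\log p,p^{*}\rangle-\langle \log q-\log p,p\rangle.
\end{align*}
The first inner product equals $\kl(p^{*}\|p)-\kl(p^{*}\|q)$ after adding and subtracting $\langle \log p^{*},p^{*}\rangle$, and the second equals $-\kl(p\|q)$ by definition. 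Hence I obtain the exact identity
\begin{align*}
\alpha\langle g,p^{*}-p\rangle=\kl(p^{*}\|p)-\kl(p^{*}\|q)+\kl(p\|q).
\end{align*}

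Second, I would bound the ``extra'' term $\kl(p\|q)$. Substituting the form of $q$ gives $\kl(p\|q)=\log Z-\alpha\langle g,p\rangle=\log\bbE_{p}[\exp(\alpha(g-\langle g,p\rangle))]$, which is the log-MGF of the centered random variable $g-\bbE_{p}[g]$. Since $g$ takes values in $[0,H]$, this centered variable is bounded in an interval of length at most $H$, so Hoeffding's lemma delivers $\kl(p\|q)\le \alpha^{2}H^{2}/8\le \alpha^{2}H^{2}/2$. Dividing the identity through by $\alpha>0$ and invoking this bound gives the desired inequality.

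The only even mildly delicate step is the Hoeffding bound on $\kl(p\|q)$; everything else is algebraic manipulation of the closed-form update. If one wished to avoid citing Hoeffding, a self-contained alternative is $\exp(\alpha g)\le 1+\alpha g+\tfrac{1}{2}\alpha^{2}H^{2}\exp(\alpha g)\cdot \mathbf{1}\{\cdot\}$-type expansions, or directly using $e^{x}\le 1+x+x^{2}/2$ valid for $x\le 0$ after recentering; both routes yield a constant no worse than $1/2$, which matches the stated $\alpha H^{2}/2$ with room to spare. No additional hypotheses beyond $g\in[0,H]$ and $\alpha>0$ are needed.
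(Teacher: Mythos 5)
Your proof is correct, and it is worth noting that the paper does not actually prove this lemma at all: it imports it verbatim as Lemma~3.3 of \citet{cai2020provably}. The proof in that reference takes a different route from yours. There, one writes $\alpha\langle g,p^{*}-p\rangle=\alpha\langle g,p^{*}-q\rangle+\alpha\langle g,q-p\rangle$, identifies the first piece (via $\alpha g=\log q-\log p+\log Z$) as $\kl(p^{*}\|p)-\kl(p^{*}\|q)-\kl(q\|p)$, bounds the second by H\"older as $\alpha H\|q-p\|_{1}$, and then uses Pinsker's inequality $\kl(q\|p)\geq\|q-p\|_{1}^{2}/2$ so that maximizing $\alpha H x-x^{2}/2$ over $x$ yields the constant $\alpha^{2}H^{2}/2$. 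Your argument instead keeps everything exact one step longer: the three-point identity $\alpha\langle g,p^{*}-p\rangle=\kl(p^{*}\|p)-\kl(p^{*}\|q)+\kl(p\|q)$ is an equality, and the only inequality is the recognition of $\kl(p\|q)=\log\bbE_{p}\big[\exp\big(\alpha(g-\bbE_{p}[g])\big)\big]$ as a centered log-moment-generating function, which Hoeffding's lemma bounds by $\alpha^{2}H^{2}/8$. Your route is both shorter and quantitatively sharper (constant $1/8$ versus $1/2$, so the stated bound holds with room to spare, as you observe); the Pinsker route's only advantage is that it avoids invoking Hoeffding's lemma, though your suggested self-contained substitute via $e^{x}\leq 1+x+x^{2}/2$ for $x\leq 0$ after recentering $g$ at $H$ closes even that gap. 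The one implicit hypothesis in both arguments is $\alpha>0$, which the lemma's $\alpha^{-1}$ already presupposes, and which you flag correctly at the division step.
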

\iffalse
\begin{lemma}\label{lem:rlip}
    For a finite alphabet $\calX$, define $R$ as the negative entropy function. For two distributions $p,q$ supported on $\calX$, we have that
    \begin{align*}
        |R(p)-R(q)|\leq \max\Big\{\big\|\log(p)\big\|_{\infty},\big\|\log(q)\big\|_{\infty}\Big\}\|p-q\|_{1}.
    \end{align*}
\end{lemma}
\begin{proof}[Proof of Lemma~\ref{lem:rlip}]
    Then we have that
    \begin{align*}
        |R(p)-R(q)|\leq\int_{0}^{1}\Big|\Big\langle\nabla R\big(q+t(p-q)\big),p-q\Big\rangle\Big|\rmd t\leq \|p-q\|_{1}\int_{0}^{1}\Big\|\log\big(q+t(p-q)\big)\Big\|_{\infty}\rmd t,
    \end{align*}
    where the first inequality results from the definition of integral and the triangle inequality, and the second inequality results from H\"{o}lder's inequality. The desired result follows from the fact that for $t\in[0,1]$
    \begin{align*}
        \Big\|\log\big(q+t(p-q)\big)\Big\|_{\infty}\leq \max\Big\{\big\|\log(p)\big\|_{\infty},\big\|\log(q)\big\|_{\infty}\Big\}.
    \end{align*}
    Thus, we conclude the proof of Lemma~\ref{lem:rlip}.
\end{proof}

\begin{lemma}[Lemma 3 in~\cite{xie2021learning}]\label{lem:kllip}
    Let $p,q,u\in\Delta(\calX)$ be distributions supported on a finite set $\calX$. If $p(x)\geq \alpha_{1}$, $q(x)\geq\alpha_{1}$, and $u(x)\geq \alpha_{2}$ for all $x\in\calX$. Then 
    \begin{align*}
        \kl(p\|u)-\kl(q\|u)\leq \bigg(1+\log\frac{1}{\min\{\alpha_{1},\alpha_{2}\}}\bigg)\|p-q\|_{1}
    \end{align*}
\end{lemma}
\fi

\end{document}